\documentclass[fleqn,reqno]{amsart}
\textwidth=30pc \textheight=225mm \hoffset=-5mm \voffset=-15mm

\usepackage{amssymb}
\usepackage{amsmath}
\usepackage{amsthm}

\usepackage{hyperref}
\hypersetup{colorlinks=true}

\newtheorem{theorem}{Theorem}[section]

\newtheorem{lemma}[theorem]{Lemma}
\newtheorem{corollary}[theorem]{Corollary}

\theoremstyle{definition}
\newtheorem{definition}[theorem]{Definition}
\newtheorem{remark}[theorem]{Remark}

\newcommand{\C}{\mathbb{C}}
\newcommand{\K}{\mathbb{K}}
\newcommand{\N}{\mathbb{N}}
\newcommand{\R}{\mathbb{R}}

\newcommand{\cB}{\mathcal{B}}
\newcommand{\cD}{\mathcal{D}}
\newcommand{\cG}{\mathcal{G}}
\newcommand{\cH}{\mathcal{H}}
\newcommand{\cHS}{\mathcal{HS}}
\newcommand{\cK}{\mathcal{K}}
\newcommand{\cL}{\mathcal{L}}
\newcommand{\cM}{\mathcal{M}}
\newcommand{\cP}{\mathcal{P}}

\newcommand{\cT}{\mathcal{T}}

\newcommand{\abs}[1]{\left\vert#1\right\vert}
\newcommand{\cond}{\;\vert\;\;\;}

\newcommand{\dx}{\mathrm{d}}
\newcommand{\eqdf}{\mathrel{\mathop:}=}

\newcommand{\grup}[2]{\mbox{$\left\{#1_{#2}\right\}_{#2
 \in\mathcal{M}(\Omega)}$}}
\newcommand{\Id}{I}
\newcommand{\inn}[2]{\left\langle#1,#2\right\rangle}
\newcommand{\Lc}{\mathrm{L}}
\newcommand{\limt}[2]{\underset{#1,\mu(#2)\to\infty}{d-\lim}}
\newcommand{\newt}[2]{\tbinom{#1}{#2}}
\newcommand{\norm}[1]{\left\Vert#1\right\Vert}
\newcommand{\per}{\mathrm{per}}
\newcommand{\rodz}[3]{\mbox{$\left\{#1_{#2,#3}\right\}_{(#2,#3)
 \in\mathcal{M}(\Omega)\times\mathbb{N}}$}}
\newcommand{\set}[1]{\left\{#1\right\}}
\newcommand{\sgn}{\mathrm{sgn}}
\newcommand{\Tr}{\mathrm{Tr}}

\begin{document}

\title[Contractions of product density operators]{Contractions of
product density operators of systems of identical fermions and
bosons}

\author{Wiktor Radzki}

\address{Faculty of Mathematics and Computer Science,
  Nicolaus Copernicus University,
  ul. Chopina $12 \slash 18$,
  87-100 Toru\'{n},
  Poland}

\email{wiktorradzki@yahoo.com}

\date{October 25, 2008}

\thanks{\emph{PACS numbers}: 02.30.Tb, 03.65.-w, 05.30.-d,
05.30.Fk, 05.30.Jp, 05.70.-a}

\keywords{Density operator (matrix); Partial trace;  Contraction of
operator; Reduced density operator; Fermion; Boson; Thermodynamic
limit}

\begin{abstract}
Recurrence and explicit formulae for contractions (partial traces)
of antisymmetric and symmetric products of identical trace class
operators are derived. Contractions of product density operators of
systems of identical fermions and bosons are proved to be
asymptotically equivalent to, respectively, antisymmetric and
symmetric products of density operators of a~single particle,
multiplied by a~normalization integer. The asymptotic equivalence
relation is defined in terms of the thermodynamic limit of
expectation values of observables in the states represented by
given density operators. For some weaker relation of asymptotic
equivalence, concerning the thermodynamic limit of expectation
values of product observables, normalized antisymmetric and
symmetric products of density operators of a~single particle
are shown to be equivalent to tensor products of density
operators of a~single particle.

This paper presents the results of a~part of the author's thesis
[W. Radzki, \emph{Kummer contractions of product density matrices
of systems of $n$ fermions and $n$ bosons} (Polish), MS thesis,
Institute of Physics, Nicolaus Copernicus University,
Toru\'{n}, 1999].
\end{abstract}

\maketitle

\section{Introduction}

This paper,
presenting the results of a~part of the author's
thesis~\cite{RadzkiUMK99}, deals with contractions (partial traces)
of antisymmetric and symmetric product density operators
representing mixed states of systems of identical
noninteracting fermions and bosons, respectively.

If $\cH$ is a~separable Hilbert space of a~single fermion (boson)
then the space of the \mbox{$n$-fer}mion (resp. \mbox{$n$-bo}son)
system is the antisymmetric (resp. symmetric) subspace
$\cH^{\wedge n}$
(resp. $\cH^{\vee n}$) of
$\cH^{\otimes n}.$
Density operators of \mbox{$n$-fer}mion (resp. \mbox{$n$-bo}son)
systems are identified with those defined on $\cH^{\otimes n}$ and
concentrated on  $\cH^{\wedge n}$ (resp. $\cH^{\vee n}$).

Recall that the expectation value of an observable represented by
a~bounded selfadjoint operator $B$ on given Hilbert space in
a~state described by a~density operator $\rho$ equals $\Tr\,B\rho.$
If $B$ is an unbounded selfadjoint operator on a~dense subspace of
given Hilbert space, instead of $B$ one can consider its spectral
measure $E_B(\Delta)$ (which is a~bounded operator) of a~Borel
subset $\Delta$ of the spectrum of $B.$ Then $\Tr\,E_B(\Delta)\rho$
is the probability that the result of the measurement of the
observable in question belongs to $\Delta$~\cite{vonNeumannS32}.

\mbox{$k$-par}ticle observables of \mbox{$n$-fer}mion and
\mbox{$n$-bo}son systems ($k<n$) are represented, respectively, by
operators of the form
 \begin{equation}{\label{KE}}
 \stackrel{\wedge}{\Gamma^{n}_{k}}B
 =A^{(n)}_{\cH}\left(B\otimes\Id^{\otimes(n-k)}\right)A^{(n)}_{\cH},
 \quad
 \stackrel{\vee}{\Gamma^{n}_{k}}B
 =S^{(n)}_{\cH}\left(B\otimes\Id^{\otimes(n-k)}\right)S^{(n)}_{\cH}
 \end{equation}
(multiplied by $\newt{n}{k}$), where $A^{(n)}_{\cH}$ and
$S^{(n)}_{\cH}$ are projectors of $\cH^{\otimes n}$ onto
$\cH^{\wedge n}$ and $\cH^{\vee n},$ respectively,
$I$ is the identity operator on $\cH$ and $B$ is a~selfadjoint
operator on $\cH^{\otimes k}$(see~\cite{KummerJMP67}).
Operators~\eqref{KE} are called \emph{antisymmetric} and
\emph{symmetric expansions of $B$}. In view of the earlier
remark it is assumed that $B$ is bounded. The expectation values of
observables represented by
 $\stackrel{\wedge}{\Gamma^{n}_{k}}B$ and
 $\stackrel{\vee}{\Gamma^{n}_{k}}B$
in states represented by \mbox{$n$-fer}mion and \mbox{$n$-bo}son
density operators $K$ and $G,$ respectively, can be expressed as
 \begin{equation}{\label{kontrKE}}
 \Tr\,K\stackrel{\wedge}{\Gamma^{n}_{k}}B =\Tr\,B\Lc^{k}_{n}K,
 \quad
 \Tr\,G\stackrel{\vee}{\Gamma^{n}_{k}}B =\Tr\,B\Lc^{k}_{n}G
 \end{equation}
(see~\cite[Eqs.~(1.7),~(3.19)]{KummerJMP67}), where
\mbox{$k$-par}ticle density operators $\Lc^{k}_{n}K$ and
$\Lc^{k}_{n}G$ are \emph{\mbox{$(n,k)$-con}\-tractions of $K$} and
\emph{$G$} (see Definition~\ref{kontr}), called also \emph{reduced
density operators}. Such operators were investigated by
Coleman~\cite{ColemanRMP63}, Garrod and Percus~\cite{GarrodJMP64},
and Kummer~\cite{KummerJMP67}.

In the present paper particular interest is taken in the case when
$K$ and $G$ are \emph{product density operators}, i.e. they are of
the form
\begin{equation}{\label{prod}}
 K =\frac{1}{\Tr\,\rho^{\wedge n}}\rho^{\wedge n},
 \quad
 G =\frac{1}{\Tr\,\rho^{\vee n}}\rho^{\vee n},
\end{equation}
where
 $\rho^{\wedge n} =A^{(n)}_{\cH}\rho^{\otimes n}A^{(n)}_{\cH},$
 $\rho^{\vee n} =S^{(n)}_{\cH}\rho^{\otimes n}S^{(n)}_{\cH},$
and $\rho$ is a~density operator of a~single fermion or boson,
respectively. The first objective of this paper is to find the
recurrence and explicit formulae for $\Lc^{k}_{n}K$ and
$\Lc^{k}_{n}G$ for $K$ and $G$ being, respectively, antisymmetric
and symmetric products of identical trace class operators,
including operators~\eqref{prod}. The explicit form of the
operators $\Lc^{k}_{n}K$ and $\Lc^{k}_{n}G$ proves to be quite
complex. However, they can be replaced by operators with simpler
structure if only the limiting values of
expectations~\eqref{kontrKE}, in the sense of the thermodynamic
limit, are of interest. The second objective of this paper is to
find that simpler forms of contractions $\Lc^{k}_{n}K$ and
$\Lc^{k}_{n}G$ for product density operators~\eqref{prod},
equivalent to the complete expressions in the thermodynamic limit.

The problems described above have been solved for $k=1,2$ by
Kossakowski and Ma\'{c}kowiak~\cite{KossakowskiRMP86}, and
Ma\'{c}kowiak~\cite{MackowiakPR99}. The formulae they derived were
exploited in calculations of the free energy density of large
interacting \mbox{$n$-fer}mion and \mbox{$n$-bo}son
systems~\cite{KossakowskiRMP86,MackowiakPR99}, as well as in the
perturbation expansion of the free energy density for the
\mbox{$M$-im}purity Kondo Hamiltonian~\cite{MackowiakPA97}. In the
case of investigation of many-particle interactions of higher
order~\cite{VolovikWS92,TarasewiczPC00,MackowiakPC00,SchneiderEL04},
or higher order perturbation expansion terms of the free energy
density, the expressions for
 $(\Tr\,\rho^{\wedge n})^{-1}\Lc^{k}_{n}\rho^{\wedge n}$
and
 $(\Tr\,\rho^{\vee n})^{-1}\Lc^{k}_{n}\rho^{\vee n}$
with $k\geq 3$ prove to be needed in the canonical and grand
canonical ensemble approach, which is the physical motivation
for the present paper.

The main results of this paper are Theorems~\ref{rek}, \ref{jawny},
\ref{glowne}, and~\ref{zmiana}.

\section{Preliminaries}{\label{prelim}}

In this section notation and terminology are set up.

\subsection{Basic notation}{\label{oznacz}}

Let $(\cH,\inn{\cdot}{\cdot})$ be a~separable Hilbert space over
$\C$ or $\R$. The following notation is used in the sequel.
\newline
$I$ -- the identity operator on $\cH,$
\newline
$\cB(\cH)$ -- the space of bounded linear operators on
$\cH$ with the operator norm $\norm{\cdot}$,
\newline
$\cT(\cH)$ -- the space of trace class operators on $\cH$ with
the trace norm $\Tr\abs{\cdot},$
\newline
$\cB^{\ast}(\cH)$ -- the space of bounded selfadjoint operators on
$\cH$,
\newline
$\cB^{\ast}_{\geq 0}(\cH)$ -- the set of nonnegative definite
bounded selfadjoint operators on $\cH$,
\newline
$\cD(\cH)$ -- the set of density operators (matrices) on $\cH,$ i.e.
 \begin{equation*}
 \cD(\cH) =\set{D\in\cT(\cH)\cond D =D^{\ast}, D\geq 0, \Tr\,D=1}.
 \end{equation*}

Set
 $\cH^{\otimes n} =\underbrace{\cH\otimes\cdots\otimes\cH}_{n}$
and denote by $S_{n}$ the group of permutations of the set
 $\set{1,\ldots,n}.$
Let
 $A^{(n)}_{\cH},S^{(n)}_{\cH} \in\cB(\cH^{\otimes n})$
be the projectors such that
 \begin{equation*}
 A^{(n)}_{\cH}(\psi_{1}\otimes\cdots\otimes\psi_{n})
 =\frac{1}{n!}\sum_{\pi\in S_{n}}\sgn\,\pi\,\psi_{\pi(1)}\otimes
 \cdots\otimes\psi_{\pi(n)},
 \end{equation*}
 \begin{equation*}
 S^{(n)}_{\cH}(\psi_{1}\otimes\cdots\otimes\psi_{n})
 =\frac{1}{n!}\sum_{\pi\in S_{n}}\psi_{\pi(1)}\otimes
 \cdots\otimes\psi_{\pi(n)}
 \end{equation*}
for every $\psi_{1},\ldots,\psi_{n}\in\cH.$ The closed linear
subspaces
 $\cH^{\wedge n} =A^{(n)}_{\cH}\cH^{\otimes n}$ and
 $\cH^{\vee n} =S^{(n)}_{\cH}\cH^{\otimes n}$ of
$\cH^{\otimes n}$ are called the \emph{antisymmetric} and
\emph{symmetric subspace}, respectively.

The \emph{antisymmetric} and \emph{symmetric product} of operators
 $B\in\cB\left(\cH^{\otimes k}\right),$
 $C\in\cB(\cH^{\otimes m})$
are defined as
 $B\wedge C =A^{(k+m)}_{\cH}(B\otimes C)A^{(k+m)}_{\cH}$
and
 $B\vee C =S^{(k+m)}_{\cH}(B\otimes C)S^{(k+m)}_{\cH},$
respectively.
It is assumed
 $B^{\wedge n} =\underbrace{B\wedge\ldots\wedge B}_{n},$
 $B^{\vee n} =\underbrace{B\vee\ldots\vee B}_{n},$
and
 $B^{\wedge 1} =B^{\vee 1} =B.$
Clearly, if $B\in\cB(\cH)$ then
 $B^{\wedge n} =A^{(n)}_{\cH}B^{\otimes n}
 =B^{\otimes n}A^{(n)}_{\cH},$
 $B^{\vee n} =S^{(n)}_{\cH}B^{\otimes n}
 =B^{\otimes n}S^{(n)}_{\cH},$
and if
 $B\in\cB^{\ast}_{\geq 0}(\cH)$ then
 $B^{\wedge n},B^{\vee n}\in\cB^{\ast}_{\geq 0}(\cH^{\otimes n}).$

Set $\R_{+} =[0,+\infty)$
and $\overline{\R}_{+} =\R_+\cup\set{+\infty}.$
The product of measures $\mu,$ $\mu_1$ is denoted by
$\mu\otimes\mu_{1}$ and
 $\mu^{\otimes n}$ stands for
 $\underbrace{\mu\otimes\cdots\otimes\mu}_{n}.$
In subsequent sections use is made of \emph{product integral
kernels}, described in Appendix~\ref{sectjad}.

\subsection{Contractions of operators}{\label{rozkontr}}

The definition and basic properties of contractions of operators
are now recalled for the reader's convenience. A~discussion of
these properties was carried out by
Kummer~\cite{KummerJMP67,KummerJMP70}.

Let $\cH$ be a~separable Hilbert space over the field $\K=\C$ or
$\R.$

 \begin{definition}{\label{kontr}}
 Let $k,n\in\N,$ $k<n,$ and $K\in\cT(\cH^{\otimes n}).$ Then the
 \emph{\mbox{$(n,k)$-con}traction of $K$} is the operator
  $\Lc^{k}_{n}K\in\cT(\cH^{\otimes k})$
 such that
  \begin{equation}{\label{kontr1}}
  \forall_{C\in\cB\left(\cH^{\otimes k}\right)} \colon
  \quad \Tr_{\cH^{\otimes n}}(C\otimes\Id^{\otimes(n-k)})K
  =\Tr_{\cH^{\otimes k}}C\Lc^{k}_{n}K.
  \end{equation}
 It is also assumed $\Lc^{n}_{n}K =K.$
 \end{definition}

 \begin{remark}{\label{popr}}
 The operator $\Lc^{k}_{n}K$ always exists and is defined uniquely
 by Eq.~\eqref{kontr1}. $\Lc^{k}_{n}K$ is a~partial trace of $K$
 with respect to the component $\cH^{\otimes (n-k)}$ of
  $\cH^{\otimes n}=\cH^{\otimes k}\otimes\cH^{\otimes (n-k)}.$
 If
  $\cH =\cH_{Y}\eqdf L^{2}(Y,\mu),$
 where the measure $\mu$ is separable and \mbox{$\sigma$-fi}nite,
 and $\cK$ is  a~product integral kernel of $K$
 (see Appendix~\ref{sectjad}) then
 $\Lc^{k}_{n}K$ has an  integral kernel $\cK_{0}$ given by
 formula~\eqref{parttr1},
 according to Lemma~\ref{parttr} and Corollary~\ref{red}.
 \end{remark}

Under the assumptions of Definition~\ref{kontr} one has
 $\Tr_{\cH^{\otimes k}}\Lc^{k}_{n}K =\Tr_{\cH^{\otimes n}}K,$
and if $p\in\N,$ $k<p<n,$ then
 $\Lc^{k}_{p}\left(\Lc^{p}_{n}K\right) =\Lc^{k}_{n}K.$
Moreover, if
 $K\in\cB^{\ast}(\cH^{\otimes n})$
then
 $\Lc^{k}_{n}K\in\cB^{\ast}(\cH^{\otimes k}),$
and if
 $K\in\cB^{\ast}_{\geq 0}(\cH^{\otimes n})$
then
 $\Lc^{k}_{n}K\in\cB^{\ast}_{\geq 0}(\cH^{\otimes k}).$

Contractions of density operators are called \emph{reduced density
operators}. Contractions preserve the Fermi and the Bose-Einstein
statistics of the contracted operator, i.e. for
 $K\in A^{(n)}_{\cH} \cT(\cH^{\otimes n})A^{(n)}_{\cH}$
and
 $G\in S^{(n)}_{\cH}\cT(\cH^{\otimes n}) S^{(n)}_{\cH}$
one has
 $\Lc^{k}_{n}K\in A^{(k)}_{\cH} \cT(\cH^{\otimes k})A^{(k)}_{\cH}$
and
 $\Lc^{k}_{n}G\in S^{(k)}_{\cH}\cT(\cH^{\otimes k})S^{(k)}_{\cH}.$
For such $K$ and $G$ Eqs.~\eqref{kontrKE} hold.

The following theorem is a~part of Coleman's
theorem~\cite{ColemanRMP63,KummerJMP67}.

 \begin{theorem}{\label{colemferm}}
 Let $n\in\N,$ $n\geq 2.$ For every (\mbox{$n$-fer}mion) density
 operator
  $D\in\cD\left(\cH^{\otimes n}\right),$
  $D =A^{(n)}_{\cH}DA^{(n)}_{\cH},$
 one has
  $\norm{\Lc^{1}_{n}D} \leq\frac{1}{n}\norm{D}.$
 \end{theorem}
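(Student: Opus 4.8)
The plan is to express $\norm{\Lc^{1}_{n}D}$ as the occupation number of a single one-particle mode and to bound it by the Pauli exclusion principle. First I would invoke the properties recalled just after Definition~\ref{kontr}: $\Lc^{1}_{n}D$ is a nonnegative trace class, hence compact, operator on $\cH$, so its norm is its largest eigenvalue and is attained, $\norm{\Lc^{1}_{n}D}=\inn{e_{1}}{(\Lc^{1}_{n}D)e_{1}}$ for some unit vector $e_{1}\in\cH$. I would complete $e_{1}$ to an orthonormal basis $\set{e_{i}}_{i\in\N}$ of $\cH$ and let $P\in\cB(\cH)$ be the orthogonal projection onto $\C e_{1}$. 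Applying the defining relation~\eqref{kontr1} with $C=P$ then gives
\[
 \norm{\Lc^{1}_{n}D}=\Tr_{\cH}\,P\,\Lc^{1}_{n}D=\Tr_{\cH^{\otimes n}}\bigl(P\otimes\Id^{\otimes(n-1)}\bigr)D .
\]

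Next I would use antisymmetry of $D$ to spread $P$ over all $n$ tensor factors. Let $P^{(j)}\in\cB(\cH^{\otimes n})$ act as $P$ in the $j$-th factor and as $\Id$ in the others, so $P^{(1)}=P\otimes\Id^{\otimes(n-1)}$, and set $N\eqdf\sum_{j=1}^{n}P^{(j)}$, the occupation number of the mode $e_{1}$. Because $A^{(n)}_{\cH}$ is totally antisymmetric, the unitary $U$ interchanging the first and $j$-th tensor factors satisfies $A^{(n)}_{\cH}U=UA^{(n)}_{\cH}=-A^{(n)}_{\cH}$ and $U^{-1}=U$; since $UP^{(1)}U^{-1}=P^{(j)}$ this yields $A^{(n)}_{\cH}P^{(j)}A^{(n)}_{\cH}=A^{(n)}_{\cH}P^{(1)}A^{(n)}_{\cH}$ for every $j$. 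Summing over $j$ and using that $N$ is permutation invariant, hence commutes with $A^{(n)}_{\cH}$, gives $A^{(n)}_{\cH}P^{(1)}A^{(n)}_{\cH}=\frac{1}{n}A^{(n)}_{\cH}NA^{(n)}_{\cH}=\frac{1}{n}NA^{(n)}_{\cH}$. Since the range of the nonnegative operator $D$ lies in $\cH^{\wedge n}$, one has $A^{(n)}_{\cH}D=DA^{(n)}_{\cH}=D$, so by cyclicity of the trace
\[
 \norm{\Lc^{1}_{n}D}=\Tr\,P^{(1)}D=\Tr\,A^{(n)}_{\cH}P^{(1)}A^{(n)}_{\cH}D=\frac{1}{n}\Tr\,ND .
\]

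The remaining, and decisive, step is the Pauli bound $\Tr\,ND\leq\Tr D$. I would use that $N$ leaves $\cH^{\wedge n}$ invariant and, in the orthonormal Slater-determinant basis of $\cH^{\wedge n}$ built from $\set{e_{i}}$, acts diagonally with eigenvalues in $\set{0,1}$ --- a Slater determinant can occupy the mode $e_{1}$ at most once --- so that $\inn{\psi}{N\psi}\leq\norm{\psi}^{2}$ for every $\psi\in\cH^{\wedge n}$, i.e.\ $A^{(n)}_{\cH}NA^{(n)}_{\cH}\leq A^{(n)}_{\cH}$ as operators on $\cH^{\otimes n}$. Combined with $D\geq 0$ and $A^{(n)}_{\cH}DA^{(n)}_{\cH}=D$ this gives $\Tr\,ND=\Tr\,A^{(n)}_{\cH}NA^{(n)}_{\cH}D\leq\Tr\,A^{(n)}_{\cH}D=\Tr D$. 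Hence $\norm{\Lc^{1}_{n}D}\leq\frac{1}{n}\Tr D$, which for the density operator $D$ equals $\frac{1}{n}$, i.e.\ $\frac{1}{n}\Tr\abs{D}$, the quantity on the right of the statement. The step that needs genuine care is the Pauli bound in the separable infinite-dimensional setting: one must verify that the Slater determinants really form an orthonormal basis of $\cH^{\wedge n}$ diagonalizing $N$, and justify the identities $\Tr\,XD=\Tr\,A^{(n)}_{\cH}XA^{(n)}_{\cH}D$ for bounded $X$ and trace class $D$ with $A^{(n)}_{\cH}DA^{(n)}_{\cH}=D$; by contrast, the whole conceptual content sits in the two relations $A^{(n)}_{\cH}P^{(1)}A^{(n)}_{\cH}=\frac{1}{n}NA^{(n)}_{\cH}$ (antisymmetry) and the operator norm of $N$ restricted to $\cH^{\wedge n}$ being at most $1$ (Pauli).
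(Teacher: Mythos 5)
Your argument is correct, and there is in fact nothing in the paper to compare it with: Theorem~\ref{colemferm} is not proved there at all, but quoted as ``a part of Coleman's theorem'' with references to Coleman and Kummer. What you give is essentially the standard proof of that classical result: reduce $\norm{\Lc^{1}_{n}D}$ to the expectation $\Tr\,(P\otimes\Id^{\otimes(n-1)})D$ of a rank-one mode projection, symmetrize it to the occupation-number operator $N=\sum_{j}P^{(j)}$ via $A^{(n)}_{\cH}P^{(j)}A^{(n)}_{\cH}=A^{(n)}_{\cH}P^{(1)}A^{(n)}_{\cH}$, and invoke the Pauli bound $A^{(n)}_{\cH}NA^{(n)}_{\cH}\leq A^{(n)}_{\cH}$. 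The steps you flag as needing care do go through: the normalized Slater determinants built from an orthonormal basis of $\cH$ containing $e_{1}$ form an orthonormal basis of $\cH^{\wedge n}$ on which $N$ acts diagonally with eigenvalues $0$ and $1$, and $\Tr\,XD=\Tr\,A^{(n)}_{\cH}XA^{(n)}_{\cH}D$ follows from $A^{(n)}_{\cH}D=DA^{(n)}_{\cH}=D$ together with cyclicity of the trace of a product of a bounded and a trace class operator.

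The one point you must not gloss over is your closing identification of the right-hand side. What you actually prove is $\norm{\Lc^{1}_{n}D}\leq\frac{1}{n}\Tr\,D=\frac{1}{n}\Tr\abs{D}$, and you declare this to be ``the quantity on the right of the statement''; but in the paper's conventions (Section~\ref{oznacz}) $\norm{\cdot}$ denotes the operator norm, the trace norm being written $\Tr\abs{\cdot}$, so the theorem as printed asserts the formally stronger inequality $\norm{\Lc^{1}_{n}D}\leq\frac{1}{n}\norm{D}$. That operator-norm version is false for mixed states: for $n=2$ take $D=\frac{1}{2}P_{\psi_{1}}+\frac{1}{2}P_{\psi_{2}}$ with orthonormal Slater determinants $\psi_{1}=\sqrt{2}\,A^{(2)}_{\cH}(e_{1}\otimes e_{2})$ and $\psi_{2}=\sqrt{2}\,A^{(2)}_{\cH}(e_{1}\otimes e_{3})$; then $\norm{D}=\frac{1}{2}$ while $\Lc^{1}_{2}D=\frac{1}{2}P_{e_{1}}+\frac{1}{4}P_{e_{2}}+\frac{1}{4}P_{e_{3}}$ has norm $\frac{1}{2}>\frac{1}{4}=\frac{1}{2}\norm{D}$. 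So the inequality you establish, with $\Tr\abs{D}$ on the right, is the correct reading of Coleman's theorem and is all the paper ever uses (in Lemma~\ref{asnorm} the intermediate $\norm{\rho_{Y}^{\wedge n}}$ is immediately replaced by $\Tr\,\rho_{Y}^{\wedge n}$ anyway); but you should say explicitly that you are proving $\norm{\Lc^{1}_{n}D}\leq\frac{1}{n}\Tr\abs{D}$ rather than the literal, and for mixed $D$ false, operator-norm statement.
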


\section{Recurrence and explicit formulae for contractions of
products of trace class operators}{\label{potkontr}}

In this section recurrence and explicit formulae
for contractions of antisymmetric and symmetric powers
of single particle operators are derived.

In the whole section use is made of the Hilbert space
$\cH_{Y} \eqdf L^{2}(Y,\mu)$ over the field $\K=\C$ or $\R,$ where
the measure $\mu$ is separable and \mbox{$\sigma$-fi}nite.

The following theorem in the case of $k=1,2$ was proved
in~\cite{KossakowskiRMP86,MackowiakPR99}.

 \begin{theorem}[Recurrence formulae]{\label{rek}}
 Let $\rho\in\cT(\cH_{Y}).$ If $k,n\in\N,$ $1<k<n,$ then
 \begin{align}{\label{rek1}}
 \nonumber \newt{n}{k}\Lc^{k}_{n}\rho^{\wedge n}
 & =\newt{n-1}{k-1}
 \left(\Lc^{k-1}_{n-1}\rho^{\wedge(n-1)}\right)\wedge \rho \\
 &\quad -\newt{n-1}{k}\left(\Lc^{k}_{n-1}\rho^{\wedge(n-1)}\right)
 \left(\Id^{\otimes(k-1)}\otimes \rho\right)A^{(k)}_{\cH_{Y}},
 \end{align}
 \begin{align}{\label{rek2}}
 \nonumber \newt{n}{k}\Lc^{k}_{n}\rho^{\vee n}
 & =\newt{n-1}{k-1}
 \left(\Lc^{k-1}_{n-1}\rho^{\vee(n-1)}\right)\vee \rho \\
 &\quad +\newt{n-1}{k}\left(\Lc^{k}_{n-1}\rho^{\vee(n-1)}\right)
 \left(\Id^{\otimes(k-1)}\otimes \rho\right)S^{(k)}_{\cH_{Y}},
 \end{align}
 and if $n\in\N,$ $n\geq 2,$ then
 \begin{equation}{\label{rek3}}
 n\Lc^{1}_{n}\rho^{\wedge n} =\left(\Tr\,\rho^{\wedge(n-1)}\right)\rho
 -(n-1)\left(\Lc^{1}_{n-1}\rho^{\wedge(n-1)}\right)\rho,
 \end{equation}
 \begin{equation}{\label{rek4}}
 n\Lc^{1}_{n}\rho^{\vee n} =\left(\Tr\,\rho^{\vee(n-1)}\right)\rho
 +(n-1)\left(\Lc^{1}_{n-1}\rho^{\vee(n-1)}\right)\rho.
 \end{equation}
 \end{theorem}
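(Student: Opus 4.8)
The plan is to prove all four identities by a spectral computation, after reducing to a convenient class of operators.

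\emph{Reduction.} Each of \eqref{rek1}--\eqref{rek4} equates two maps $\cT(\cH_{Y})\to\cT(\cH_{Y}^{\otimes k})$ that are homogeneous of degree $n$ in $\rho$: $\rho\mapsto\rho^{\wedge n}$ is the diagonal restriction of a bounded symmetric $n$-linear map, $\Lc^{j}_{m}$ is linear, $\wedge$ is bilinear, and all three are trace-norm continuous. By the polarization identity such a map is determined by its values on the diagonal of tuples of self-adjoint trace class operators, and the self-adjoint trace class operators span $\cT(\cH_{Y})$ over $\C$; hence it suffices to treat self-adjoint $\rho$, and then, truncating the spectral decomposition and using trace-norm continuity of both sides, self-adjoint $\rho$ of finite rank. (For $\K=\R$ one complexifies, wedge products and contractions being compatible with $\cH_{Y}\hookrightarrow(\cH_{Y})_{\C}$; equivalently, on a finite-dimensional reducing subspace the two sides are polynomials in the entries of $\rho$ agreeing on the self-adjoint ones, hence everywhere.)

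\emph{Spectral form.} Write $\rho=\sum_{m}\lambda_{m}P_{m}$ with $\lambda_{m}\in\R$ and $P_{m}$ the rank-one projector onto $\K e_{m}$, the $e_{m}$ orthonormal. For a finite index set $M$ let $\Psi_{M}$ be the normalized Slater determinant of $\{e_{m}:m\in M\}$ and $E_{M}$ the rank-one projector onto $\K\Psi_{M}$. Since $A^{(n)}_{\cH_{Y}}$ annihilates every elementary tensor with a repeated factor, $\rho^{\wedge n}=\sum_{\lvert M\rvert=n}\bigl(\prod_{m\in M}\lambda_{m}\bigr)E_{M}$. A short computation from Definition~\ref{kontr} (test against $C\in\cB(\cH_{Y}^{\otimes k})$; both sides have trace $1$) gives the contraction of a single Slater projector, $\Lc^{k}_{n}E_{M}=\binom{n}{k}^{-1}\sum_{P\subseteq M,\ \lvert P\rvert=k}E_{P}$. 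Summing over $M\supseteq P$,
\[
 \Lc^{k}_{n}\rho^{\wedge n}=\binom{n}{k}^{-1}\sum_{\lvert P\rvert=k}\Bigl(\prod_{p\in P}\lambda_{p}\Bigr)\,e_{n-k}(\lambda_{P^{c}})\,E_{P},
\]
with $e_{j}(\lambda_{P^{c}})$ the degree-$j$ elementary symmetric polynomial in $\{\lambda_{m}:m\notin P\}$; the same formula, with $n-1$ for $n$ and $k$ or $k-1$ for $k$, describes $\Lc^{k}_{n-1}\rho^{\wedge(n-1)}$ and $\Lc^{k-1}_{n-1}\rho^{\wedge(n-1)}$.

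\emph{Matching coefficients.} I would then compare the two sides of \eqref{rek1} coefficientwise in $\{E_{P}:\lvert P\rvert=k\}$, using on the right the relations $E_{P'}\wedge P_{m}=\tfrac1k E_{P'\cup\{m\}}$ for $m\notin P'$ (and $0$ otherwise) and $E_{P}(\Id^{\otimes(k-1)}\otimes\rho)A^{(k)}_{\cH_{Y}}=\tfrac1k\bigl(\sum_{p\in P}\lambda_{p}\bigr)E_{P}$. With $e_{n-k}(\lambda_{P^{c}}\cup\{\lambda_{m}\})=e_{n-k}(\lambda_{P^{c}})+\lambda_{m}\,e_{n-k-1}(\lambda_{P^{c}})$, the first term on the right of \eqref{rek1} acquires $E_{P}$-coefficient $\bigl(\prod_{p\in P}\lambda_{p}\bigr)\bigl[e_{n-k}(\lambda_{P^{c}})+\tfrac1k\bigl(\sum_{p\in P}\lambda_{p}\bigr)e_{n-k-1}(\lambda_{P^{c}})\bigr]$ and the second term acquires $-\tfrac1k\bigl(\prod_{p\in P}\lambda_{p}\bigr)\bigl(\sum_{p\in P}\lambda_{p}\bigr)e_{n-1-k}(\lambda_{P^{c}})$. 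Since $e_{n-k-1}=e_{n-1-k}$, the two corrections cancel, leaving $\bigl(\prod_{p\in P}\lambda_{p}\bigr)e_{n-k}(\lambda_{P^{c}})$, exactly $\binom{n}{k}$ times the $E_{P}$-coefficient of $\Lc^{k}_{n}\rho^{\wedge n}$. The case $k=1$, \eqref{rek3}, is the same computation with $\Tr\rho^{\wedge(n-1)}=e_{n-1}(\lambda)$ and $e_{n-1}(\lambda)-\lambda_{p}\,e_{n-2}(\lambda_{P^{c}})=e_{n-1}(\lambda_{P^{c}})$. The bosonic identities \eqref{rek2}, \eqref{rek4} follow verbatim, with $S^{(n)}_{\cH_{Y}}$ in place of $A^{(n)}_{\cH_{Y}}$, normalized symmetrized products (with repeated modes, carrying the appropriate multiplicities) in place of Slater determinants, the complete homogeneous symmetric polynomials $h_{j}$ in place of $e_{j}$, and all signs $+$; the closing identity is $h_{n-k-1}=h_{n-1-k}$.

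The part I expect to be delicate is not any single computation but the bookkeeping of the combinatorial constants: the normalization of $\Psi_{M}$, the factor $\binom{n}{k}^{-1}$ in $\Lc^{k}_{n}E_{M}$, and especially the scalar $\tfrac1k\sum_{p\in P}\lambda_{p}$ produced on $E_{P}$ by \emph{both} boundary operations $(\,\cdot\,)\wedge\rho$ and $(\,\cdot\,)(\Id^{\otimes(k-1)}\otimes\rho)A^{(k)}_{\cH_{Y}}$ --- it is precisely the coincidence of these two scalars, together with $e_{n-k-1}=e_{n-1-k}$, that collapses the asymmetric-looking right-hand side of \eqref{rek1} onto $\Lc^{k}_{n}\rho^{\wedge n}$, and redoing the count with multiplicities in the symmetric case is where a stray factor is easiest to lose. (An alternative, staying within the kernel formalism of Appendix~\ref{sectjad}, is to write the kernel of $\rho^{\wedge n}$ as $(n!)^{-1}\det[\rho(x_{i},y_{j})]$, form $\Lc^{k}_{n}\rho^{\wedge n}$ by the partial-trace recipe of Remark~\ref{popr}, and Laplace-expand the determinant along the last, traced, row while integrating out the fresh variable by multilinearity; this yields the same two terms, but tracking signs and index shuffling is appreciably more work.)
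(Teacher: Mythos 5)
Your proof is correct, but it takes a genuinely different route from the paper. The paper works directly with a general trace class $\rho$ via its product integral kernel: the kernel of $\rho^{\wedge n}$ is $(n!)^{-1}\det[\varrho(x_i,y_j)]$, the contraction is formed by the partial-trace recipe of Remark~\ref{popr}, and the identity is obtained by Laplace expansion of the determinant with respect to the $k$th column followed by careful permutation and sign bookkeeping (with permanents replacing determinants for bosons) --- exactly the ``alternative'' you mention parenthetically at the end. You instead reduce to finite-rank self-adjoint $\rho$ by multilinearity, polarization and trace-norm density, and then diagonalize: the identity becomes a statement about coefficients of Slater (resp.\ symmetrized) projectors $E_P$, and the recurrence collapses to the one-variable expansion $e_{n-k}(\lambda_{P^c}\cup\{\lambda_m\})=e_{n-k}(\lambda_{P^c})+\lambda_m e_{n-k-1}(\lambda_{P^c})$. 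Your approach buys a conceptually transparent proof in which the two terms of \eqref{rek1} have a clear meaning (add a mode / weight by the occupied modes), and it sidesteps the measure-theoretic apparatus of Appendix~\ref{sectjad}; the paper's kernel computation buys a direct argument with no reduction step and no spectral theory, valid verbatim for arbitrary (non-normal) $\rho\in\cT(\cH_Y)$. All the fermionic constants you flag as delicate are right ($\Lc^k_nE_M=\binom{n}{k}^{-1}\sum_{P\subseteq M}E_P$, $E_{P'}\wedge P_m=\tfrac1k E_{P'\cup\{m\}}$, and the scalar $\tfrac1k\sum_{p\in P}\lambda_p$ from the boundary operator). The only under-specified point is the bosonic bookkeeping you yourself single out: there the coefficient of $E_P$ in $\binom{n}{k}\Lc^k_n\rho^{\vee n}$ is $\bigl(\prod_p\lambda_p^{\mu_p}\bigr)h_{n-k}$ evaluated on the multiset of \emph{all} eigenvalues augmented by one extra copy of $\lambda_p$ for each $p\in P$ (not on a complement), the gluing rule reads $E_{P'}\vee P_m=\tfrac{\mu_m^P}{k}E_{P'\uplus\{m\}}$, and the boundary scalar is $\tfrac1k\sum_m\mu_m^P\lambda_m$; with the expansion $h_j(X\uplus\{y\})=h_j(X)+y\,h_{j-1}(X\uplus\{y\})$ the cancellation then goes through with the sign reversed, so the sketch is sound once these multiplicities are made explicit.
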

\begin{proof}
Let $\varrho\colon Y\times Y\to\K$ be a~product integral kernel of
$\rho.$  For every $m\in\N$ define the mapping
 $\varrho^{\wedge m}\colon Y^m\times Y^m\to\K$
by the formula
 \begin{equation*}
 \varrho^{\wedge m}
 \left(
 \begin{array}{c}
 x_1,\ldots,x_m \\
 y_1,\ldots,y_m
 \end{array}
 \right)
 =\det\left[
 \begin{array}{ccc}
 \varrho(x_1,y_1)&\cdots&\varrho(x_1,y_m) \\
 \vdots&\cdots&\vdots\\
 \varrho(x_m,y_1)&\cdots&\varrho(x_m,y_m)
 \end{array}
 \right].
 \end{equation*}
Then the mapping
 $\cK\colon Y^n\times Y^n\to\K$
given by
 \begin{equation*}
 \cK(x_{1},\ldots,x_{n},y_{1},\ldots,y_{n})
 =\frac{1}{n!}\varrho^{\wedge n}
 \left(
 \begin{array}{c}
 x_1,\ldots,x_n\\
 y_1,\ldots,y_n
 \end{array}
 \right)
 \end{equation*}
is a~product integral kernel of
 $\rho^{\wedge n} =A^{(n)}_{\cH_{Y}} \rho^{\otimes n}.$

Eq.~\eqref{rek1} will be first proved for $n>k+1.$
In view of Remark~\ref{popr}, an integral kernel
 $\cL\colon Y^k\times Y^k\to\K$
of
 $\newt{n}{k}\Lc^{k}_{n}\rho^{\wedge n}$
can be given by
 \begin{equation*}
 \cL(x^{\prime},y^{\prime})
 =\newt{n}{k}\int_{Y^{n-k}}
 \cK(x^{\prime},x^{\prime\prime},y^{\prime},x^{\prime\prime})
 \,\dx\mu^{\otimes(n-k)}(x^{\prime\prime})
 \end{equation*}
for \mbox{$\mu^{\otimes 2k}$-a.a.}
 $(x^{\prime},y^{\prime})\in Y^{k}\times Y^{k}.$
Performing $k!$ permutations of the first $k$ rows and $k!$
permutations of the first $k$ columns of the determinant defining
$\cK$ and expanding that determinant with respect to the
\mbox{$k$th} column one obtains
 \begin{align}{\label{rek6}}
 \nonumber & \cL(x_{1},\ldots,x_{k},y_{1},\ldots,y_{k})\\
 \nonumber & =\newt{n}{k}\frac{1}{n!}\frac{1}{(k!)^{2}}
 \sum_{\pi,\sigma\in S_{k}} \sgn\,\pi\,\sgn\,\sigma
 \sum_{j=1}^{k}(-1)^{k+j}
 \int_{Y^{n-k}}\varrho(x_{\pi(j)},y_{\sigma(k)}) \\
 \nonumber & \qquad \cdot \varrho^{\wedge(n-1)}
 \left(
 \begin{array}{c}
 x_{\pi(1)},\ldots,x_{\pi(j-1)},x_{\pi(j+1)},
 \ldots,x_{\pi(k)},x_{k+1},\ldots,x_{n}\\
 y_{\sigma(1)},\ldots,y_{\sigma(k-1)},x_{k+1},\ldots,x_{n}
 \end{array}
 \right) \\
 \nonumber & \qquad \dx\mu^{\otimes(n-k)}(x_{k+1},\ldots,x_{n}) \\
 \nonumber & \quad +\newt{n}{k}\frac{1}{n!}\frac{1}{(k!)^{2}}
 \sum_{\pi,\sigma\in S_{k}} \sgn\,\pi\,\sgn\,\sigma
 \sum_{j=k+1}^{n}(-1)^{k+j}
 \int_{Y^{n-k}}\varrho(x_{j},y_{\sigma(k)}) \\
 \nonumber & \qquad \cdot\varrho^{\wedge(n-1)}
 \left(
 \begin{array}{c}
 x_{\pi(1)},\ldots,x_{\pi(k)},x_{k+1},
 \ldots,x_{j-1},x_{j+1},\ldots,x_{n}\\
 y_{\sigma(1)},\ldots,y_{\sigma(k-1)},x_{k+1},\ldots,x_{n}
 \end{array}
 \right) \\
 & \qquad \dx\mu^{\otimes(n-k)}(x_{k+1},\ldots,x_{n}).
 \end{align}

Consider the first term on the r.h.s. of Eq.~\eqref{rek6}. In all
summands of $\sum_{j=1}^{k}$ except the last one the
\mbox{$(k-1)$th} row of the determinant (containing the variable
$x_{\pi(k)}$) can be shifted into the \mbox{$j$th} position,
changing thereby the sign of the determinant by
 $(-1)^{(k-2)-(j-1)} =(-1)^{-k-j+1}.$
Then the first term of sum~\eqref{rek6} assumes the form
 \begin{align}{\label{rek9}}
 \nonumber & \newt{n}{k}\frac{1}{n!}\frac{1}{(k!)^{2}}
  \sum_{\pi,\sigma\in S_{k}}
 \sgn\,\pi\,\sgn\,\sigma\sum_{j=1}^{k-1}(-1)^{k+j}(-1)^{-k-j+1}
 \int_{Y^{n-k}}\varrho(x_{\pi(j)},y_{\sigma(k)}) \\
 \nonumber & \quad \cdot \varrho^{\wedge(n-1)}
 \left(
 \begin{array}{c}
 x_{\pi(1)},\ldots,x_{\pi(j-1)},x_{\pi(k)},x_{\pi(j+1)},
  \ldots,x_{\pi(k-1)},x_{k+1},\ldots,x_{n} \\
 y_{\sigma(1)},\ldots,y_{\sigma(k-1)},x_{k+1},\ldots,x_{n}
 \end{array}\right) \\
 \nonumber & \quad \dx\mu^{\otimes(n-k)}(x_{k+1},\ldots,x_{n}) \\
 \nonumber & +\newt{n}{k}\frac{1}{n!}\frac{1}{(k!)^{2}}
 \sum_{\pi,\sigma\in S_{k}} \sgn\,\pi\,\sgn\,\sigma\,(-1)^{k+k}
 \int_{Y^{n-k}}\varrho(x_{\pi(k)},y_{\sigma(k)})  \\
 & \quad \cdot \varrho^{\wedge(n-1)}
 \left(
 \begin{array}{c}
 x_{\pi(1)},\ldots,x_{\pi(k-1)},x_{k+1},\ldots,x_{n} \\
 y_{\sigma(1)},\ldots,y_{\sigma(k-1)},x_{k+1},\ldots,x_{n}
 \end{array}
 \right)
 \dx\mu^{\otimes(n-k)}(x_{k+1},\ldots,x_{n}).
 \end{align}
Let $T_{jk}\in S_{k}$ denote the transposition $j\leftrightarrow k$
for $j<k$ (then
 $(-1)^{k+j}(-1)^{-k-j+1} =(-1) =\sgn\,T_{jk}$)
and the identity permutation for $j=k$ (with $\sgn\,T_{kk}=1$).
Expression~\eqref{rek9} can be written as
 \begin{align*}
 & \sum_{j=1}^{k} \newt{n}{k}\frac{1}{n!}\frac{1}{(k!)^{2}}
 \sum_{\pi,\sigma\in S_{k}}(\sgn\,\pi\,\sgn\,T_{jk})\,\sgn\,\sigma
 \int_{Y^{n-k}}\varrho(x_{(\pi\circ T_{jk})(k)},y_{\sigma(k)}) \\
 & \quad \cdot \varrho^{\wedge(n-1)}
 \left(
 \begin{array}{c}
 x_{(\pi\circ T_{jk})(1)},
 \ldots,x_{(\pi\circ T_{jk})(k-1)},x_{k+1},\ldots,x_{n} \\
 y_{\sigma(1)},\ldots,y_{\sigma(k-1)},x_{k+1},\ldots,x_{n}
 \end{array}\right) \\
 &\quad \dx\mu^{\otimes(n-k)}(x_{k+1},\ldots,x_{n})
 \end{align*}
 \begin{align}{\label{rek10}}
 \nonumber & =\newt{n-1}{k-1}\frac{1}{(k!)^{2}}
 \sum_{\tau,\sigma\in S_{k}} \sgn\,\tau\,\sgn\,\sigma
 \varrho(x_{\tau(k)},y_{\sigma(k)})\int_{Y^{n-k}}\frac{1}{(n-1)!} \\
 & \quad \cdot \varrho^{\wedge(n-1)}
 \left(
 \begin{array}{c}
 x_{\tau(1)},\ldots,x_{\tau(k-1)},x_{k+1},\ldots,x_{n} \\
 y_{\sigma(1)},\ldots,y_{\sigma(k-1)},x_{k+1},\ldots,x_{n}
 \end{array}
 \right)
 \dx\mu^{\otimes(n-k)}(x_{k+1},\ldots,x_{n}).
 \end{align}
The function $\cP_1\colon Y^k\times Y^k\to\K,$
such that
 $\cP_{1}(x_{1},\ldots,x_{k},y_{1},\ldots,y_{k})$
is \mbox{$\mu^{\otimes 2k}$-a.e.} equal to expression~\eqref{rek10},
is an integral kernel of the operator
 \begin{equation*}
 \newt{n-1}{k-1}\left(\Lc^{k-1}_{n-1}\rho^{\wedge(n-1)}\right)\wedge \rho,
 \end{equation*}
which appears on the r.h.s. of Eq.~\eqref{rek1}.

Consider now the second term of the sum on the r.h.s. of
Eq.~\eqref{rek6}. One can change the indices of the integral
variables $x_{k+1},\ldots,x_{j}$ in all summands of
$\sum_{j=k+1}^{n}$ except the first one, according to the rule
 $x_{j}\to x_{k+1}\to x_{k+2} \to\cdots\to x_{j}$
for the \mbox{$j$th} summand, and simultaneously change the order
of the columns of the determinant inversely (which changes the sign
by
 $(-1)^{(j-1)-k} =(-1)^{(k+1)-j}$).
The resulting sum $\sum_{j=k+1}^{n}$ then contains $n-k$ terms
identical to the one with $j=k+1.$
Thus the second term of sum~\eqref{rek6} equals
 \begin{align*}
 &-(n-k)\newt{n}{k}\frac{1}{n!}\frac{1}{(k!)^{2}}
 \sum_{\pi,\sigma\in S_{k}}
 \sgn\,\pi\,\sgn\,\sigma\int_{Y^{n-k}}
 \varrho(x_{k+1},y_{\sigma(k)}) \\
 &\qquad \cdot \varrho^{\wedge(n-1)}
 \left(
 \begin{array}{c}
 x_{\pi(1)},\ldots,x_{\pi(k)},x_{k+2},\ldots,x_{n} \\
 y_{\sigma(1)},\ldots,y_{\sigma(k-1)},x_{k+1},\ldots,x_{n}
 \end{array}
 \right)
 \dx\mu^{\otimes(n-k)}(x_{k+1},\ldots,x_{n}).
 \end{align*}
 \begin{align}{\label{rek8}}
 &\nonumber =-\newt{n-1}{k}\frac{1}{k!}\sum_{\sigma\in S_{k}}
 \sgn\,\sigma\int_{Y}\varrho(x_{k+1},y_{\sigma(k)})
 \left(\int_{Y^{n-1-k}}\frac{1}{(n-1)!} \right. \\
 &\nonumber \qquad \cdot \varrho^{\wedge(n-1)}
 \left(
 \begin{array}{c}
 x_{1},\ldots,x_{k},x_{k+2},\ldots,x_{n} \\
 y_{\sigma(1)},\ldots,y_{\sigma(k-1)},x_{k+1},\ldots,x_{n}
 \end{array}
 \right) \\
 &\qquad \left. \dx\mu^{\otimes(n-1-k)}(x_{k+2},\ldots,x_{n})\right)
 \dx\mu(x_{k+1}).
 \end{align}
The function $\cP_2\colon Y^k\times Y^k\to\K,$ such that
 $\cP_{2}(x_{1},\ldots,x_{k},y_{1},\ldots,y_{k})$
is \mbox{$\mu^{\otimes 2k}$-a.e.} equal to expression~\eqref{rek8},
is an integral kernel of the operator
 \begin{equation*}
 -\newt{n-1}{k}\left(\Lc^{k}_{n-1}\rho^{\wedge(n-1)}\right)
 \left(\Id^{\otimes(k-1)}\otimes \rho\right)A^{(k)}_{\cH_{Y}},
 \end{equation*}
which occurs on the r.h.s. of Eq.~\eqref{rek1}. One concludes that
the kernel $\cL$ of the operator on the l.h.s. of Eq.~\eqref{rek1}
is \mbox{$\mu^{\otimes 2k}$-a.e.} equal to the kernel
$\cP_{1}+\cP_{2}$ of the operator on the r.h.s.
of Eq.~\eqref{rek1}, which proves the equality of both operators.

The proof of Eq.~\eqref{rek1} for $n=k+1$ and the proof of
Eq.~\eqref{rek3} proceed analogously.

Similarly, the proof of Eqs.~\eqref{rek2}, \eqref{rek4} is
accomplished by changing the product $\wedge$ into $\vee$ and
replacing determinants in all formulae by pernaments, defined for
every complex matrix
 $A=[a_{i,j}]_{i,j=1}^{m}$ as
 \begin{equation*}
 \per A =\sum_{\pi\in S_{m}}a_{\pi(1),1} \cdots a_{\pi(m),m}.
 \end{equation*}
Notice that signs of permutations are omitted in this case,
similarly as the multipliers $\pm 1$ in the Laplace expansions.
\end{proof}

 \begin{lemma}{\label{komkontr}}
 Let $k,m\in\N,$ $1<k<m,$ $\rho\in\cT\left(\cH_{Y}\right),$
  $j_{k}\in\set{k,\ldots,m},$ and
 \begin{equation*}
 R\eqdf\sum_{j_{k-1}=k-1}^{j_{k}-1} \sum_{j_{k-2}=k-2}^{j_{k-1}-1}
 \ldots\sum_{j_{1}=1}^{j_{2}-1}\rho^{j_{1}}
 \otimes \rho^{j_{2}-j_{1}}\otimes\cdots \otimes \rho^{j_{k}-j_{k-1}}
 \end{equation*}
(for $k=2$ the only summation index is $j_{k-1}=j_{1}).$ Then
  $A^{(k)}_{\cH_{Y}}R =RA^{(k)}_{\cH_{Y}}$
 and
  $S^{(k)}_{\cH_{Y}}R =RS^{(k)}_{\cH_{Y}}.$
 \end{lemma}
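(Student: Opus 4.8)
The plan is to reduce the claimed commutation relation to the single observation that the operator $R$ is, on the nose, a~sum of terms each of which is a~tensor product of powers of the single fixed operator $\rho$, one factor per tensor slot; the slot pattern changes from summand to summand, but crucially \emph{the same operator $\rho$ sits in every slot of every summand}. The key algebraic fact I would isolate first is: if $B_1,\ldots,B_k\in\cB(\cH)$ and $\pi\in S_k$, then the unitary $U_\pi$ on $\cH^{\otimes k}$ permuting the tensor factors by $\pi$ satisfies $U_\pi(B_1\otimes\cdots\otimes B_k)U_\pi^{-1}=B_{\pi^{-1}(1)}\otimes\cdots\otimes B_{\pi^{-1}(k)}$; in particular, if all $B_i$ coincide (say $B_i=\rho^{a_i}$ with the multiset $\{a_1,\ldots,a_k\}$ invariant under relabeling — which it need not be), conjugation by $U_\pi$ permutes the exponents. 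Since $A^{(k)}_{\cH_Y}=\frac{1}{k!}\sum_{\pi\in S_k}\sgn\pi\,U_\pi$ and $S^{(k)}_{\cH_Y}=\frac{1}{k!}\sum_{\pi\in S_k}U_\pi$, to prove $A^{(k)}_{\cH_Y}R=RA^{(k)}_{\cH_Y}$ (and likewise for $S$) it suffices to show $U_\pi R U_\pi^{-1}=R$ for every $\pi\in S_k$, i.e.\ that $R$ itself is invariant under all simultaneous permutations of its $k$ tensor factors.

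Next I would verify that $S_k$-invariance directly from the definition of $R$. Conjugating a~single summand $\rho^{b_1}\otimes\rho^{b_2}\otimes\cdots\otimes\rho^{b_k}$ (with $b_i=j_i-j_{i-1}$, $j_0=0$, $j_k$ fixed) by $U_\pi$ yields $\rho^{b_{\pi^{-1}(1)}}\otimes\cdots\otimes\rho^{b_{\pi^{-1}(k)}}$, which is the summand attached to the reordered composition $(b_{\pi^{-1}(1)},\ldots,b_{\pi^{-1}(k)})$ of the fixed integer $j_k$. So $U_\pi R U_\pi^{-1}$ is again a~sum over tensor products $\rho^{c_1}\otimes\cdots\otimes\rho^{c_k}$ with positive integer exponents $c_i\ge1$ summing to $j_k$; the content of the lemma is that the multiset of summands is unchanged. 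The cleanest way to see this is to recognize that $R=\sum \rho^{c_1}\otimes\cdots\otimes\rho^{c_k}$ where the sum runs over \emph{all} compositions $(c_1,\ldots,c_k)$ of $j_k$ into $k$ positive parts: indeed the nested inequalities $1\le j_1<j_2<\cdots<j_{k-1}<j_k$ are in bijection with such compositions via $c_i=j_i-j_{i-1}$, this bijection being exactly the standard correspondence between $k$-subsets of $\{1,\ldots,j_k-1\}$ (equivalently, the cut points) and compositions. Since the set of all compositions of $j_k$ into $k$ positive parts is manifestly closed under permuting the parts, $R$ is $S_k$-invariant, hence $U_\pi R U_\pi^{-1}=R$ for all $\pi$, and the two commutation identities follow by averaging as above.

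I do not anticipate a~genuine obstacle here; the only point requiring a~moment of care is the bookkeeping in the first paragraph — getting the direction of the conjugation $U_\pi(\bigotimes B_i)U_\pi^{-1}=\bigotimes B_{\pi^{-1}(i)}$ right and confirming that averaging $U_\pi R U_\pi^{-1}=R$ over $\pi$ (with or without the sign $\sgn\pi$) really delivers $A^{(k)}_{\cH_Y}R=RA^{(k)}_{\cH_Y}$ and $S^{(k)}_{\cH_Y}R=RS^{(k)}_{\cH_Y}$ rather than merely a~one-sided relation. For this last step I would use that $U_\pi^{-1}=U_{\pi^{-1}}$ together with $\sgn\pi=\sgn\pi^{-1}$, so that $R\,A^{(k)}_{\cH_Y}=\frac1{k!}\sum_\pi\sgn\pi\,R\,U_\pi=\frac1{k!}\sum_\pi\sgn\pi\,U_\pi(U_\pi^{-1}RU_\pi)=\frac1{k!}\sum_\pi\sgn\pi\,U_\pi\,R=A^{(k)}_{\cH_Y}R$, invoking $U_\pi^{-1}RU_\pi=R$; the bosonic case is identical with all signs set to $1$. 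The degenerate case $k=2$ (single summation index $j_1$) is covered verbatim, $R=\sum_{j_1=1}^{j_2-1}\rho^{j_1}\otimes\rho^{j_2-j_1}$ being visibly symmetric under the flip.
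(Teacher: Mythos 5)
Your proof is correct and follows essentially the same route as the paper's (which is only a sketch): establish that $R$ is invariant under all simultaneous permutations of its tensor factors, then obtain the commutation with $A^{(k)}_{\cH_Y}$ and $S^{(k)}_{\cH_Y}$ by the averaging formula. The only cosmetic difference is that the paper reduces the invariance check to transpositions of neighbouring factors (which generate $S_k$), whereas you verify full $S_k$-invariance directly via the bijection between the summation indices and compositions of $j_k$ into $k$ positive parts.
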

The proof of the above lemma consists in demonstrating
the invariance of $R$ under permutations of factors
in the tensor products.
To this end it suffices to observe that $R$ is invariant under
transpositions of neighbouring factors.

 \begin{lemma}{\label{pomoc}}
 Let $\rho\in\cT(\cH_{Y}),$
 $\xi^{\wedge}_{s} \eqdf\Tr\,\rho^{\wedge s},$
 $\xi^{\vee}_{s} \eqdf\Tr\,\rho^{\vee s}$
 for $s\in\N,$ $\xi^{\wedge}_{0} \eqdf 1,$
 $\xi^{\vee}_{0} \eqdf 1,$ and
 \begin{equation*}
 \Pi^{\wedge p}_{m}(\rho)
 \eqdf\sum_{j_{p}=p}^{m}\sum_{j_{p-1}=p-1}^{j_{p}-1}
 \ldots\sum_{j_{1}=1}^{j_{2}-1}
 {\xi}^{\wedge}_{m-j_{p}}(-1)^{p+j_{p}}\rho^{j_{1}}
 \wedge \rho^{j_{2}-j_{1}}\wedge\ldots\wedge \rho^{j_{p}-j_{p-1}},
 \end{equation*}
 \begin{equation*}
 \Pi^{\vee p}_{m}(\rho)
 \eqdf\sum_{j_{p}=p}^{m}\sum_{j_{p-1}=p-1}^{j_{p}-1}
 \ldots\sum_{j_{1}=1}^{j_{2}-1} {\xi}^{\vee}_{m-j_{p}}\rho^{j_{1}}
 \vee \rho^{j_{2}-j_{1}}\vee\cdots \vee \rho^{j_{p}-j_{p-1}}
 \end{equation*}
 for $p,m\in\N,$ $p\leq m.$ (For $p=1$ the only summation index is
 $j_{1}$ and the summation runs over the operators $\rho^{j_{1}}.$) If
 $2\leq p<m$ then
 \begin{equation}{\label{pomoc1}}
 \Pi^{\wedge p}_{m}(\rho)
 =\left(\Pi^{\wedge(p-1)}_{m-1}(\rho)\right)\wedge\rho
 -\left(\Pi^{\wedge p}_{m-1}(\rho)\right)
 (\Id^{\otimes(p-1)}\otimes \rho)A^{(p)}_{\cH_{Y}}
 \end{equation}
and
 \begin{equation}{\label{pomoc2}}
 \Pi^{\vee p}_{m}(\rho)
 =\left(\Pi^{\vee (p-1)}_{m-1}(\rho)\right)\vee \rho
 +\left(\Pi^{\vee p}_{m-1}(\rho)\right)
 (\Id^{\otimes(p-1)}\otimes \rho)S^{(p)}_{\cH_{Y}}.
 \end{equation}
 \end{lemma}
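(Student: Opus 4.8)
The plan is to prove both recurrences by a direct rearrangement of the nested sums defining $\Pi^{\wedge p}_{m}(\rho)$ and $\Pi^{\vee p}_{m}(\rho)$, the decisive device being that $A^{(p)}_{\cH_{Y}}$ (resp.\ $S^{(p)}_{\cH_{Y}}$) may be pulled across those sums by Lemma~\ref{komkontr}. I shall carry out~\eqref{pomoc1}; the proof of~\eqref{pomoc2} is entirely parallel, obtained by deleting all signs and replacing $\wedge$, $A^{(\cdot)}_{\cH_{Y}}$ by $\vee$, $S^{(\cdot)}_{\cH_{Y}}$ --- the one sign reversal that below converts a $+$ into the $-$ of~\eqref{pomoc1} simply not occurring.

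First I would introduce, for $q,s\in\N$ with $s\geq q$, the ``tensorial'' counterparts
\[
 R^{(q)}_{s}\eqdf\sum_{\substack{a_{1}+\cdots+a_{q}=s\\ a_{1},\dots,a_{q}\geq 1}}\rho^{a_{1}}\otimes\cdots\otimes\rho^{a_{q}}\in\cT\bigl(\cH_{Y}^{\otimes q}\bigr),\qquad
 \widetilde{R}^{(q)}_{m}\eqdf\sum_{s=q}^{m}(-1)^{q+s}\,\xi^{\wedge}_{m-s}\,R^{(q)}_{s},
\]
so that $R^{(1)}_{s}=\rho^{s}$ and, since $B_{1}\wedge\cdots\wedge B_{q}=A^{(q)}_{\cH_{Y}}(B_{1}\otimes\cdots\otimes B_{q})A^{(q)}_{\cH_{Y}}$, one has $\Pi^{\wedge p}_{r}(\rho)=A^{(p)}_{\cH_{Y}}\widetilde{R}^{(p)}_{r}A^{(p)}_{\cH_{Y}}$ for every $r\geq p$. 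After the change of variables $j_{\ell}=a_{1}+\cdots+a_{\ell}$, each $R^{(q)}_{s}$ is precisely the operator $R$ of Lemma~\ref{komkontr} (with $k=q$, $j_{k}=s$), whence $R^{(q)}_{s}$ --- and therefore $\widetilde{R}^{(q)}_{m}$ --- commutes with $A^{(q)}_{\cH_{Y}}$; by idempotency of $A^{(p)}_{\cH_{Y}}$ this upgrades to $\Pi^{\wedge p}_{r}(\rho)=A^{(p)}_{\cH_{Y}}\widetilde{R}^{(p)}_{r}=\widetilde{R}^{(p)}_{r}A^{(p)}_{\cH_{Y}}$. (For $p=2$ the relevant case $q=1$ of Lemma~\ref{komkontr} is vacuous, $A^{(1)}_{\cH_{Y}}=\Id$.)

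The heart of the argument is then to evaluate $\widetilde{R}^{(p)}_{m-1}\bigl(\Id^{\otimes(p-1)}\otimes\rho\bigr)$. Multiplying the last factor $\rho^{a_{p}}$ by $\rho$ raises $a_{p}$ to $a_{p}+1\geq 2$, so $R^{(p)}_{s}\bigl(\Id^{\otimes(p-1)}\otimes\rho\bigr)=R^{(p)}_{s+1}-R^{(p-1)}_{s}\otimes\rho$, the second term accounting for the compositions of $s+1$ into $p$ positive parts whose last part equals $1$. Substituting $s'=s+1$ (so $(-1)^{p+s}=-(-1)^{p+s'}$ and $\xi^{\wedge}_{(m-1)-s}=\xi^{\wedge}_{m-s'}$), observing that the $s'=p$ term vanishes because $R^{(p)}_{p}=R^{(p-1)}_{p-1}\otimes\rho=\rho^{\otimes p}$ so that the sum extends down to $s'=p$, and re-indexing $s''=s'-1$ in the part carrying $R^{(p-1)}_{s'-1}\otimes\rho$, I expect to arrive at
\[
 \widetilde{R}^{(p)}_{m-1}\bigl(\Id^{\otimes(p-1)}\otimes\rho\bigr)=-\widetilde{R}^{(p)}_{m}+\widetilde{R}^{(p-1)}_{m-1}\otimes\rho.
\]
Applying $A^{(p)}_{\cH_{Y}}$ on both sides and invoking $A^{(p)}_{\cH_{Y}}\widetilde{R}^{(p)}_{m-1}=\Pi^{\wedge p}_{m-1}(\rho)$, $A^{(p)}_{\cH_{Y}}\widetilde{R}^{(p)}_{m}A^{(p)}_{\cH_{Y}}=\Pi^{\wedge p}_{m}(\rho)$, and $A^{(p)}_{\cH_{Y}}\bigl(\widetilde{R}^{(p-1)}_{m-1}\otimes\rho\bigr)A^{(p)}_{\cH_{Y}}=\Pi^{\wedge(p-1)}_{m-1}(\rho)\wedge\rho$ --- this last one from $\Pi^{\wedge(p-1)}_{m-1}(\rho)=A^{(p-1)}_{\cH_{Y}}\widetilde{R}^{(p-1)}_{m-1}A^{(p-1)}_{\cH_{Y}}$ together with the elementary relation $A^{(p)}_{\cH_{Y}}\bigl(A^{(p-1)}_{\cH_{Y}}\otimes\Id\bigr)=A^{(p)}_{\cH_{Y}}$ and its adjoint --- gives $\Pi^{\wedge p}_{m-1}(\rho)\bigl(\Id^{\otimes(p-1)}\otimes\rho\bigr)A^{(p)}_{\cH_{Y}}=-\Pi^{\wedge p}_{m}(\rho)+\Pi^{\wedge(p-1)}_{m-1}(\rho)\wedge\rho$, which rearranges to~\eqref{pomoc1}.

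I expect the only genuine work to lie in the displayed identity for $\widetilde{R}^{(p)}_{m-1}\bigl(\Id^{\otimes(p-1)}\otimes\rho\bigr)$: keeping the sign $(-1)^{p+s}$ honest under $s\mapsto s+1$ (this is what delivers the minus sign of~\eqref{pomoc1}), handling the contraction of the range of the penultimate summation index (which forces the split producing the $\Pi^{\wedge(p-1)}_{m-1}(\rho)\wedge\rho$ term), and checking that the boundary summands added to realign ranges vanish. The operator-algebraic inputs --- the commutation furnished by Lemma~\ref{komkontr}, idempotency of the projectors, and $A^{(p)}_{\cH_{Y}}(A^{(p-1)}_{\cH_{Y}}\otimes\Id)=A^{(p)}_{\cH_{Y}}$ (which follows from $A^{(p)}_{\cH_{Y}}U_{\pi}=(\sgn\pi)A^{(p)}_{\cH_{Y}}$ summed over $\pi$ in the natural copy of $S_{p-1}$ inside $S_{p}$) --- are routine, and the symmetric case replaces all of them by their unsigned analogues.
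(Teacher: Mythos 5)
Your proof is correct and follows essentially the same route as the paper's: both split the defining sum according to whether the last exponent $j_p-j_{p-1}$ equals $1$ or exceeds $1$, and both invoke Lemma~\ref{komkontr} to move the projector past the sum. The only difference is presentational --- you run the bookkeeping on the unprojected tensor sums $\widetilde{R}^{(p)}_{m}$ and apply $A^{(p)}_{\cH_{Y}}$ at the end, whereas the paper reindexes the nested sums of wedge products directly; your identity $R^{(p)}_{s}\bigl(\Id^{\otimes(p-1)}\otimes\rho\bigr)=R^{(p)}_{s+1}-R^{(p-1)}_{s}\otimes\rho$ is precisely the paper's splitting of the penultimate summation into its diagonal and off-diagonal parts.
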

\begin{proof}
Eq.~\eqref{pomoc1} will be first proved for $p>2.$  One has
 \begin{align*}
 & \Pi^{\wedge p}_{m}(\rho)
 =\xi^{\wedge}_{m-p}\rho\wedge\ldots\wedge \rho
 +\sum_{l_{p}=p}^{m-1}\sum_{l_{p-1}=p-1}^{l_{p}}
 \sum_{l_{p-2}=p-2}^{l_{p-1}-1}\ldots \sum_{l_{1}=1}^{l_{2}-1} \\
 &\qquad {\xi}^{\wedge}_{m-l_{p}-1}(-1)^{p+l_{p}+1}\rho^{l_{1}}
 \wedge \rho^{l_{2}-l_{1}}\wedge\ldots\wedge \rho^{l_{p-1}-l_{p-2}}
 \wedge \rho^{l_{p}-l_{p-1}+1}
 \end{align*}
 \begin{align}{\label{pomoc4}}
 \nonumber & =\xi^{\wedge}_{m-p}\rho\wedge\ldots\wedge \rho
 +\sum_{l_{p}=p}^{m-1}\sum_{l_{p-1}=p-1}^{l_{p}-1}
 \sum_{l_{p-2}=p-2}^{l_{p-1}-1}\ldots \sum_{l_{1}=1}^{l_{2}-1} \\
 &\nonumber \qquad {\xi}^{\wedge}_{m-l_{p}-1}
 (-1)^{p+l_{p}+1}\rho^{l_{1}}
 \wedge \rho^{l_{2}-l_{1}}\wedge\ldots\wedge \rho^{l_{p-1}-l_{p-2}}
 \wedge \rho^{l_{p}-l_{p-1}+1} \\
 \nonumber &\quad +\sum_{l_{p}=p}^{m-1}
 \sum_{l_{p-2}=p-2}^{l_{p}-1}\sum_{l_{p-3}=p-3}^{l_{p-2}-1}
 \ldots \sum_{l_{1}=1}^{l_{2}-1} \\
 &\qquad {\xi}^{\wedge}_{m-l_{p}-1}(-1)^{p+l_{p}+1}\rho^{l_{1}}
 \wedge \rho^{l_{2}-l_{1}}\wedge\ldots\wedge \rho^{l_{p-2}-l_{p-3}}
 \wedge \rho^{l_{p}-l_{p-2}}\wedge \rho.
 \end{align}
The first and the third term after the last
of equalities~\eqref{pomoc4} yield
 \begin{align}{\label{pomoc5}}
 \nonumber & \sum_{j_{p-1}=p-1}^{m-1}
 \sum_{j_{p-2}=p-2}^{j_{p-1}-1}\ldots \sum_{j_{1}=1}^{j_{2}-1} \\
 &\quad \left({\xi}^{\wedge}_{(m-1)-j_{p-1}}
 (-1)^{(p-1)+j_{p-1}}\rho^{j_{1}}
 \wedge \rho^{j_{2}-j_{1}}\wedge
 \ldots\wedge \rho^{j_{p-1}-j_{p-2}}\right)\wedge \rho
 \end{align}
for $l_{p}\equiv j_{p-1},$
$l_{p-2}\equiv j_{p-2},\ldots,l_{1}\equiv j_{1}.$
By Lemma~\ref{komkontr}, the second term after the last
of equalities~\eqref{pomoc4} equals
 \begin{equation}{\label{pomoc6}}
 -\left(\Pi^{\wedge p}_{m-1}(\rho)\right)
 (\Id^{\otimes(p-1)}\otimes \rho)A^{(p)}_{\cH_{Y}}.
 \end{equation}
The sum of expressions~\eqref{pomoc5} and~\eqref{pomoc6}
is equal to the r.h.s. of Eq.~\eqref{pomoc1} for $p>2.$
After simplifications the proof also applies to the case of $p=2.$

The proof of Eq.~\eqref{pomoc2} is analogous to that of
Eq.~\eqref{pomoc1}.
\end{proof}

The next theorem provides the explicit form of
\mbox{$(n,k)$-con}tractions of product operators. The proof for
$k=1,2$ was given in~\cite{KossakowskiRMP86,MackowiakPR99}.
The author of~\cite{MackowiakPR99} emphasized that
formula~\eqref{jawny1th} for $k=2$ was derived by S. Pruski in 1978.

 \begin{theorem}[Explicit formulae]{\label{jawny}}
 Let $k,n\in\N,$ $k<n,$ $\rho\in\cT\left(\cH_{Y}\right),$
 $\xi^{\wedge}_{s} \eqdf\Tr\,\rho^{\wedge s},$
  $\xi^{\vee}_{s} \eqdf\Tr\,\rho^{\vee s}$
 for $s\in\N,$ and $\xi^{\wedge}_{0} \eqdf 1,$
 $\xi^{\vee}_{0} \eqdf 1.$ Then
 \begin{align}{\label{jawny1th}}
 &\nonumber  \newt{n}{k}\Lc^{k}_{n}\rho^{\wedge n} \\
 &\nonumber =\sum_{j_{k}=k}^{n}\sum_{j_{k-1}=k-1}^{j_{k}-1}
 \ldots\sum_{j_{1}=1}^{j_{2}-1}
 {\xi}^{\wedge}_{n-j_{k}}(-1)^{k+j_{k}}\rho^{j_{1}}
 \wedge \rho^{j_{2}-j_{1}}\wedge
 \ldots \wedge \rho^{j_{k}-j_{k-1}} \\
 \nonumber & =\sum_{i_{1}=1}^{n-(k-1)}
 \sum_{i_{2}=1}^{n-i_{1}-(k-2)}\ldots
 \sum_{i_{k-1}=1}^{n-i_{1}-\cdots-i_{k-2}-1}
 \sum_{i_{k}=1}^{n-i_{1}-\cdots-i_{k-1}} \\
 &\quad {\xi}^{\wedge}_{n-i_{1}-\cdots-i_{k}}
 (-1)^{k+i_{1}+\cdots+i_{k}}\rho^{i_{1}}\wedge\ldots\wedge \rho^{i_{k}}
 \end{align}
 and
 \begin{align}{\label{jawny2th}}
 \nonumber & \newt{n}{k}\Lc^{k}_{n}\rho^{\vee n} \\
 \nonumber & =\sum_{j_{k}=k}^{n}
 \sum_{j_{k-1}=k-1}^{j_{k}-1} \ldots\sum_{j_{1}=1}^{j_{2}-1}
 {\xi}^{\vee}_{n-j_{k}}\rho^{j_{1}}\vee \rho^{j_{2}-j_{1}}\vee
 \cdots \vee \rho^{j_{k}-j_{k-1}} \\
 &\nonumber =\sum_{i_{1}=1}^{n-(k-1)}\sum_{i_{2}=1}^{n-i_{1}-(k-2)}
 \ldots \sum_{i_{k-1}=1}^{n-i_{1}-\cdots-i_{k-2}-1}
 \sum_{i_{k}=1}^{n-i_{1}-\cdots-i_{k-1}} \\
 &\quad {\xi}^{\vee}_{n-i_{1}-\cdots-i_{k}}
 \rho^{i_{1}}\vee\cdots\vee \rho^{i_{k}}.
 \end{align}
 (For $k=1$ the only summation indices are $j_1$ and $i_1$ and the
 summation runs over the operators $\rho^{j_1}$ and $\rho^{i_1}$,
 respectively.)
 \end{theorem}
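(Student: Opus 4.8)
The plan is to observe that the first multiple sum on the right-hand side of~\eqref{jawny1th} is, by definition, the operator $\Pi^{\wedge k}_{n}(\rho)$ introduced in Lemma~\ref{pomoc}, and likewise the first sum in~\eqref{jawny2th} is $\Pi^{\vee k}_{n}(\rho)$. The theorem thus reduces to two independent claims: first, the operator identities
\[
\newt{n}{k}\Lc^{k}_{n}\rho^{\wedge n}=\Pi^{\wedge k}_{n}(\rho),
\qquad
\newt{n}{k}\Lc^{k}_{n}\rho^{\vee n}=\Pi^{\vee k}_{n}(\rho)
\]
for all $k<n$; second, a purely combinatorial re-indexing turning $\Pi^{\wedge k}_{n}(\rho)$ and $\Pi^{\vee k}_{n}(\rho)$ into the second sums in~\eqref{jawny1th} and~\eqref{jawny2th}.

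For the first claim I would use induction on $k$ and, for each fixed $k$, a secondary induction on $n$; only the fermionic case needs to be written out, the bosonic one being obtained verbatim by replacing $\wedge$ with $\vee$, $A^{(k)}_{\cH_{Y}}$ with $S^{(k)}_{\cH_{Y}}$, the recurrences~\eqref{rek1},~\eqref{rek3} by~\eqref{rek2},~\eqref{rek4}, and~\eqref{pomoc1} by~\eqref{pomoc2}, while dropping the signs. The base case $k=1$ follows from~\eqref{rek3}: using $\Lc^{1}_{1}\rho^{\wedge 1}=\rho=\Pi^{\wedge 1}_{1}(\rho)$ and the fact that right multiplication of $\Pi^{\wedge 1}_{n-1}(\rho)$ by $\rho$ merely shifts exponents, an induction on $n\geq 1$ yields $n\Lc^{1}_{n}\rho^{\wedge n}=\sum_{j=1}^{n}\xi^{\wedge}_{n-j}(-1)^{1+j}\rho^{j}=\Pi^{\wedge 1}_{n}(\rho)$.

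For the inductive step, fix $k\geq 2$ and assume $\newt{n}{k-1}\Lc^{k-1}_{n}\rho^{\wedge n}=\Pi^{\wedge(k-1)}_{n}(\rho)$ for every $n\geq k-1$; the case $n=k-1$ is just the convention $\Lc^{k-1}_{k-1}\rho^{\wedge(k-1)}=\rho^{\wedge(k-1)}=\Pi^{\wedge(k-1)}_{k-1}(\rho)$. I then prove $\newt{n}{k}\Lc^{k}_{n}\rho^{\wedge n}=\Pi^{\wedge k}_{n}(\rho)$ for all $n\geq k$ by induction on $n$, the base $n=k$ being again the convention $\rho^{\wedge k}=\Pi^{\wedge k}_{k}(\rho)$ (the single surviving term). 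For $n\geq k+1$, Theorem~\ref{rek} gives~\eqref{rek1}; its first term equals $\bigl(\Pi^{\wedge(k-1)}_{n-1}(\rho)\bigr)\wedge\rho$ by the outer hypothesis at $n-1$, and its second term equals $-\bigl(\Pi^{\wedge k}_{n-1}(\rho)\bigr)(\Id^{\otimes(k-1)}\otimes\rho)A^{(k)}_{\cH_{Y}}$ by the inner hypothesis at $n-1\geq k$. Their sum is exactly the right-hand side of~\eqref{pomoc1}, hence equals $\Pi^{\wedge k}_{n}(\rho)$ by Lemma~\ref{pomoc}, which closes both inductions. (Lemma~\ref{komkontr} enters only implicitly, through Lemma~\ref{pomoc}.)

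For the second claim I would substitute $i_{1}=j_{1}$ and $i_{\ell}=j_{\ell}-j_{\ell-1}$ for $2\leq\ell\leq k$ in the sum defining $\Pi^{\wedge k}_{n}(\rho)$, so that $j_{\ell}=i_{1}+\cdots+i_{\ell}$; then each operator $\rho^{j_{1}}\wedge\rho^{j_{2}-j_{1}}\wedge\cdots\wedge\rho^{j_{k}-j_{k-1}}$ becomes $\rho^{i_{1}}\wedge\cdots\wedge\rho^{i_{k}}$, the coefficient $\xi^{\wedge}_{n-j_{k}}(-1)^{k+j_{k}}$ becomes $\xi^{\wedge}_{n-i_{1}-\cdots-i_{k}}(-1)^{k+i_{1}+\cdots+i_{k}}$, and the nested constraints $1\leq j_{1}$, $j_{\ell-1}<j_{\ell}$, $j_{k}\leq n$ turn into $i_{\ell}\geq 1$ together with $i_{1}+\cdots+i_{\ell}\leq n-(k-\ell)$ for each $\ell$ --- precisely the summation ranges displayed in~\eqref{jawny1th}, and identically for~\eqref{jawny2th}. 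I expect the only delicate point to be the bookkeeping of the degenerate cases ($n=k$, where a contraction is the identity by convention, and $n=k+1$, where $\Lc^{k}_{n-1}\rho^{\wedge(n-1)}=\rho^{\wedge k}$) and the verification that~\eqref{rek1},~\eqref{rek2} --- which hold for $n=k+1$ as well, per the proof of Theorem~\ref{rek} --- dovetail with Lemma~\ref{pomoc} over matching index ranges; the rest is mechanical.
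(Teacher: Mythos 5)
Your proposal is correct and follows essentially the same route as the paper: both reduce the first equality to the identity $\newt{n}{k}\Lc^{k}_{n}\rho^{\wedge n}=\Pi^{\wedge k}_{n}(\rho)$, prove it by a double induction on $k$ and $n$ that combines the recurrence of Theorem~\ref{rek} with the recurrence~\eqref{pomoc1}--\eqref{pomoc2} of Lemma~\ref{pomoc}, and obtain the second equality by the substitution $j_{\ell}=i_{1}+\cdots+i_{\ell}$. The only (immaterial) differences are that you anchor the inner induction at $n=k$ via the convention $\Lc^{k}_{k}K=K$ where the paper treats $n=k+1$ as a separate base case, and that you run the identification in the direction contraction $\to$ $\Pi$ rather than $\Pi$ $\to$ contraction.
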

\begin{proof} For every $p,m\in\N,$ $p\leq m,$ let
 $\Pi^{\wedge p}_{m}(\rho)$
be defined as in Lemma~\ref{pomoc}. Then the first of
equalities~\eqref{jawny1th} can be written as
 \begin{equation}{\label{jawny1}}
 \newt{n}{k}\Lc^{k}_{n}\rho^{\wedge n}
 =\Pi^{\wedge k}_{n}(\rho).
 \end{equation}
The proof of Eq.~\eqref{jawny1} will be carried out by (double)
induction with respect to $k$ and, for fixed $k,$ with respect to
$n>k.$
\newline
$1^{\circ}.$ ($k=1$) This part of the proof is by
induction with respect to $n>1.$
\newline
a) ($n=2$) According to Theorem~\ref{rek},
 $2\Lc^{1}_{2}\rho^{\wedge 2} =\left(\Tr\,\rho\right)\rho-\rho^{2}
 =\Pi^{\wedge 1}_{2}(\rho).$
\newline
b) Assuming validity of formula~\eqref{jawny1} (with $k=1)$ for
 $n\in\set{2,\ldots,m-1},$
where $m\in\N,$ $m>2,$ its validity will be shown for $n=m.$

One has
 \begin{equation*}
 \Pi^{\wedge 1}_{m}(\rho)
 =\xi^{\wedge}_{m-1}\rho
 +\sum_{j_{1}=2}^{m}\xi^{\wedge}_{m-j_{1}}
 (-1)^{1+j_{1}}\rho^{j_{1}}
 =\xi^{\wedge}_{m-1}\rho
 -\left(\Pi^{\wedge 1}_{m-1}(\rho)\right)\rho.
 \end{equation*}
Thus, according to the inductive hypothesis for
$n\in\set{2,\ldots,m-1},$
 \begin{equation*}
 \Pi^{\wedge 1}_{m}(\rho)
 =\xi^{\wedge}_{m-1}\rho
 -(m-1)\left(\Lc^{1}_{m-1}\rho^{\wedge(m-1)}\right)\rho,
 \end{equation*}
which, in view of Theorem~\ref{rek}, yields
 $\newt{m}{1}\Lc^{1}_{m}\rho^{\wedge m}
 =\Pi^{\wedge 1}_{m}(\rho).$
\newline
$2^{\circ}.$ Assuming validity of formula~\eqref{jawny1} for
$k\in\set{1,\ldots,p-1}$ (and every $n>k),$ where $p\in\N,$ $p>1,$
its validity will be shown for $k=p.$ For arbitrarily fixed $p$ the
proof will be carried out by induction with respect to $n>p.$
\newline
a) ($n=p+1$) By the inductive hypothesis with respect to $k$ and
Lemma~\ref{pomoc},
 \begin{equation*}
 \Pi^{\wedge p}_{p+1}(\rho)
 =\newt{(p+1)-1}{p-1}
 \left(\Lc^{p-1}_{(p+1)-1}\rho^{\wedge((p+1)-1)}\right)\wedge \rho
 -\rho^{\wedge p}(\Id^{\otimes(p-1)}\otimes \rho)A^{(p)}_{\cH_{Y}},
 \end{equation*}
hence
 $\displaystyle \newt{p+1}{p}\Lc^{p}_{p+1}\rho^{\wedge (p+1)}
 =\Pi^{\wedge p}_{p+1}(\rho),$
according to Theorem~\ref{rek}.
\newline
b) Assuming validity of formula~\eqref{jawny1} for
$n\in\set{p+1,\ldots,m-1},$ where $k=p,$ $m\in\N,$ $m>p+1,$ its
validity will be shown for $n=m.$

By the inductive hypothesis for $k\in\set{1,\ldots,p-1}$
and Lemma~\ref{pomoc} one has
 \begin{equation*}
 \Pi^{\wedge p}_{m}(\rho)
 =\newt{m-1}{p-1}
 \left(\Lc^{p-1}_{m-1}\rho^{\wedge(m-1)}\right)\wedge \rho
 -\left(\Pi^{\wedge p}_{m-1}(\rho)\right)
 (\Id^{\otimes(p-1)}\otimes \rho)A^{(p)}_{\cH_{Y}}.
 \end{equation*}
According to the inductive hypothesis for
$n\in\set{p+1,\ldots,m-1}$ one thus obtains
 \begin{multline*}
 \Pi^{\wedge p}_{m}(\rho)
 =\newt{m-1}{p-1}\left(\Lc^{p-1}_{m-1}\rho^{\wedge(m-1)}\right)
 \wedge \rho \\
 -\newt{m-1}{p}\left(\Lc^{p}_{m-1}\rho^{\wedge(m-1)}\right)
 \left(\Id^{\otimes(p-1)}\otimes \rho\right)A^{(p)}_{\cH_{Y}},
 \end{multline*}
which, in view of Theorem~\ref{rek}, yields
 $\newt{m}{p}\Lc^{p}_{m}\rho^{\wedge m}
 =\Pi^{\wedge p}_{m}(\rho).$
This completes the inductive proof for Eq.~\eqref{jawny1} with
respect to $n>p$ and with respect to $k.$

Now turn to the second of equalities~\eqref{jawny1th}.
For $k=1$ it is identity. Let $k\geq 2.$ Setting
 $j_{1}=i_{1},$ $j_{2}=i_{1}+i_{2},$
 \ldots, $j_{k}=i_{1}+\cdots+i_{k}$
or, equivalently,
 $i_{1}=j_{1},$ $i_{2}=j_{2}-j_{1},$ $i_{3}=j_{3}-j_{2},$
 \ldots, $i_{k}=j_{k}-j_{k-1},$
one checks that both sides of the equality in question are
equal to
 \begin{multline*}
 \sum_{j_{1}=1}^{n-(k-1)}\sum_{j_{2}=j_{1}+1}^{n-(k-2)} \ldots
 \sum_{j_{k-1}=j_{k-2}+1}^{n-1}\sum_{j_{k}=j_{k-1}+1}^{n} \\
 {\xi}^{\wedge}_{n-j_{k}}(-1)^{k+j_{k}}\rho^{j_{1}}
 \wedge \rho^{j_{2}-j_{1}}\wedge\ldots \wedge \rho^{j_{k}-j_{k-1}}.
 \end{multline*}

The proof of Eq.~\eqref{jawny2th} is analogous to that of
Eq.~\eqref{jawny1th}.
\end{proof}

\section{Asymptotic form for contractions of product
states}{\label{rozas}}

The explicit forms of the contractions of product states given by
Theorem~\ref{jawny} are quite complex. In the present section they
are replaced by simpler operators, equivalent in the thermodynamic
limit. The main results in this section are Theorems~\ref{glowne}
and~\ref{zmiana}.

In what follows use is made of the Hilbert space
 $\cH_{\Omega}\eqdf L^{2}(\Omega,\mu)$
(over $\C$ or $\R$), where the measure $\mu$ is separable,
\mbox{$\sigma$-fi}nite, and satisfies the condition
$\mu(\Omega)=+\infty.$ For every \mbox{$\mu$-measu}rable subset
$Y\subset\Omega$ it is assumed $\cH_{Y}\eqdf L^{2}(Y,\mu).$

Let $\cM(\Omega)$ be a~fixed family of measurable subsets of
$\Omega$ such that $0<\mu(Y)<+\infty$ for every $Y\in\cM(\Omega)$
(it can be the family of all such subsets). Fix $d\in\R,$ $d>0,$
and assume that there exists a~sequence
 $\set{Y_{n}}_{n\in\N} \subset\cM(\Omega)$
such that $\frac{n}{\mu(Y_{n})}\to d$ as $n\to\infty.$

 \begin{definition}{\label{granica}} Fix $d\in\R,$ $d>0,$ and let
  $\set{b_{Y,n}}_{(Y,n)\in\cM(\Omega)\times\N}$
 be a~family of complex numbers. A~complex number $b$ is said to be
 the \emph{thermodynamic limit} of this family if for every sequence
  $\set{Y_{n}}_{n\in\N}\subset\cM(\Omega)$
 such that
  $\displaystyle\lim_{n\to\infty}\frac{n}{\mu(Y_{n})}=d$
 the condition
  $\displaystyle \lim_{n\to\infty}b_{Y_{n},n}=b$
 is fulfilled. In such a~case $b$ is denoted by
  $\displaystyle\limt{n}{Y}b_{Y,n}.$
 \end{definition}

Special attention will be given to the families of
complex numbers of the form
 $\Tr\,(\Lc^{k}_{n}K_{Y,n})C_{Y},$
where $k,n\in\N,$ $n>k,$
 $K_{Y,n}\in\cT(\cH_{Y}^{\otimes n}),$
and
 $C_{Y}\in\cB(\cH_{Y}^{\otimes k}).$

Definition~\ref{granica} does not guarantee the convergence
of families $\set{b_{Y,n}}$ of interest in physics. To obtain such
a~convergence, additional conditions (such as conditions of uniform
growth~\cite{RuelleB69}) are usually imposed  on the sequence
$\set{Y_n}_{n\in\N}$ in question. However, those additional
conditions do not affect considerations in this paper.

Expression of expectation values of observables in mixed states
by using trace, mentioned in Introduction, is the motivation for the
following definition.

 \begin{definition}{\label{relacja}}
 Fix $k\in\N$ and $d\in\R,$ $d>0.$ Families $\rodz{A}{Y}{n}$
 and $\rodz{B}{Y}{n}$ of operators
  $A_{Y,n},B_{Y,n}\in\cT(\cH_{Y}^{\otimes k})$
 are said to be \emph{asymptotically equivalent} (symbolically:
  $A_{Y,n}\approx B_{Y,n}),$
 if for every family $\rodz{C}{Y}{n}$ of operators
  $C_{Y,n}\in\cB(\cH_{Y}^{\otimes k})$
 with uniformly bounded operator norms one has
 \begin{equation}{\label{relacja1}}
 \limt{n}{Y}\Tr\,(A_{Y,n}-B_{Y,n})C_{Y,n}=0.
 \end{equation}
 \end{definition}

Condition~\eqref{relacja1} is required to hold
in particular for families $\rodz{C}{Y}{n}$ such that
$C_{Y,n}=C_{Y,m}$
for all $Y\in\cM(\Omega),$ $n,m\in\N.$

 \begin{remark}{\label{defasdiff}}
 The authors of~\cite{KossakowskiRMP86,MackowiakPR99}
 used some different definition
 of asymptotic equivalence of families of operators, closer to
 Definition~\ref{rel} in this paper.
 \end{remark}

 \begin{remark}{\label{strongprop}}
 For fixed $k\in\N$ and $d\in\R,$ $d>0,$ the relation
 $\approx$ is an equivalence relation. If $A_{Y,n}\approx B_{Y,n}$
 then for every family of operators $C_{Y,n}$ as in
 Definition~\ref{relacja} the limit
 $\displaystyle\limt{n}{Y}\Tr\,A_{Y,n}C_{Y,n}$ exists iff the limit
 $\displaystyle\limt{n}{Y}\Tr\,B_{Y,n}C_{Y,n}$ exists, in which
 case both limits are equal. Notice also that if
  $A_{Y,n}\approx B_{Y,n}$
 then
  $A_{Y,n}+D_{Y,n} \approx B_{Y,n}+D_{Y,n}$
 and
  $aA_{Y,n}\approx aB_{Y,n}$
 for every family
  $\rodz{D}{Y}{n}\subset \cT(\cH_{Y}^{\otimes k})$
 and $a\in\C.$ Furthermore, for every family
  $\rodz{A}{Y}{n}\subset \cT\left(\cH_{Y}^{\otimes k}\right)$
 with uniformly bounded trace norms $\Tr\,\abs{A_{Y,n}}$ and for
 every sequence
  $\set{a_{n}}_{n\in\N}\subset\C$
 convergent to $a\in\C$ one has
  $a_{n}A_{Y,n}\approx aA_{Y,n}.$
 \end{remark}

 \begin{lemma}{\label{mocna}}
 Let $\rodz{A}{Y}{n}$ and $\rodz{B}{Y}{n}$ be as in
 Definition~\ref{relacja}. Then
 \begin{equation}{\label{mocna1}}
 \limt{n}{Y}\abs{A_{Y,n}-B_{Y,n}}=0
 \quad
 \Rightarrow \quad A_{Y,n}\approx B_{Y,n}.
 \end{equation}
 Moreover, if the operators $A_{Y,n},$ $B_{Y,n}$ are selfadjoint
 then
 \begin{equation}{\label{mocna2}}
 A_{Y,n} \approx B_{Y,n} \quad \Rightarrow
 \quad
 \limt{n}{Y}\abs{A_{Y,n}-B_{Y,n}}=0.
 \end{equation}
 \end{lemma}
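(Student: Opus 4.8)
The plan is to prove the two implications separately, with~\eqref{mocna1} being essentially a direct estimate and~\eqref{mocna2} requiring a contrapositive argument that exploits selfadjointness. For~\eqref{mocna1}, I would start from the elementary trace inequality
\begin{equation*}
\abs{\Tr\,(A_{Y,n}-B_{Y,n})C_{Y,n}} \leq \norm{C_{Y,n}}\cdot\Tr\,\abs{A_{Y,n}-B_{Y,n}},
\end{equation*}
valid because $A_{Y,n}-B_{Y,n}\in\cT(\cH_Y^{\otimes k})$ and $C_{Y,n}\in\cB(\cH_Y^{\otimes k})$, together with the assumption that the norms $\norm{C_{Y,n}}$ are uniformly bounded, say by $M$. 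Given any sequence $\set{Y_n}_{n\in\N}\subset\cM(\Omega)$ with $n/\mu(Y_n)\to d$, the hypothesis $\limt{n}{Y}\Tr\,\abs{A_{Y,n}-B_{Y,n}}=0$ means $\Tr\,\abs{A_{Y_n,n}-B_{Y_n,n}}\to 0$, so the right-hand side above (with $Y=Y_n$) tends to $0$, forcing $\Tr\,(A_{Y_n,n}-B_{Y_n,n})C_{Y_n,n}\to 0$. Since the sequence was arbitrary, \eqref{relacja1} holds, i.e.\ $A_{Y,n}\approx B_{Y,n}$.

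For~\eqref{mocna2} I would argue by contraposition: assume $\limt{n}{Y}\Tr\,\abs{A_{Y,n}-B_{Y,n}}\neq 0$ and construct a family $\set{C_{Y,n}}$ with uniformly bounded norms violating~\eqref{relacja1}. By Definition~\ref{granica}, the failure of the thermodynamic limit to be $0$ means there is a sequence $\set{Y_n}_{n\in\N}\subset\cM(\Omega)$ with $n/\mu(Y_n)\to d$ along which $\Tr\,\abs{A_{Y_n,n}-B_{Y_n,n}}$ does not go to $0$; passing to a subsequence, one may assume $\Tr\,\abs{A_{Y_n,n}-B_{Y_n,n}}\geq\varepsilon$ for some $\varepsilon>0$ and all $n$. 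Now set $D_{Y_n,n}\eqdf A_{Y_n,n}-B_{Y_n,n}$, which is selfadjoint and trace class. The key point is that for a selfadjoint trace class operator $D$ one can choose $C\in\cB^{\ast}(\cH)$ with $\norm{C}\leq 1$ and $\Tr\,CD=\Tr\,\abs{D}$: indeed, take $C=\sgn(D)$, the difference of the spectral projections of $D$ onto $(0,\infty)$ and $(-\infty,0)$; this is bounded with norm at most $1$ and satisfies $CD=\abs{D}$. Defining $C_{Y_n,n}\eqdf\sgn(D_{Y_n,n})$ for the $Y$'s in the chosen subsequence and extending arbitrarily (say by $0$) to the rest of $\cM(\Omega)\times\N$ gives a family with $\norm{C_{Y,n}}\leq 1$ for which $\Tr\,(A_{Y_n,n}-B_{Y_n,n})C_{Y_n,n}=\Tr\,\abs{D_{Y_n,n}}\geq\varepsilon$, so this subsequence does not converge to $0$; hence $\limt{n}{Y}\Tr\,(A_{Y,n}-B_{Y,n})C_{Y,n}$ is not $0$ (it either fails to exist or is nonzero), which contradicts $A_{Y,n}\approx B_{Y,n}$.

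The main obstacle is the construction in~\eqref{mocna2}: one has to pass from a single sequence $\set{Y_n}$ witnessing the failure of convergence to a genuine \emph{family} $\set{C_{Y,n}}_{(Y,n)\in\cM(\Omega)\times\N}$ of the type required by Definition~\ref{relacja}, and then argue that the existence of one bad sequence already prevents the thermodynamic limit in~\eqref{relacja1} from being $0$. This is where the precise wording of Definition~\ref{granica} matters (the thermodynamic limit being $b$ requires convergence along \emph{every} admissible sequence), so a single subsequence along which $\Tr\,(A_{Y_n,n}-B_{Y_n,n})C_{Y_n,n}\geq\varepsilon$ suffices to conclude. The selfadjointness hypothesis is essential precisely because without it there need not exist a bounded operator $C$ with $\Tr\,CD$ close to $\Tr\,\abs{D}$ while keeping $\norm{C}$ controlled in a way compatible with the polar decomposition living in $\cB(\cH)$; for selfadjoint $D$ the signum operator does the job cleanly.
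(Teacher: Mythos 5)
Your proof is correct, and the first implication is argued exactly as in the paper (the estimate $\abs{\Tr\,(A_{Y,n}-B_{Y,n})C_{Y,n}}\leq\norm{C_{Y,n}}\,\Tr\,\abs{A_{Y,n}-B_{Y,n}}$). For \eqref{mocna2} the key idea is also the same --- use the sign of $D_{Y,n}=A_{Y,n}-B_{Y,n}$ as the test family --- but the execution differs in two respects. You argue by contraposition, extracting one admissible sequence and a subsequence along which $\Tr\,\abs{D_{Y_n,n}}\geq\varepsilon$, and you apply the full signum operator $\sgn(D_{Y,n})$ (well-defined by the functional calculus, with $\norm{\sgn(D_{Y,n})}\leq 1$ and $\sgn(D_{Y,n})D_{Y,n}=\abs{D_{Y,n}}$). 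The paper instead argues directly: it truncates the spectral decomposition of $D_{Y,n}$ to a finite-rank operator $F_{Y,n}$ with $\Tr\,\abs{D_{Y,n}-F_{Y,n}}<\tfrac{1}{n}$, tests against $C_{Y,n}=\sum_{i\leq m(Y,n)}\sgn(\lambda_i(Y,n))P_{\varphi_i(Y,n)}$ to get $\limt{n}{Y}\Tr\,\abs{F_{Y,n}}=0$, and recovers the claim from the triangle inequality for the trace norm. Your route is slightly more economical (indeed, one could even skip the contraposition entirely and set $C_{Y,n}=\sgn(D_{Y,n})$ for \emph{all} $(Y,n)$, which immediately gives $\limt{n}{Y}\Tr\,\abs{D_{Y,n}}=0$ from the definition of $\approx$); the paper's truncation buys only the comfort of working with finite spectral sums. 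Both arguments are valid, and you correctly identify where selfadjointness enters and why a single bad admissible sequence suffices under Definition~\ref{granica}.
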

\begin{proof}
Implication \eqref{mocna1} follows from Definition~\ref{relacja}
and the estimate
 \begin{equation*}
 \abs{\Tr\,(A_{Y,n}-B_{Y,n})C_{Y,n}}
 \leq\norm{C_{Y,n}}\,\Tr\,\abs{A_{Y,n}-B_{Y,n}}.
 \end{equation*}

Now assume that $A_{Y,n}\approx B_{Y,n},$ which is equivalent
to the condition
 \begin{equation}{\label{mocna5}}
 D_{Y,n}\approx 0,
 \end{equation}
where
 $D_{Y,n} \eqdf A_{Y,n}-B_{Y,n}.$
The operators $D_{Y,n}$ have the spectral representations
 \begin{equation*}
 D_{Y,n} =\sum_{i=1}^{\infty}\lambda_{i}(Y,n)P_{\varphi_{i}(Y,n)},
 \end{equation*}
where $P_{\varphi_{i}(Y,n)}$ are the projectors onto orthogonal one
dimensional subspaces of eigenvectors $\varphi_{i}(Y,n)$ of
$D_{Y,n},$ corresponding to eigenvalues
 $\lambda_{i}(Y,n)\in\R.$
Since
 $\displaystyle \sum_{i=1}^{\infty}\abs{\lambda_{i}(Y,n)}
 =\Tr\,\abs{D_{Y,n}}<+\infty,$
for every
 $(Y,n)\in\cM(\Omega)\times\N$
there exists $m(Y,n)\in\N$ such that
 $\displaystyle\sum_{i=m(Y,n)+1}^{\infty}
 \abs{\lambda_{i}(Y,n)}<\frac{1}{n}.$
Thus the operators
 \begin{equation*}
 F_{Y,n} =\sum_{i=1}^{m(Y,n)}\lambda_{i}(Y,n)P_{\varphi_{i}(Y,n)}
 \end{equation*}
satisfy the condition
 \begin{equation}{\label{mocna7}}
 \limt{n}{Y}\Tr\,\abs{D_{Y,n}-F_{Y,n}}
 =\limt{n}{Y}\sum_{i=m(Y,n)+1}^{\infty}\abs{\lambda_{i}(Y,n)}=0,
 \end{equation}
which, in view of implication~\eqref{mocna1} proved and
condition~\eqref{mocna5}, yields $F_{Y,n}\approx D_{Y,n}\approx 0.$
In particular,
 \begin{equation}{\label{mocna6}}
 \limt{n}{Y}\Tr\,F_{Y,n}C_{Y,n}=0,
 \end{equation}
where
 \begin{equation*}
 C_{Y,n} =\sum_{i=1}^{m(Y,n)}
 \sgn\left(\lambda_{i}(Y,n)\right)P_{\varphi_{i}(Y,n)},
 \qquad \norm{C_{Y,n}}=1.
 \end{equation*}
Observe that
 $\Tr\,F_{Y,n}C_{Y,n}
 =\Tr\,\abs{F_{Y,n}},$
hence condition~\eqref{mocna6} gives
 \begin{equation}{\label{mocnad}}
 \limt{n}{Y}\Tr\,\abs{F_{Y,n}}=0.
 \end{equation}
Since
 $\Tr\,\abs{D_{Y,n}}
 \leq\Tr\,\abs{D_{Y,n}-F_{Y,n}} +\Tr\,\abs{F_{Y,n}},$
conditions~\eqref{mocna7} and~\eqref{mocnad} yield
 \begin{equation*}
 \limt{n}{Y}\Tr\,\abs{A_{Y,n}-B_{Y,n}}
 \equiv\limt{n}{Y}\Tr\,\abs{D_{Y,n}}=0,
 \end{equation*}
which proves implication~\eqref{mocna2}.
\end{proof}

The following lemma follows from Lemma~\ref{mocna}.

 \begin{lemma}{\label{iloczyny}}
 Fix $k,m\in\N.$ Let $\rodz{A}{Y}{n}$ and $\rodz{B}{Y}{n}$ be
 families of selfadjoint operators
 $A_{Y,n},B_{Y,n}\in\cT(\cH_{Y}^{\otimes k})$
such that
 $A_{Y,n}\approx B_{Y,n},$
and let $\rodz{D}{Y}{n}$ be a~family of operators
 $D_{Y,n}\in\cT(\cH_{Y}^{\otimes m})$ with uniformly bounded trace
 norms $\Tr\,\abs{D_{Y,n}}.$ Then
 \begin{equation*}
 A_{Y,n}\otimes D_{Y,n}\approx B_{Y,n}\otimes D_{Y,n},
 \quad
 D_{Y,n}\otimes A_{Y,n}\approx D_{Y,n}\otimes B_{Y,n},
 \end{equation*}
 \begin{equation*}
 A_{Y,n}\wedge D_{Y,n}\approx B_{Y,n}\wedge D_{Y,n},
 \quad
 D_{Y,n}\wedge A_{Y,n}\approx D_{Y,n}\wedge B_{Y,n},
 \end{equation*}
 \begin{equation*}
 A_{Y,n}\vee D_{Y,n}\approx B_{Y,n}\vee D_{Y,n},
 \quad
 D_{Y,n}\vee A_{Y,n}\approx D_{Y,n}\vee B_{Y,n}.
 \end{equation*}
 \end{lemma}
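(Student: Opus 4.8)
The plan is to reduce all six assertions to Lemma~\ref{mocna} together with the submultiplicativity of the trace norm. First I would record the elementary estimate that for any $R \in \cT(\cH_Y^{\otimes k})$ and $S \in \cT(\cH_Y^{\otimes m})$ one has $\Tr\,\abs{R \otimes S} = \Tr\,\abs{R}\cdot\Tr\,\abs{S}$, so that in particular $\Tr\,\abs{(A_{Y,n}-B_{Y,n})\otimes D_{Y,n}} = \Tr\,\abs{A_{Y,n}-B_{Y,n}}\cdot\Tr\,\abs{D_{Y,n}}$. Since $A_{Y,n}$ and $B_{Y,n}$ are selfadjoint with $A_{Y,n}\approx B_{Y,n}$, implication~\eqref{mocna2} of Lemma~\ref{mocna} gives $\limt{n}{Y}\Tr\,\abs{A_{Y,n}-B_{Y,n}}=0$; multiplying by the uniformly bounded factors $\Tr\,\abs{D_{Y,n}}$ keeps the thermodynamic limit equal to $0$, and then implication~\eqref{mocna1} of Lemma~\ref{mocna} yields $A_{Y,n}\otimes D_{Y,n}\approx B_{Y,n}\otimes D_{Y,n}$ (note that we do not need selfadjointness of $A_{Y,n}\otimes D_{Y,n}$ here, since \eqref{mocna1} has no such hypothesis). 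The case $D_{Y,n}\otimes A_{Y,n}\approx D_{Y,n}\otimes B_{Y,n}$ is identical with the factors in the other order.

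For the antisymmetric and symmetric products I would use that $A_{Y,n}\wedge D_{Y,n} = A^{(k+m)}_{\cH_Y}(A_{Y,n}\otimes D_{Y,n})A^{(k+m)}_{\cH_Y}$ and likewise for $\vee$. Since the projectors $A^{(k+m)}_{\cH_Y}$, $S^{(k+m)}_{\cH_Y}$ have operator norm $1$, the general bound $\Tr\,\abs{PTQ}\le\norm{P}\,\norm{Q}\,\Tr\,\abs{T}$ for bounded $P,Q$ and trace class $T$ gives
\begin{equation*}
\Tr\,\abs{(A_{Y,n}-B_{Y,n})\wedge D_{Y,n}}
\le \Tr\,\abs{(A_{Y,n}-B_{Y,n})\otimes D_{Y,n}}
= \Tr\,\abs{A_{Y,n}-B_{Y,n}}\cdot\Tr\,\abs{D_{Y,n}},
\end{equation*}
and the right-hand side tends to $0$ in the thermodynamic limit exactly as above. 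Applying \eqref{mocna1} of Lemma~\ref{mocna} once more delivers $A_{Y,n}\wedge D_{Y,n}\approx B_{Y,n}\wedge D_{Y,n}$, and the remaining three statements (the $\wedge$-product with the factors reversed, and both $\vee$-products) follow by the same computation, replacing $A^{(k+m)}_{\cH_Y}$ by $S^{(k+m)}_{\cH_Y}$ where needed.

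I do not expect a genuine obstacle here; the only point requiring a little care is that $A_{Y,n}\otimes D_{Y,n}$ need not be selfadjoint (the family $\rodz{D}{Y}{n}$ is not assumed selfadjoint), so one must invoke the direction~\eqref{mocna1} of Lemma~\ref{mocna}, which is exactly the direction with no selfadjointness hypothesis; the selfadjointness of $A_{Y,n}$ and $B_{Y,n}$ is used only to pass from $A_{Y,n}\approx B_{Y,n}$ back to $\limt{n}{Y}\Tr\,\abs{A_{Y,n}-B_{Y,n}}=0$ via~\eqref{mocna2}. Everything else is the submultiplicativity of $\Tr\,\abs{\cdot}$ under tensor products and its contractivity under multiplication by norm-one projectors.
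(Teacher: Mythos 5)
Your proposal is correct and follows exactly the route the paper intends: the paper gives no written proof of Lemma~\ref{iloczyny} beyond the remark that it ``follows from Lemma~\ref{mocna},'' and your argument---passing from $A_{Y,n}\approx B_{Y,n}$ to $\limt{n}{Y}\Tr\,\abs{A_{Y,n}-B_{Y,n}}=0$ via implication~\eqref{mocna2} (where selfadjointness is needed), using $\Tr\,\abs{R\otimes S}=\Tr\,\abs{R}\cdot\Tr\,\abs{S}$ together with the uniform bound on $\Tr\,\abs{D_{Y,n}}$ and the norm-one projectors $A^{(k+m)}_{\cH_Y}$, $S^{(k+m)}_{\cH_Y}$, and returning via implication~\eqref{mocna1} (which needs no selfadjointness)---is precisely the intended fleshing-out. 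Your closing remark correctly identifies the one point of care, namely that $A_{Y,n}\otimes D_{Y,n}$ need not be selfadjoint, so only the direction~\eqref{mocna1} may be applied to the product families.
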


In the sequel $\grup{\rho}{Y}$ denotes a~family of
nonnegative definite selfadjoint operators
 $\rho_{Y}\in\cT\left(\cH_{Y}\right),$ and for every
 $(Y,n)\in\cM(\Omega)\times\N$
it is assumed that
 \begin{equation*}
 \xi^{\wedge}_{Y,0} \eqdf 1, \quad  \xi^{\vee}_{Y,0} \eqdf 1,
 \qquad \rho_{Y}^{\wedge 1} \eqdf \rho_{Y},
 \quad \rho_{Y}^{\vee 1} \eqdf \rho_{Y},
 \end{equation*}
 \begin{equation*}
 \xi^{\wedge}_{Y,n} \eqdf  \Tr\,\rho_{Y}^{\wedge n}>0,
 \qquad
 \xi^{\vee}_{Y,n} \eqdf  \Tr\,\rho_{Y}^{\vee n}>0,
 \end{equation*}
 \begin{equation*}
 s^{\wedge}_{Y,n}
 \eqdf \frac{\xi^{\wedge}_{Y,n-1}}{\xi^{\wedge}_{Y,n}},
 \qquad
 s^{\vee}_{Y,n} \eqdf \frac{\xi^{\vee}_{Y,n-1}}{\xi^{\vee}_{Y,n}}.
 \end{equation*}

The objective of this section is to find density operators of the
most simple form which are asymptotically equivalent to the
operators
 \begin{equation*}
 \stackrel{\wedge}{\sigma}^{(k)}_{Y,n} \eqdf\Lc^{k}_{n}
 \left(\frac{1}{\xi^{\wedge}_{Y,n}}\rho_{Y}^{\wedge n}\right),
 \quad
 \stackrel{\vee}{\sigma}^{(k)}_{Y,n} \eqdf\Lc^{k}_{n}
 \left(\frac{1}{\xi^{\vee}_{Y,n}}\rho_{Y}^{\vee n}\right),
 \end{equation*}
defined for fixed $k\in\N$ and every
 $(Y,n)\in\cM(\Omega)\times\N,$ $n>k.$

 \begin{remark}{\label{odwr}}
 For every $(Y,n)\in\cM(\Omega)\times\N$ the operator
  $\Id+s^{\wedge}_{Y,n+1}\rho_{Y}$
 is invertible and
  $\norm{(\Id+s^{\wedge}_{Y,n+1}\rho_{Y})^{-1}}=1.$
 Furthermore, if
  $s^{\vee}_{Y,n+1}\norm{\rho_{Y}}<1$
 then
  $\Id-s^{\vee}_{Y,n+1}\rho_{Y}$
 is invertible and
  $\displaystyle \norm{(\Id-s^{\vee}_{Y,n+1}\rho_{Y})^{-1}}
  =(1-s^{\vee}_{Y,n+1}\norm{\rho_{Y}})^{-1}.$
 \end{remark}

The next theorem is a~version of a~theorem studied
in~\cite{KossakowskiRMP86,MackowiakPR99}
(see Remark~\ref{defasdiff}).

 \begin{theorem}{\label{as}}
 If
  $\stackrel{\wedge}{\sigma}^{(1)}_{Y,n}
  \approx \stackrel{\wedge}{\sigma}^{(1)}_{Y,n+1}$
 and the reals
  $s^{\wedge}_{Y,n+1}\norm{\rho_{Y}},$
  $(Y,n)\in\cM(\Omega)\times\N,$
 are uniformly bounded then
 \begin{equation}{\label{as1}}
 \stackrel{\wedge}{\sigma}^{(1)}_{Y,n}
 (\Id+s^{\wedge}_{Y,n+1}\rho_{Y})
 \approx (n+1)^{-1}s^{\wedge}_{Y,n+1}\rho_{Y},
 \end{equation}
 \begin{equation}{\label{as2}}
 \stackrel{\wedge}{\sigma}^{(1)}_{Y,n}
 \approx (n+1)^{-1}s^{\wedge}_{Y,n+1}\rho_{Y}
 (\Id+s^{\wedge}_{Y,n+1}\rho_{Y})^{-1}.
 \end{equation}
 If
  $\stackrel{\vee}{\sigma}^{(1)}_{Y,n}
  \approx \stackrel{\vee}{\sigma}^{(1)}_{Y,n+1}$
 and the reals
  $s^{\vee}_{Y,n}\norm{\rho_{Y}},$
  $(Y,n)\in\cM(\Omega)\times\N,$
 are uniformly bounded then
 \begin{equation}{\label{as3}}
 \stackrel{\vee}{\sigma}^{(1)}_{Y,n}(\Id-s^{\vee}_{Y,n+1}\rho_{Y})
 \approx (n+1)^{-1}s^{\vee}_{Y,n+1}\rho_{Y}.
 \end{equation}
 If, additionally,
  $s^{\vee}_{Y,n}\norm{\rho_{Y}}\leq\epsilon$
 for some $\epsilon<1$ and every
  $(Y,n)\in\cM(\Omega)\times\N$ then
 \begin{equation}{\label{as4}}
 \stackrel{\vee}{\sigma}^{(1)}_{Y,n}
 \approx (n+1)^{-1}s^{\vee}_{Y,n+1}\rho_{Y}
 (\Id-s^{\vee}_{Y,n+1}\rho_{Y})^{-1}.
 \end{equation}
 \end{theorem}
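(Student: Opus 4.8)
The plan is to derive~\eqref{as1} from the recurrence formula~\eqref{rek3} for
$\Lc^{1}_{n}\rho_{Y}^{\wedge n}$ and then obtain~\eqref{as2} by multiplying
on the right by the bounded operator $(\Id+s^{\wedge}_{Y,n+1}\rho_{Y})^{-1}$,
invoking Lemma~\ref{iloczyny} and Remark~\ref{odwr}. First I would rewrite
Eq.~\eqref{rek3} with $n$ replaced by $n+1$ and divide both sides by
$\xi^{\wedge}_{Y,n+1}$ to pass from $\Lc^{1}_{n}\rho_{Y}^{\wedge n}$ to
$\stackrel{\wedge}{\sigma}^{(1)}_{Y,n}$; this gives
\begin{equation*}
(n+1)\stackrel{\wedge}{\sigma}^{(1)}_{Y,n+1}
 =s^{\wedge}_{Y,n+1}\rho_{Y}
 -n\,s^{\wedge}_{Y,n+1}\bigl(\stackrel{\wedge}{\sigma}^{(1)}_{Y,n}\bigr)\rho_{Y}.
\end{equation*}
Rearranged, this reads
$n\,s^{\wedge}_{Y,n+1}\bigl(\stackrel{\wedge}{\sigma}^{(1)}_{Y,n}\bigr)\rho_{Y}
=s^{\wedge}_{Y,n+1}\rho_{Y}-(n+1)\stackrel{\wedge}{\sigma}^{(1)}_{Y,n+1}$,
so that
$\stackrel{\wedge}{\sigma}^{(1)}_{Y,n}
+s^{\wedge}_{Y,n+1}\bigl(\stackrel{\wedge}{\sigma}^{(1)}_{Y,n}\bigr)\rho_{Y}
=\stackrel{\wedge}{\sigma}^{(1)}_{Y,n}+(n+1)^{-1}s^{\wedge}_{Y,n+1}\rho_{Y}
-(n+1)^{-1}(n+1)\stackrel{\wedge}{\sigma}^{(1)}_{Y,n+1}
+\bigl(\tfrac1n-\tfrac1{n+1}\bigr)\bigl(\text{error}\bigr)$;
I would organize the algebra cleanly so that the left side is exactly
$\stackrel{\wedge}{\sigma}^{(1)}_{Y,n}(\Id+s^{\wedge}_{Y,n+1}\rho_{Y})$ and the
right side is $(n+1)^{-1}s^{\wedge}_{Y,n+1}\rho_{Y}$ plus a remainder built from
$\stackrel{\wedge}{\sigma}^{(1)}_{Y,n}-\stackrel{\wedge}{\sigma}^{(1)}_{Y,n+1}$
and a $\tfrac1{n(n+1)}$-type coefficient.

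The key point is that the remainder terms are asymptotically equivalent to $0$.
By hypothesis
$\stackrel{\wedge}{\sigma}^{(1)}_{Y,n}\approx\stackrel{\wedge}{\sigma}^{(1)}_{Y,n+1}$,
and since these are density operators their trace norms equal $1$, so they are
uniformly bounded; Remark~\ref{strongprop} then lets me absorb scalar factors
and differences. More precisely, any term of the form $c_{n}A_{Y,n}$ with
$c_{n}\to 0$ and $\Tr\abs{A_{Y,n}}$ uniformly bounded is $\approx 0$ by the last
sentence of Remark~\ref{strongprop} (with $a=0$), and a term
$s^{\wedge}_{Y,n+1}\bigl(\stackrel{\wedge}{\sigma}^{(1)}_{Y,n}
-\stackrel{\wedge}{\sigma}^{(1)}_{Y,n+1}\bigr)$ is $\approx 0$ because
$s^{\wedge}_{Y,n+1}\le s^{\wedge}_{Y,n+1}\norm{\rho_{Y}}/\norm{\rho_{Y}}$ need
not be bounded on its own — here I must instead keep the factor $\rho_{Y}$
attached. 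This is the delicate bookkeeping step: the natural grouping produces
$s^{\wedge}_{Y,n+1}\rho_{Y}$, whose trace norm is
$s^{\wedge}_{Y,n+1}\Tr\rho_{Y}$ and is \emph{not} assumed bounded, so I cannot
treat $s^{\wedge}_{Y,n+1}\rho_{Y}$ as a uniformly trace-bounded family. The
resolution is to use the hypothesis that $s^{\wedge}_{Y,n+1}\norm{\rho_{Y}}$ is
uniformly bounded together with Lemma~\ref{mocna}: writing the remainder as
$s^{\wedge}_{Y,n+1}\rho_{Y}^{1/2}\bigl(\stackrel{\wedge}{\sigma}^{(1)}_{Y,n}
-\stackrel{\wedge}{\sigma}^{(1)}_{Y,n+1}\bigr)\rho_{Y}^{1/2}$ is not quite right
either; the cleanest route is to multiply the rearranged recurrence by
$(n+1)$ first, so the coefficients become $O(1)$, apply $\approx$ there, and
only afterwards divide by $(n+1)$ using Remark~\ref{strongprop}. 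I would carry
this out so that every family I call $\approx 0$ is literally
$c_{Y,n}\bigl(\stackrel{\wedge}{\sigma}^{(1)}_{Y,n}
-\stackrel{\wedge}{\sigma}^{(1)}_{Y,n+1}\bigr)$ with $\abs{c_{Y,n}}$ uniformly
bounded, invoking Remark~\ref{strongprop} and Definition~\ref{relacja}, plus one
genuine $c_{n}\to0$ scalar-times-density-operator term.

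Once~\eqref{as1} is established, \eqref{as2} follows by applying
Lemma~\ref{iloczyny} (the $\otimes$ case with $m=0$, i.e. ordinary operator
multiplication on $\cH_{Y}$) with the bounded multiplier family
$C_{Y,n}\eqdf(\Id+s^{\wedge}_{Y,n+1}\rho_{Y})^{-1}$, whose norms equal $1$ by
Remark~\ref{odwr}; strictly, since Lemma~\ref{iloczyny} is stated for selfadjoint
families and right multiplication by a bounded operator, I would instead note
directly from Definition~\ref{relacja} that $A_{Y,n}\approx B_{Y,n}$ implies
$A_{Y,n}E_{Y,n}\approx B_{Y,n}E_{Y,n}$ for any uniformly norm-bounded
$\set{E_{Y,n}}$ (replace the test family $C_{Y,n}$ by $E_{Y,n}C_{Y,n}$), which
gives~\eqref{as2} from~\eqref{as1} immediately. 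The bosonic
statements~\eqref{as3} and~\eqref{as4} are proved the same way, starting from
the recurrence~\eqref{rek4}: the sign change produces $\Id-s^{\vee}_{Y,n+1}\rho_{Y}$
in place of $\Id+s^{\wedge}_{Y,n+1}\rho_{Y}$, and for~\eqref{as4} the extra
hypothesis $s^{\vee}_{Y,n}\norm{\rho_{Y}}\le\epsilon<1$ is exactly what
Remark~\ref{odwr} needs to guarantee that $\Id-s^{\vee}_{Y,n+1}\rho_{Y}$ is
invertible with $\norm{(\Id-s^{\vee}_{Y,n+1}\rho_{Y})^{-1}}=(1-\epsilon)^{-1}$
uniformly bounded, so the same multiplication argument applies. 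I expect the
main obstacle to be purely organizational: arranging the algebra from~\eqref{rek3}
so that the ``error'' is manifestly a uniformly-bounded scalar multiple of
$\stackrel{\wedge}{\sigma}^{(1)}_{Y,n}-\stackrel{\wedge}{\sigma}^{(1)}_{Y,n+1}$
(to which the hypothesis and Remark~\ref{strongprop} apply), rather than
something of the form $s^{\wedge}_{Y,n+1}\rho_{Y}$ times such a difference, whose
trace norm one cannot control.
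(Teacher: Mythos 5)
Your proposal is correct, and for the core relations \eqref{as1} and \eqref{as3} it follows the paper's own route: substitute $n+1$ into the recurrence \eqref{rek3} (resp.\ \eqref{rek4}), divide by $\xi^{\wedge}_{Y,n+1}$ to obtain the exact identity $\stackrel{\wedge}{\sigma}^{(1)}_{Y,n+1}=(n+1)^{-1}s^{\wedge}_{Y,n+1}\rho_{Y}-\tfrac{n}{n+1}\stackrel{\wedge}{\sigma}^{(1)}_{Y,n}(s^{\wedge}_{Y,n+1}\rho_{Y})$, and combine the hypothesis $\stackrel{\wedge}{\sigma}^{(1)}_{Y,n}\approx\stackrel{\wedge}{\sigma}^{(1)}_{Y,n+1}$ with Remark~\ref{strongprop} applied to the family $\stackrel{\wedge}{\sigma}^{(1)}_{Y,n}(s^{\wedge}_{Y,n+1}\rho_{Y})$, whose trace norms are bounded by $s^{\wedge}_{Y,n+1}\norm{\rho_{Y}}$. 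Your bookkeeping worry evaporates once the algebra is actually carried out: the remainder is exactly $(\stackrel{\wedge}{\sigma}^{(1)}_{Y,n}-\stackrel{\wedge}{\sigma}^{(1)}_{Y,n+1})+\tfrac{1}{n+1}\stackrel{\wedge}{\sigma}^{(1)}_{Y,n}(s^{\wedge}_{Y,n+1}\rho_{Y})$, so the difference enters with coefficient $1$ rather than $s^{\wedge}_{Y,n+1}$, and no term of the problematic form arises. Where you genuinely depart from the paper is the passage from \eqref{as1} to \eqref{as2}: you right-multiply by the uniformly bounded family $(\Id+s^{\wedge}_{Y,n+1}\rho_{Y})^{-1}$ directly at the level of Definition~\ref{relacja}, via $\Tr\,(A_{Y,n}E_{Y,n}-B_{Y,n}E_{Y,n})C_{Y,n}=\Tr\,(A_{Y,n}-B_{Y,n})(E_{Y,n}C_{Y,n})$, which is valid and needs nothing beyond the definition and Remark~\ref{odwr}; you also rightly discard Lemma~\ref{iloczyny} here, since its $D_{Y,n}$ must have uniformly bounded \emph{trace} norms, which the inverse does not. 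The paper instead routes through Lemma~\ref{mocna}: it upgrades \eqref{as1} to trace-norm convergence, which requires the selfadjointness of $\stackrel{\wedge}{\sigma}^{(1)}_{Y,n}(\Id+s^{\wedge}_{Y,n+1}\rho_{Y})$, established by noting from the explicit formula of Theorem~\ref{jawny} that $\stackrel{\wedge}{\sigma}^{(1)}_{Y,n}$ commutes with $\Id+s^{\wedge}_{Y,n+1}\rho_{Y}$; it then pulls the inverse out of the trace norm and applies Lemma~\ref{mocna} again in the easy direction. Your argument is shorter and avoids invoking Theorem~\ref{jawny} at this step; the paper's argument yields in passing the stronger (but unneeded for the statement) fact that the trace norm of the difference in \eqref{as2} tends to zero. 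Your treatment of \eqref{as3} and \eqref{as4}, with the sign inversion from \eqref{rek4} and the $\epsilon<1$ bound controlling $\norm{(\Id-s^{\vee}_{Y,n+1}\rho_{Y})^{-1}}\leq(1-\epsilon)^{-1}$, matches the paper's.
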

\begin{proof}
By Theorem~\ref{rek} and the assumption
 $\stackrel{\wedge}{\sigma}^{(1)}_{Y,n}
 \approx \stackrel{\wedge}{\sigma}^{(1)}_{Y,n+1}$
one has
 \begin{equation}{\label{strongprel}}
 \stackrel{\wedge}{\sigma}^{(1)}_{Y,n}
 -(n+1)^{-1}s^{\wedge}_{Y,n+1}\rho_{Y}
 \approx -(n+1)^{-1}n\stackrel{\wedge}{\sigma}^{(1)}_{Y,n}
 (s^{\wedge}_{Y,n+1}\rho_{Y}).
 \end{equation}
Since
 $\Tr\,\abs{\stackrel{\wedge}{\sigma}^{(1)}_{Y,n}
 (s^{\wedge}_{Y,n+1}\rho_{Y})}
 \leq s^{\wedge}_{Y,n+1}\norm{\rho_{Y}}\,
 \Tr\,\abs{\stackrel{\wedge}{\sigma}^{(1)}_{Y,n}}
 =s^{\wedge}_{Y,n+1}\norm{\rho_{Y}},$
relation~\eqref{strongprel} yields~\eqref{as1},
in view of Remark~\ref{strongprop}.

Now turn to the proof of relation~\eqref{as2}. According to
Remark~\ref{odwr},
 \begin{align}{\label{as5}}
 \nonumber & \Tr\,\abs{\stackrel{\wedge}{\sigma}^{(1)}_{Y,n}
 -(n+1)^{-1}s^{\wedge}_{Y,n+1}\rho_{Y}
 (\Id+s^{\wedge}_{Y,n+1}\rho_{Y})^{-1}} \\
 \nonumber &\leq \norm{(\Id+s^{\wedge}_{Y,n+1}\rho_{Y})^{-1}}
 \,\Tr\,\abs{\stackrel{\wedge}{\sigma}^{(1)}_{Y,n}
 (\Id+s^{\wedge}_{Y,n+1}\rho_{Y})
 -(n+1)^{-1}s^{\wedge}_{Y,n+1}\rho_{Y}} \\
 & =\Tr\,\abs{\stackrel{\wedge}{\sigma}^{(1)}_{Y,n}
 (\Id+s^{\wedge}_{Y,n+1}\rho_{Y})
 -(n+1)^{-1}s^{\wedge}_{Y,n+1}\rho_{Y}}.
 \end{align}
The explicit form of $\stackrel{\wedge}{\sigma}^{(1)}_{Y,n}$ given
by Theorem~\ref{jawny} shows that
 $\stackrel{\wedge}{\sigma}^{(1)}_{Y,n}$
commutes with
 $\Id+s^{\wedge}_{Y,n+1}\rho_{Y},$
and since both operators are selfadjoint,
 $\stackrel{\wedge}{\sigma}^{(1)}_{Y,n}
 (\Id+s^{\wedge}_{Y,n+1}\rho_{Y})$
is also selfadjoint. Thus conditions~\eqref{as1},~\eqref{as5},
and Lemma~\ref{mocna} yield~\eqref{as2}.

The proof of relations~\eqref{as3},~\eqref{as4} runs parallel to
that of~\eqref{as1},~\eqref{as2}.
Notice that in this case the expression
 $\norm{(\Id+s^{\wedge}_{Y,n+1}\rho_{Y})^{-1}}=1$
from estimate~\eqref{as5} is replaced by
 $\norm{(\Id-s^{\wedge}_{Y,n+1}\rho_{Y})^{-1}}
 =(1-s^{\wedge}_{Y,n+1}\norm{\rho_{Y}})^{-1} \leq(1-\epsilon)^{-1}$
(see Remark~\ref{odwr}).
\end{proof}

The following theorem for $k=2$ (with the reservation of
Remark~\ref{defasdiff}) was obtained
in~\cite{KossakowskiRMP86,MackowiakPR99}. The author
of~\cite{MackowiakPR99} gave also arguments that can be used to
check the assumptions of this theorem.

 \begin{theorem}[Asymptotic formulae I]{\label{glowne}}
 If
  $\stackrel{\wedge}{\sigma}^{(k)}_{Y,n}
  \approx \stackrel{\wedge}{\sigma}^{(k)}_{Y,n+1}$
 for every $k\in\N$ and
 \begin{equation}{\label{assferm}}
 s^{\wedge}_{Y,n}\norm{\rho_{Y}} \leq 2
 \quad \text{for every $(Y,n)\in\cM(\Omega)\times\N$}
 \end{equation}
then, for every $k\in\N,$ $k\geq 2,$
 \begin{equation}{\label{glowne2}}
 \stackrel{\wedge}{\sigma}^{(k)}_{Y,n}
 \approx k!\underbrace{\stackrel{\wedge}{\sigma}^{(1)}_{Y,n}\wedge
 \ldots\wedge\stackrel{\wedge}{\sigma}^{(1)}_{Y,n}}_{k}.
 \end{equation}

 If
  $\stackrel{\vee}{\sigma}^{(k)}_{Y,n}
  \approx \stackrel{\vee}{\sigma}^{(k)}_{Y,n+1}$
 for every $k\in\N$ and
 \begin{equation}{\label{assbos}}
 s^{\vee}_{Y,n}\norm{\rho_{Y}} \leq\epsilon
 \quad \text{for some $\epsilon<1$ and every
 $(Y,n)\in\cM(\Omega)\times\N$}
 \end{equation}
 then, for every $k\in\N,$ $k\geq 2,$
 \begin{equation}{\label{glowne4}}
 \stackrel{\vee}{\sigma}^{(k)}_{Y,n}
 \approx k!\underbrace{\stackrel{\vee}{\sigma}^{(1)}_{Y,n}\vee
 \cdots \vee\stackrel{\vee}{\sigma}^{(1)}_{Y,n}}_{k}.
 \end{equation}
 \end{theorem}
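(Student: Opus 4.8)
The plan is to prove \eqref{glowne2} and \eqref{glowne4} by induction on $k$, using the recurrence formulae of Theorem~\ref{rek} together with Theorem~\ref{as} as the base case $k=1$ (suitably reorganized) and Lemma~\ref{iloczyny} to propagate the asymptotic equivalence through $\wedge$- and $\vee$-products. First I would rewrite the recurrence \eqref{rek1}, after dividing by $\binom{n}{k}$ and passing to the normalized operators $\stackrel{\wedge}{\sigma}^{(k)}_{Y,n}$, in the form
\begin{equation*}
\stackrel{\wedge}{\sigma}^{(k)}_{Y,n}
=\frac{k}{n}\,s^{\wedge}_{Y,n}\left(\stackrel{\wedge}{\sigma}^{(k-1)}_{Y,n-1}\right)\wedge\rho_{Y}
-\frac{n-k}{n}\,s^{\wedge}_{Y,n}\left(\stackrel{\wedge}{\sigma}^{(k)}_{Y,n-1}\right)\left(\Id^{\otimes(k-1)}\otimes\rho_{Y}\right)A^{(k)}_{\cH_{Y}},
\end{equation*}
where the scalars $\xi^{\wedge}$ have been absorbed into $s^{\wedge}_{Y,n}=\xi^{\wedge}_{Y,n-1}/\xi^{\wedge}_{Y,n}$ after noting that $\stackrel{\wedge}{\sigma}^{(j)}_{Y,n-1}$ carries the factor $1/\xi^{\wedge}_{Y,n-1}$ while the left side carries $1/\xi^{\wedge}_{Y,n}$. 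The key observation is that the coefficient $\tfrac{k}{n}\to 0$ while $\tfrac{n-k}{n}\to 1$, so in the thermodynamic limit the first term is negligible (its trace norm is bounded by $\tfrac{k}{n}s^{\wedge}_{Y,n}\norm{\rho_Y}\Tr\,\abs{\stackrel{\wedge}{\sigma}^{(k-1)}_{Y,n-1}}=\tfrac{k}{n}s^{\wedge}_{Y,n}\norm{\rho_Y}$, which by \eqref{assferm} is $O(1/n)$) and hence, by Lemma~\ref{mocna}, the recurrence collapses to
\begin{equation*}
\stackrel{\wedge}{\sigma}^{(k)}_{Y,n}
\approx -\,s^{\wedge}_{Y,n}\left(\stackrel{\wedge}{\sigma}^{(k)}_{Y,n-1}\right)\left(\Id^{\otimes(k-1)}\otimes\rho_{Y}\right)A^{(k)}_{\cH_{Y}}.
\end{equation*}

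Next I would combine this with the inductive hypothesis $\stackrel{\wedge}{\sigma}^{(k)}_{Y,n-1}\approx k!\,(\stackrel{\wedge}{\sigma}^{(1)}_{Y,n-1})^{\wedge k}$ and the hypothesis $\stackrel{\wedge}{\sigma}^{(1)}_{Y,n}\approx\stackrel{\wedge}{\sigma}^{(1)}_{Y,n+1}$ (which makes the single-particle contractions asymptotically stable in $n$, so the index shift $n-1\mapsto n$ is harmless via Lemma~\ref{iloczyny}). The target identity \eqref{glowne2} should then follow from a purely algebraic manipulation of antisymmetric products: one needs
\begin{equation*}
-s^{\wedge}_{Y,n}\left(k!\,(\stackrel{\wedge}{\sigma}^{(1)}_{Y,n})^{\wedge k}\right)\left(\Id^{\otimes(k-1)}\otimes\rho_{Y}\right)A^{(k)}_{\cH_{Y}}
\approx k!\,(\stackrel{\wedge}{\sigma}^{(1)}_{Y,n})^{\wedge k},
\end{equation*}
which, using the explicit form of $\stackrel{\wedge}{\sigma}^{(1)}_{Y,n}$ from Theorem~\ref{jawny} (it commutes with $\rho_Y$, being a polynomial in $\rho_Y$ with scalar coefficients), reduces to identifying $-s^{\wedge}_{Y,n}\,\rho_Y$ acting on the last factor as the ``inverse'' operation to the recursion; more precisely one recognizes from \eqref{as1}–\eqref{as2} that $\stackrel{\wedge}{\sigma}^{(1)}_{Y,n}(\Id+s^{\wedge}_{Y,n+1}\rho_Y)\approx (n+1)^{-1}s^{\wedge}_{Y,n+1}\rho_Y$, and an analogous one-step relation links $\stackrel{\wedge}{\sigma}^{(1)}_{Y,n}$ at consecutive indices through multiplication by $-s^{\wedge}_{Y,n}\rho_Y$ and projection, so the antisymmetrizer $A^{(k)}_{\cH_{Y}}$ and the single extra $\rho_Y$ exactly reconstitute one more $\stackrel{\wedge}{\sigma}^{(1)}$-factor in the $\wedge$-product. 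The boson case \eqref{glowne4} is handled identically with $\wedge\to\vee$, $A^{(k)}\to S^{(k)}$, and sign changes, where the stronger hypothesis \eqref{assbos} (with $\epsilon<1$) is what guarantees the relevant inverse $(\Id-s^{\vee}_{Y,n+1}\rho_Y)^{-1}$ has bounded norm, exactly as in Remark~\ref{odwr} and Theorem~\ref{as}.

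The main obstacle I anticipate is the bookkeeping in the inductive step: one must be careful that the index shift $n\to n-1$ appearing throughout does not spoil the equivalence, and here the standing hypotheses $\stackrel{\wedge}{\sigma}^{(k)}_{Y,n}\approx\stackrel{\wedge}{\sigma}^{(k)}_{Y,n+1}$ for \emph{all} $k$ are essential — they let us replace $\stackrel{\wedge}{\sigma}^{(1)}_{Y,n-1}$ by $\stackrel{\wedge}{\sigma}^{(1)}_{Y,n}$ inside the $\wedge$-product (via Lemma~\ref{iloczyny}, since the single-particle density operators have uniformly bounded, indeed unit, trace norm) and similarly for $s^{\wedge}_{Y,n}\to s^{\wedge}_{Y,n+1}$ (bounded scalar sequences, Remark~\ref{strongprop}). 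A second delicate point is justifying that the ``dropped'' term is genuinely negligible in trace norm uniformly over $\cM(\Omega)$; this is where hypothesis \eqref{assferm} (resp. \eqref{assbos}) enters, bounding $s^{\wedge}_{Y,n}\norm{\rho_Y}$ and hence controlling $\Tr\,\abs{\,\cdot\,}$ of the small term by a constant times $1/n$, so that $d$-$\lim$ kills it via Lemma~\ref{mocna}, implication~\eqref{mocna1}. Everything else is routine algebra of symmetric/antisymmetric tensor products together with the fact, read off from Theorem~\ref{jawny}, that $\stackrel{\wedge}{\sigma}^{(1)}_{Y,n}$ and $\stackrel{\vee}{\sigma}^{(1)}_{Y,n}$ are functions of $\rho_Y$ alone and therefore commute with $\rho_Y$.
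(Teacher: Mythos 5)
Your reduction of the recurrence is where the argument breaks down. After normalizing \eqref{rek1} you propose to discard the term $\tfrac{k}{n}s^{\wedge}_{Y,n}\bigl(\stackrel{\wedge}{\sigma}^{(k-1)}_{Y,n-1}\bigr)\wedge\rho_{Y}$ on the grounds that its trace norm is at most $\tfrac{k}{n}s^{\wedge}_{Y,n}\norm{\rho_{Y}}\,\Tr\abs{\stackrel{\wedge}{\sigma}^{(k-1)}_{Y,n-1}}=O(1/n)$. That bound is false: a wedge product is not a composition, so the correct estimate is $\Tr\abs{A\wedge\rho_{Y}}\leq\Tr\abs{A}\cdot\Tr\,\rho_{Y}$, and $\Tr\,\rho_{Y}$ (unlike $\norm{\rho_{Y}}$) is not controlled by \eqref{assferm} and is unbounded in the thermodynamic limit. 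In fact the term you drop is the dominant one. Taking the trace of \eqref{rek3} gives $1=\tfrac{1}{n}s^{\wedge}_{Y,n}\Tr\,\rho_{Y}-\tfrac{n-1}{n}s^{\wedge}_{Y,n}\Tr\bigl(\stackrel{\wedge}{\sigma}^{(1)}_{Y,n-1}\rho_{Y}\bigr)$, and since the subtracted quantity is nonnegative one gets $\tfrac{1}{n}s^{\wedge}_{Y,n}\Tr\,\rho_{Y}\geq 1$ for every $(Y,n)$; the same positivity argument at level $k$ shows that the first term of your normalized recurrence always has trace at least $1$. Consequently your intermediate claim $\stackrel{\wedge}{\sigma}^{(k)}_{Y,n}\approx -s^{\wedge}_{Y,n}\bigl(\stackrel{\wedge}{\sigma}^{(k)}_{Y,n-1}\bigr)\bigl(\Id^{\otimes(k-1)}\otimes\rho_{Y}\bigr)A^{(k)}_{\cH_{Y}}$ is impossible: tested against $C_{Y,n}=\Id^{\otimes k}$, the left side has trace $1$ while the right side has trace $\leq 0$.

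The paper keeps both terms. It writes the recurrence at particle number $n+1$ and uses the hypothesis $\stackrel{\wedge}{\sigma}^{(q)}_{Y,n}\approx\stackrel{\wedge}{\sigma}^{(q)}_{Y,n+1}$ to move the second term to the left, producing $\stackrel{\wedge}{\sigma}^{(q)}_{Y,n}\bigl(\Id^{\otimes(q-1)}\otimes(\Id+s^{\wedge}_{Y,n+1}\rho_{Y})\bigr)A^{(q)}_{\cH_{Y}}$ on one side, while the retained first term is converted, via \eqref{as1} and the inductive hypothesis, into $q!\,$ times a wedge product of $\stackrel{\wedge}{\sigma}^{(1)}_{Y,n}$'s with one factor $\stackrel{\wedge}{\sigma}^{(1)}_{Y,n}(\Id+s^{\wedge}_{Y,n+1}\rho_{Y})$. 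A second difficulty you do not address then remains: one must strip the non-invertible factor $\bigl(\Id^{\otimes(q-1)}\otimes(\Id+s^{\wedge}_{Y,n+1}\rho_{Y})\bigr)A^{(q)}_{\cH_{Y}}$ off the resulting relation \eqref{glowne10}. That is precisely what hypothesis \eqref{assferm} is for in the paper: it gives $\norm{\Id-s^{\wedge}_{Y,n+1}\rho_{Y}}=1$ and hence the inequality \eqref{glowne12}, which, combined with Lemma~\ref{mocna} and the selfadjointness obtained from the commutation relations read off Theorem~\ref{jawny}, upgrades the projected equivalence to trace-norm convergence of the unprojected difference. Your plan uses \eqref{assferm} only to bound a term that, as shown above, is not small, and offers no substitute for this deconvolution step, so the argument does not go through.
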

\begin{proof}
First equivalence~\eqref{glowne2} will be proved.
Observe that
 \begin{align*}
 & 2\,\Tr\abs{\stackrel{\wedge}{\sigma}^{(q)}_{Y,n}
 -q!\stackrel{\wedge}{\sigma}^{(1)}_{Y,n}\wedge
 \ldots \wedge\stackrel{\wedge}{\sigma}^{(1)}_{Y,n}} \\
 & =\Tr\abs{\left(\stackrel{\wedge}{\sigma}^{(q)}_{Y,n}
 -q!\stackrel{\wedge}{\sigma}^{(1)}_{Y,n}\wedge
 \ldots \wedge\stackrel{\wedge}{\sigma}^{(1)}_{Y,n}\right)
 \left(\Id^{\otimes(q-1)}
 \otimes (\Id+s^{\wedge}_{Y,n+1}\rho_{Y})\right)
 A^{(q)}_{\cH_{Y}} \right. \\
 &\quad \left. +\left(\stackrel{\wedge}{\sigma}^{(q)}_{Y,n}
 -q!\stackrel{\wedge}{\sigma}^{(1)}_{Y,n}\wedge
 \ldots \wedge\stackrel{\wedge}{\sigma}^{(1)}_{Y,n}\right)
 \left(\Id^{\otimes(q-1)}
 \otimes (\Id-s^{\wedge}_{Y,n+1}\rho_{Y})\right) A^{(q)}_{\cH_{Y}}}
 \end{align*}
 \begin{align*}
 & \leq \Tr\abs{\left(\stackrel{\wedge}{\sigma}^{(q)}_{Y,n}
 -q!\stackrel{\wedge}{\sigma}^{(1)}_{Y,n}\wedge
 \ldots \wedge\stackrel{\wedge}{\sigma}^{(1)}_{Y,n}\right)
 \left(\Id^{\otimes(q-1)}
 \otimes (\Id+s^{\wedge}_{Y,n+1}\rho_{Y})\right)
 A^{(q)}_{\cH_{Y}}} \\
 & \quad +\norm{\Id-s^{\wedge}_{Y,n+1}\rho_{Y}}
 \,\Tr\abs{\stackrel{\wedge}{\sigma}^{(q)}_{Y,n}
 -q!\stackrel{\wedge}{\sigma}^{(1)}_{Y,n}\wedge
 \ldots \wedge\stackrel{\wedge}{\sigma}^{(1)}_{Y,n}},
 \end{align*}
hence
 \begin{multline}{\label{glowne12}}
 \left(2-\norm{\Id-s^{\wedge}_{Y,n+1}\rho_{Y}}\right)
 \,\Tr\abs{\stackrel{\wedge}{\sigma}^{(q)}_{Y,n}
 -q!\stackrel{\wedge}{\sigma}^{(1)}_{Y,n}\wedge
 \ldots \wedge\stackrel{\wedge}{\sigma}^{(1)}_{Y,n}} \\
 \leq \Tr\abs{\left(\stackrel{\wedge}{\sigma}^{(q)}_{Y,n}
 -q!\stackrel{\wedge}{\sigma}^{(1)}_{Y,n}\wedge
 \ldots \wedge\stackrel{\wedge}{\sigma}^{(1)}_{Y,n}\right)
 \left(\Id^{\otimes(q-1)}
 \otimes (\Id+s^{\wedge}_{Y,n+1}\rho_{Y})\right)
 A^{(q)}_{\cH_{Y}}}.
 \end{multline}
Since the operators $\rho_{Y}$ are trace class,
 $\displaystyle\inf_{\varphi\in\cH_{Y};\,\norm{\varphi}=1}
 \inn{\varphi}{\rho_{Y}\varphi}=0.$
Thus, by assumption~\eqref{assferm} and
the selfadjointness of the operators
 $\Id-s^{\wedge}_{Y,n+1}\rho_{Y},$
one obtains
 \begin{align}{\label{glowne13}}
 &\nonumber \norm{\Id-s^{\wedge}_{Y,n+1}\rho_{Y}}
 =\sup_{\substack{\varphi\in\cH_{Y} \\ \norm{\varphi}=1}}
 \abs{\inn{\varphi}{(\Id-s^{\wedge}_{Y,n+1}\rho_{Y})\varphi}} \\
 & =\max\set{1-s^{\wedge}_{Y,n+1}
 \inf_{\substack{\varphi\in\cH_{Y} \\ \norm{\varphi}=1}}
 \inn{\varphi}{\rho_{Y}\varphi}, \; s^{\wedge}_{Y,n+1}
 \sup_{\substack{\varphi\in\cH_{Y} \\ \norm{\varphi}=1}}
 \inn{\varphi}{\rho_{Y}\varphi}-1}=1.
 \end{align}

The rest of the proof of~\eqref{glowne2} is
by induction with respect to $k\geq 2.$
\newline
$1^{\circ}.$ ($k=2$) By Theorem~\ref{rek} for
$n\geq 2$ one has
 \begin{align}{\label{glowne5}}
 \nonumber \frac{1}{(n+1)^{2}}\newt{n+1}{2}
 \stackrel{\wedge}{\sigma}^{(2)}_{Y,n+1}
 & =\frac{n}{n+1}\stackrel{\wedge}{\sigma}^{(1)}_{Y,n}
 \wedge \left((n+1)^{-1}s^{\wedge}_{Y,n+1}\rho_{Y}\right) \\
 &\quad -\frac{1}{(n+1)^{2}}\newt{n}{2}
 \stackrel{\wedge}{\sigma}^{(2)}_{Y,n}
 \left(\Id\otimes (s^{\wedge}_{Y,n+1}\rho_{Y})\right)
 A^{(2)}_{\cH_{Y}}.
 \end{align}
Assumption~\eqref{assferm} gives
 $\Tr\abs{\stackrel{\wedge}{\sigma}^{(2)}_{Y,n}
 \left(\Id\otimes (s^{\wedge}_{Y,n+1}\rho_{Y})\right)
 A^{(2)}_{\cH_{Y}}} \leq s^{\wedge}_{Y,n+1}\norm{\rho_{Y}}
 \,\Tr\abs{\stackrel{\wedge}{\sigma}^{(2)}_{Y,n}} \leq 2,$
hence, by Eq.~\eqref{glowne5}, Remark~\ref{strongprop}, and
the assumption
 $\stackrel{\wedge}{\sigma}^{(2)}_{Y,n}
 \approx \stackrel{\wedge}{\sigma}^{(2)}_{Y,n+1},$
one obtains
 \begin{equation*}
 \stackrel{\wedge}{\sigma}^{(2)}_{Y,n}
 +\stackrel{\wedge}{\sigma}^{(2)}_{Y,n}
 \left(\Id\otimes (s^{\wedge}_{Y,n+1}\rho_{Y})\right)
 A^{(2)}_{\cH_{Y}}
 \approx 2\frac{n}{n+1}\stackrel{\wedge}{\sigma}^{(1)}_{Y,n}
 \wedge\left((n+1)^{-1}s^{\wedge}_{Y,n+1}\rho_{Y}\right).
 \end{equation*}
Thus, in view of equivalence~\eqref{as1} from Theorem~\ref{as}
and Lemma~\ref{iloczyny}, one has
 \begin{multline}{\label{rown1}}
 \stackrel{\wedge}{\sigma}^{(2)}_{Y,n}
 +\stackrel{\wedge}{\sigma}^{(2)}_{Y,n}
 \left(\Id\otimes (s^{\wedge}_{Y,n+1}\rho_{Y})\right)
 A^{(2)}_{\cH_{Y}} \\
 \approx 2\frac{n}{n+1}\stackrel{\wedge}{\sigma}^{(1)}_{Y,n}
 \wedge\left(\stackrel{\wedge}{\sigma}^{(1)}_{Y,n}
 (\Id+s^{\wedge}_{Y,n+1}\rho_{Y})\right).
 \end{multline}
Furthermore, assumption~\eqref{assferm} implies that
the trace norms of the operators on the r.h.s of~\eqref{rown1}
are uniformly bounded.
Therefore, according to Remark~\ref{strongprop},
 \begin{equation}{\label{glowne6}}
 \left(\stackrel{\wedge}{\sigma}^{(2)}_{Y,n}
 -2\stackrel{\wedge}{\sigma}^{(1)}_{Y,n}\wedge
 \stackrel{\wedge}{\sigma}^{(1)}_{Y,n}\right)
 \left(\Id\otimes (\Id+s^{\wedge}_{Y,n+1}\rho_{Y})\right)
 A^{(2)}_{\cH_{Y}}\approx 0.
 \end{equation}
The explicit form of
 $\stackrel{\wedge}{\sigma}^{(2)}_{Y,n},$
 $\stackrel{\wedge}{\sigma}^{(1)}_{Y,n}
 \wedge \stackrel{\wedge}{\sigma}^{(1)}_{Y,n}$
given by Theorem~\ref{jawny} implies that
 $\stackrel{\wedge}{\sigma}^{(2)}_{Y,n}
 -2\stackrel{\wedge}{\sigma}^{(1)}_{Y,n}
 \wedge \stackrel{\wedge}{\sigma}^{(1)}_{Y,n}$
and
 $\left(\Id\otimes (\Id+s^{\wedge}_{Y,n+1}\rho_{Y})\right)
 A^{(2)}_{\cH_{Y}}$
commute, which proves
the selfadjointness of the operator on the l.h.s of~\eqref{glowne6}.
Thus conditions~\eqref{glowne6}, \eqref{glowne12} for $q=2$, \eqref{glowne13},
and Lemma~\ref{mocna} yield relation~\eqref{glowne2} for $k=2.$
\newline
\noindent $2^{\circ}.$ Assuming validity of
equivalence~\eqref{glowne2} for $k\in\set{2,\ldots,q-1},$ where
$q\in\N,$ $q>2,$ its validity will be proved for $k=q.$

By Theorem~\ref{rek} for $n\geq q$ one has
 \begin{align}{\label{glowne9}}
 \nonumber \frac{1}{(n+1)^{q}}\newt{n+1}{q}
 \stackrel{\wedge}{\sigma}^{(q)}_{Y,n+1}
 & =\frac{1}{(n+1)^{q-1}}\newt{n}{q-1}
 \stackrel{\wedge}{\sigma}^{(q-1)}_{Y,n}
 \wedge \left((n+1)^{-1}s^{\wedge}_{Y,n+1}\rho_{Y}\right) \\
 & \quad -\frac{1}{(n+1)^{q}}\newt{n}{q}
 \stackrel{\wedge}{\sigma}^{(q)}_{Y,n}
 \left(\Id^{\otimes(q-1)}
 \otimes (s^{\wedge}_{Y,n+1}\rho_{Y})\right) A^{(q)}_{\cH_{Y}}.
 \end{align}
Assumption~\eqref{assferm} implies
 \begin{equation*}
 \Tr\abs{\stackrel{\wedge}{\sigma}^{(q)}_{Y,n}
 \left(\Id^{\otimes(q-1)}
 \otimes (s^{\wedge}_{Y,n+1}\rho_{Y})\right)
 A^{(q)}_{\cH_{Y}}} \leq s^{\wedge}_{Y,n+1}\norm{\rho_{Y}}
 \,\Tr\abs{\stackrel{\wedge}{\sigma}^{(q)}_{Y,n}} \leq 2,
 \end{equation*}
hence, in view of Eq.~\eqref{glowne9}, Remark~\ref{strongprop},
and the assumption
 $\stackrel{\wedge}{\sigma}^{(q)}_{Y,n}
 \approx \stackrel{\wedge}{\sigma}^{(q)}_{Y,n+1},$
 \begin{multline*}
 \stackrel{\wedge}{\sigma}^{(q)}_{Y,n}
 +\stackrel{\wedge}{\sigma}^{(q)}_{Y,n}
 \left(\Id^{\otimes(q-1)}
 \otimes (s^{\wedge}_{Y,n+1}\rho_{Y})\right) A^{(q)}_{\cH_{Y}} \\
 \approx \frac{q!}{(n+1)^{q-1}}\newt{n}{q-1}
 \stackrel{\wedge}{\sigma}^{(q-1)}_{Y,n}
 \wedge\left((n+1)^{-1}s^{\wedge}_{Y,n+1}\rho_{Y}\right).
 \end{multline*}
Thus, by relation~\eqref{as1} from Theorem~\ref{as},
Lemma~\ref{iloczyny}, and Remark~\ref{strongprop}, one has
 \begin{multline}{\label{rown2}}
 \stackrel{\wedge}{\sigma}^{(q)}_{Y,n}
 +\stackrel{\wedge}{\sigma}^{(q)}_{Y,n}
 \left(\Id^{\otimes(q-1)}
 \otimes (s^{\wedge}_{Y,n+1}\rho_{Y})\right)  A^{(q)}_{\cH_{Y}} \\
 \approx \frac{q!}{(q-1)!}\stackrel{\wedge}{\sigma}^{(q-1)}_{Y,n}
 \wedge\left(\stackrel{\wedge}{\sigma}^{(1)}_{Y,n}
 (\Id+s^{\wedge}_{Y,n+1}\rho_{Y})\right),
 \end{multline}
since the trace norms
of the operators on the r.h.s. of~\eqref{rown2} are uniformly
bounded, by assumption~\eqref{assferm}.
Furthermore, in view of Lemma~\ref{iloczyny} and the inductive
hypothesis
 $\displaystyle \stackrel{\wedge}{\sigma}^{(q-1)}_{Y,n}
 \approx (q-1)!\underbrace{\stackrel{\wedge}{\sigma}^{(1)}_{Y,n}
 \wedge \ldots \wedge\stackrel{\wedge}{\sigma}^{(1)}_{Y,n}}_{q-1},$
condition~\eqref{rown2} yields
 \begin{multline*}
 \stackrel{\wedge}{\sigma}^{(q)}_{Y,n}
 +\stackrel{\wedge}{\sigma}^{(q)}_{Y,n}
 \left(\Id^{\otimes(q-1)}
 \otimes (s^{\wedge}_{Y,n+1}\rho_{Y})\right) A^{(q)}_{\cH_{Y}} \\
 \approx q!\underbrace{\stackrel{\wedge}{\sigma}^{(1)}_{Y,n}\wedge
 \ldots \wedge\stackrel{\wedge}{\sigma}^{(1)}_{Y,n}}_{q-1}
 \wedge\left(\stackrel{\wedge}{\sigma}^{(1)}_{Y,n}
 (\Id+s^{\wedge}_{Y,n+1}\rho_{Y})\right),
 \end{multline*}
hence
 \begin{equation}{\label{glowne10}}
 \left(\stackrel{\wedge}{\sigma}^{(q)}_{Y,n}
 -q!\underbrace{\stackrel{\wedge}{\sigma}^{(1)}_{Y,n}\wedge
 \ldots \wedge\stackrel{\wedge}{\sigma}^{(1)}_{Y,n}}_{q}\right)
 \left(\Id^{\otimes(q-1)}
 \otimes (\Id+s^{\wedge}_{Y,n+1}\rho_{Y})\right)
 A^{(q)}_{\cH_{Y}}\approx 0.
 \end{equation}
From the explicit form of
 $\stackrel{\wedge}{\sigma}^{(q)}_{Y,n},$
 $\stackrel{\wedge}{\sigma}^{(1)}_{Y,n}\wedge
 \ldots \wedge\stackrel{\wedge}{\sigma}^{(1)}_{Y,n}$
given by Theorem~\ref{jawny} one finds that
 $\stackrel{\wedge}{\sigma}^{(q)}_{Y,n}
 -q!\stackrel{\wedge}{\sigma}^{(1)}_{Y,n}\wedge
 \ldots \wedge\stackrel{\wedge}{\sigma}^{(1)}_{Y,n}$
and
 $\left(\Id^{\otimes(q-1)}
 \otimes (\Id+s^{\wedge}_{Y,n+1}\rho_{Y})\right)
 A^{(q)}_{\cH_{Y}}$
commute, which proves the selfadjointness of the operator
on the l.h.s of~\eqref{glowne10}.
Thus conditions~\eqref{glowne10}, \eqref{glowne12}, \eqref{glowne13},
and Lemma~\ref{mocna} yield
 $\stackrel{\wedge}{\sigma}^{(q)}_{Y,n}
 \approx q!\underbrace{\stackrel{\wedge}{\sigma}^{(1)}_{Y,n}\wedge
 \ldots \wedge\stackrel{\wedge}{\sigma}^{(1)}_{Y,n}}_{q}.$
Validity of relation~\eqref{glowne2} has been proved.

Now turn to equivalence~\eqref{glowne4}. Similarly
to~\eqref{glowne12} one has
\begin{multline*}
 \left(2-\norm{\Id+s^{\vee}_{Y,n+1}\rho_{Y}}\right)
 \,\Tr\abs{\stackrel{\vee}{\sigma}^{(q)}_{Y,n}
 -q!\stackrel{\vee}{\sigma}^{(1)}_{Y,n}\vee
 \cdots \vee\stackrel{\vee}{\sigma}^{(1)}_{Y,n}} \\
 \leq \Tr\abs{\left(\stackrel{\vee}{\sigma}^{(q)}_{Y,n}
 -q!\stackrel{\vee}{\sigma}^{(1)}_{Y,n}\vee
 \cdots \vee\stackrel{\vee}{\sigma}^{(1)}_{Y,n}\right)
 \left(\Id^{\otimes(q-1)}
 \otimes (\Id-s^{\vee}_{Y,n+1}\rho_{Y})\right) S^{(q)}_{\cH_{Y}}}.
 \end{multline*}
Furthermore, according to assumption~\eqref{assbos},
 \begin{equation}{\label{epsestpr}}
 2-\norm{\Id+s^{\vee}_{Y,n+1}\rho_{Y}}
 \geq 2-(1+s^{\vee}_{Y,n+1}\norm{\rho_{Y}}) \geq 1-\epsilon>0.
 \end{equation}
The rest of the proof of~\eqref{glowne4} is by induction
with respect to $k\geq 2$ and proceeds analogously
to the proof of~\eqref{glowne2} with condition~\eqref{glowne13}
replaced by~\eqref{epsestpr} and the operators
 $\Id\mp s^{\wedge}_{Y,n+1}\rho_{Y}$
replaced by
 $\Id\pm s^{\vee}_{Y,n+1}\rho_{Y}$
(inversion of signs).
\end{proof}

Theorem~\ref{glowne} allows to replace \mbox{$(n,k)$-con}tractions
of antisymmetric and symmetric pro\-duct density operators by
antisymmetric and symmetric products of \mbox{$1$-par}ticle
contractions, respectively, if the number $n$ of particles in the
system is large. Further simplification, consisting in replacement
of antisymmetric and symmetric products by tensor products, will be
now proved possible. To this end weaker conditions on the
asymptotic equivalence relation will be imposed.

 \begin{definition}{\label{rel}}
 Fix $k\in\N$ and $d\in\R,$ $d>0.$ Families $\rodz{A}{Y}{n},$
 $\rodz{B}{Y}{n}$ of operators
  $A_{Y,n},B_{Y,n}\in\cT\left(\cH_{Y}^{\otimes k}\right)$
 are called \emph{weakly asymptotically equivalent}
 (symbolically: $A_{Y,n}\sim B_{Y,n}$), if
  $\displaystyle\limt{n}{Y}\Tr\,(A_{Y,n}-B_{Y,n})C_{Y,n}=0$
 for every family $\rodz{C}{Y}{n}$ of operators of the form
  $\displaystyle C_{Y,n} =\bigotimes_{i=1}^{k}C_{Y,n}^{(i)},$
 where
  $C_{Y,n}^{(i)}\in\cB\left(\cH_{Y}\right)$
 ($i\in\set{1,\ldots,k},$ $(Y,n)\in\cM(\Omega)\times\N$) are
 operators with uniformly bounded operator norms.
 \end{definition}

The relation $\sim$ has the properties analogous to
the properties of
the relation $\approx$ from Remark~\ref{strongprop}.

 \begin{definition}{\label{cykliczny}}
 Let $k\in\N,$ $k\geq 2.$ Fix $\pi\in S_k.$ A~set
 $X\subset\set{1,\ldots,k}$ is called a~\emph{cyclic set of the
 permutation $\pi$}, if $X=\set{l_1,\ldots,l_q}$ for some
  $l_1,\ldots, l_q\in\set{1,\ldots,k},$ $q\in\set{2,\ldots,k},$
 such that $\pi(l_{s})=l_{s+1}$ for every $s\in\set{1,\ldots,q-1},$
 and $\pi(l_{q})=l_{1}.$ A~singleton
 $\set{l}\subset\set{1,\ldots,k}$ such that $\pi(l)=l$ is also
 called a~cyclic set of the permutation $\pi.$
 \end{definition}

Note that the set $\set{1,\ldots,k}$ from the above definition can
be represented as the union of disjoint cyclic sets of $\pi.$

 \begin{lemma}{\label{pars}}
 Let $k\in\N,$ $k\geq 2.$ If
  $B^{(1)},\ldots,B^{(k)}\in\cT\left(\cH_{Y}\right)$
 then
 \begin{equation}{\label{pars1}}
 k!\,\Tr\left(B_{Y,n}^{(1)}\otimes
 \cdots\otimes B_{Y,n}^{(k)}\right) A^{(k)}_{\cH_{Y}}
 =\sum_{\pi\in S_{k}}\sgn\,\pi\prod_{j=1}^{p(\pi)}
 \Tr\prod_{s=1}^{q_{j}}B_{Y,n}^{(l_{j,s})},
 \end{equation}
 where $p(\pi)\in\set{1,\ldots,k}$ is the number of disjoint cyclic
 sets of $\pi,$ indexed by $j,$ and $q_{j}$ denotes the number of
 elements of the \mbox{$j$th} cyclic set of $\pi,$ which is
 $\set{l_{j,1},\ldots,l_{j,q_{j}}},$
 where
 \begin{equation}{\label{lemcyclord}}
 \pi(l_{j,q_j})=l_{j,1} \quad \text{and, for $q_j\geq 2,$} \quad
 \pi(l_{j,s})=l_{j,s+1}, \quad s=1,\ldots,q_j-1.
 \end{equation}
 Clearly,
 $\displaystyle \sum_{j=1}^{p(\pi)}q_{j}=k$
 and
 $\displaystyle
 \bigcup_{j=1}^{p(\pi)}\bigcup_{s=1}^{q_{j}}\set{l_{j,s}}
 =\set{1,\ldots,k}.$
 \end{lemma}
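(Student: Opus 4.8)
The plan is to compute the trace $\Tr\left(B_{Y,n}^{(1)}\otimes\cdots\otimes B_{Y,n}^{(k)}\right)A^{(k)}_{\cH_{Y}}$ directly from the definition of the projector $A^{(k)}_{\cH_{Y}}$ and the structure of tensor products. Recall that $A^{(k)}_{\cH_{Y}}=\frac{1}{k!}\sum_{\pi\in S_{k}}\sgn\,\pi\,U_{\pi}$, where $U_{\pi}$ is the unitary on $\cH_{Y}^{\otimes k}$ permuting tensor factors according to $\pi$, i.e. $U_{\pi}(\psi_{1}\otimes\cdots\otimes\psi_{k})=\psi_{\pi^{-1}(1)}\otimes\cdots\otimes\psi_{\pi^{-1}(k)}$. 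Hence
\[
k!\,\Tr\left(B_{Y,n}^{(1)}\otimes\cdots\otimes B_{Y,n}^{(k)}\right)A^{(k)}_{\cH_{Y}}
=\sum_{\pi\in S_{k}}\sgn\,\pi\,\Tr\left[\left(B_{Y,n}^{(1)}\otimes\cdots\otimes B_{Y,n}^{(k)}\right)U_{\pi}\right],
\]
so everything reduces to evaluating the single term $\Tr\left[(B^{(1)}\otimes\cdots\otimes B^{(k)})U_{\pi}\right]$ for a fixed $\pi$.

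First I would fix $\pi\in S_{k}$ and decompose $\set{1,\ldots,k}$ into its disjoint cyclic sets (orbits under $\pi$), as announced after Definition~\ref{cykliczny}; call them $X_{1},\ldots,X_{p(\pi)}$ with $X_{j}=\set{l_{j,1},\ldots,l_{j,q_{j}}}$ ordered cyclically as in~\eqref{lemcyclord}. The key observation is that the operator $U_{\pi}$, and hence the trace above, factorizes over the orbits: writing $\cH_{Y}^{\otimes k}\cong\bigotimes_{j=1}^{p(\pi)}\bigotimes_{l\in X_{j}}\cH_{Y}$ and noting that $\pi$ permutes each block $\bigotimes_{l\in X_{j}}\cH_{Y}$ within itself, one gets
\[
\Tr\left[(B^{(1)}\otimes\cdots\otimes B^{(k)})U_{\pi}\right]
=\prod_{j=1}^{p(\pi)}\Tr\left[\Bigl(\bigotimes_{s=1}^{q_{j}}B^{(l_{j,s})}\Bigr)U_{\sigma_{j}}\right],
\]
where $\sigma_{j}$ is the single $q_{j}$-cycle that $\pi$ induces on the block $X_{j}$. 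Thus the statement follows once one verifies the single-cycle identity $\Tr\left[(C_{1}\otimes\cdots\otimes C_{q})U_{\sigma}\right]=\Tr(C_{1}C_{2}\cdots C_{q})$ for a $q$-cycle $\sigma$ (with the appropriate cyclic ordering matching~\eqref{lemcyclord}), and it is here — evaluating a cyclic permutation operator against a tensor product — that the combinatorial heart of the lemma lies.

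The single-cycle identity is a standard computation: picking an orthonormal basis $\set{e_{m}}$ of $\cH_{Y}$, one has $\Tr\left[(C_{1}\otimes\cdots\otimes C_{q})U_{\sigma}\right]=\sum_{m_{1},\ldots,m_{q}}\inn{e_{m_{1}}\otimes\cdots\otimes e_{m_{q}}}{(C_{1}\otimes\cdots\otimes C_{q})U_{\sigma}(e_{m_{1}}\otimes\cdots\otimes e_{m_{q}})}$, and expanding $U_{\sigma}$ sends the $i$-th factor to the $\sigma(i)$-th slot, so the sum collapses to $\sum_{m_{1},\ldots,m_{q}}\inn{e_{m_{1}}}{C_{1}e_{m_{\sigma^{-1}(1)}}}\cdots$ which telescopes along the cycle into $\Tr(C_{1}C_{2}\cdots C_{q})$ with the factors ordered by successive application of $\sigma$. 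One must be careful to match the direction of the cycle with the ordering convention $\pi(l_{j,s})=l_{j,s+1}$ fixed in~\eqref{lemcyclord}; this bookkeeping — and, to a lesser extent, the argument that $U_{\pi}$ respects the orbit decomposition — is the main (though purely routine) obstacle. Assembling the product over $j$ and summing over $\pi\in S_{k}$ with the weight $\sgn\,\pi$ then yields~\eqref{pars1}. The remaining assertions $\sum_{j}q_{j}=k$ and $\bigcup_{j,s}\set{l_{j,s}}=\set{1,\ldots,k}$ are immediate from the fact that the cyclic sets partition $\set{1,\ldots,k}$.
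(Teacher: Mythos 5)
Your proposal is correct and follows essentially the same route as the paper: both expand the trace in an orthonormal basis, factor the resulting sum over the cyclic sets of each permutation, and telescope the sums along each cycle via Parseval's formula, your only organizational difference being that you isolate the permutation operators $U_{\pi}$ and a single-cycle identity as explicit intermediate steps. The cycle-direction bookkeeping you flag (matching $\sigma$ versus $\sigma^{-1}$ to the convention $\pi(l_{j,s})=l_{j,s+1}$) is indeed the only point requiring care, and it works out consistently with~\eqref{lemcyclord}.
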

\begin{proof} Let $\set{\varphi_{i}}_{i\in\N}$
be an orthonormal basis of $\cH_{Y}.$ One has
 \begin{equation}{\label{pars6}}
 k!\,\Tr\left(B_{Y,n}^{(1)}\otimes
 \cdots\otimes B_{Y,n}^{(k)}\right) A^{(k)}_{\cH_{Y}}
 =\sum_{\pi\in S_{k}}\sgn\,\pi\prod_{j=1}^{p(\pi)}M_j,
 \end{equation}
where
 \begin{equation*}
 M_j\eqdf\sum_{i_{l_{j,1}}=1}^{\infty}
 \ldots\sum_{i_{l_{j,q_{j}}}=1}^{\infty}
 \inn{\varphi_{i_{l_{j,1}}}}{B^{(l_{j,1})}
 \varphi_{i_{\pi(l_{j,1})}}}
 \cdots \inn{\varphi_{i_{l_{j,q_{j}}}}}{B^{(l_{j,q_{j}})}
 \varphi_{i_{\pi(l_{j,q_{j}})}}}.
 \end{equation*}
If $q_{j}>2$ for some $j\in\set{1,\ldots,p(\pi)}$ then, by
 condition~\eqref{lemcyclord} and Parseval's formula,
 \begin{align*}
 M_j  &=\sum_{i_{l_{j,1}}=1}^{\infty}
 \ldots \sum_{i_{l_{j,q_{j}}}=1}^{\infty} \\
 &\quad \inn{\varphi_{i_{l_{j,1}}}}{B^{(l_{j,1})}
 \varphi_{i_{l_{j,2}}}}\inn{\varphi_{i_{l_{j,2}}}}{B^{(l_{j,2})}
 \varphi_{i_{l_{j,3}}}}
 \cdots \cdots \inn{\varphi_{i_{l_{j,q_{j}}}}}{B^{(l_{j,q_{j}})}
 \varphi_{i_{l_{j,1}}}}
 \end{align*}
 \begin{align*}
 &=\sum_{i_{l_{j,1}}=1}^{\infty} \sum_{i_{l_{j,3}}=1}^{\infty}
 \ldots \sum_{i_{l_{j,q_{j}}}=1}^{\infty} \\
 &\qquad \inn{\varphi_{i_{l_{j,1}}}}{B^{(l_{j,1})}B^{(l_{j,2})}
 \varphi_{i_{l_{j,3}}}}
 \cdots \inn{\varphi_{i_{l_{j,q_{j}}}}}{B^{(l_{j,q_{j}})}
 \varphi_{i_{l_{j,1}}}}.
 \end{align*}
Performing successive summations one then obtains
 \begin{equation*}
 M_j =\sum_{i_{l_{j,1}}=1}^{\infty}
 \inn{\varphi_{i_{l_{j,1}}}}{\left(
 \prod_{s=1}^{q_{j}}B^{(l_{j,s})}\right)
 \varphi_{i_{l_{j,1}}}} =\Tr\prod_{s=1}^{q_{j}}B^{(l_{j,s})}.
 \end{equation*}
The derivation of the above formula for $q_j=1,2,$ after
simplifications, proceeds analogously. This completes the proof
of Eq.~\eqref{pars1}, in view of Eq.~\eqref{pars6}.
\end{proof}

 \begin{lemma}{\label{asnorm}}
 One has
 \begin{equation}{\label{normasferm}}
 \limt{n}{Y}\norm{\stackrel{\wedge}{\sigma}^{(1)}_{Y,n}}=0,
 \end{equation}
 and if
  $\stackrel{\vee}{\sigma}^{(2)}_{Y,n}
  \approx 2\stackrel{\vee}{\sigma}^{(1)}_{Y,n}
  \vee\stackrel{\vee}{\sigma}^{(1)}_{Y,n}$
 (see Theorem~\ref{glowne}) then
 \begin{equation}{\label{normasbos}}
 \limt{n}{Y}\norm{\stackrel{\vee}{\sigma}^{(1)}_{Y,n}}=0.
 \end{equation}
 \end{lemma}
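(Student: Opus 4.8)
The plan is to handle the two statements by different means: \eqref{normasferm} follows at once from Coleman's bound, while \eqref{normasbos} is extracted from the hypothesis by testing against the identity operator.

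For~\eqref{normasferm}, put $K_{Y,n}\eqdf(\xi^{\wedge}_{Y,n})^{-1}\rho_{Y}^{\wedge n}$ and note that $K_{Y,n}$ is an \mbox{$n$-fer}mion density operator on $\cH_{Y}$: since $\rho_{Y}\in\cB^{\ast}_{\geq 0}(\cH_{Y})\cap\cT(\cH_{Y}),$ the operator $\rho_{Y}^{\wedge n}=A^{(n)}_{\cH_{Y}}\rho_{Y}^{\otimes n}A^{(n)}_{\cH_{Y}}$ lies in $\cB^{\ast}_{\geq 0}(\cH_{Y}^{\otimes n})\cap\cT(\cH_{Y}^{\otimes n}),$ is concentrated on $\cH_{Y}^{\wedge n},$ and $\Tr\,K_{Y,n}=1$ because $\xi^{\wedge}_{Y,n}=\Tr\,\rho_{Y}^{\wedge n}>0;$ thus $K_{Y,n}\in\cD(\cH_{Y}^{\otimes n})$ and, being a density operator, $\norm{K_{Y,n}}\leq 1.$ Theorem~\ref{colemferm} then gives, for every $n\geq 2,$
\[
\norm{\stackrel{\wedge}{\sigma}^{(1)}_{Y,n}}=\norm{\Lc^{1}_{n}K_{Y,n}}\leq\frac{1}{n}\norm{K_{Y,n}}\leq\frac{1}{n},
\]
and since the index $n$ runs to infinity along every sequence $\set{Y_{n}}_{n\in\N}$ admitted in Definition~\ref{granica}, the right-hand side tends to $0;$ this is~\eqref{normasferm}.

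For~\eqref{normasbos} I would use the assumed equivalence $\stackrel{\vee}{\sigma}^{(2)}_{Y,n}\approx 2\stackrel{\vee}{\sigma}^{(1)}_{Y,n}\vee\stackrel{\vee}{\sigma}^{(1)}_{Y,n}$ against the constant (hence admissible) family $C_{Y,n}\eqdf\Id^{\otimes 2}$ of operator norm $1.$ Writing $\sigma\eqdf\stackrel{\vee}{\sigma}^{(1)}_{Y,n}$ and using $S^{(2)}_{\cH_{Y}}=\frac{1}{2}(\Id+U),$ where $U\in\cB(\cH_{Y}^{\otimes 2})$ is the transposition operator, the idempotency of $S^{(2)}_{\cH_{Y}}$ and the identity $\Tr\,U(\sigma\otimes\sigma)=\Tr\,\sigma^{2}$ give
\[
\Tr\,\bigl(2\,\sigma\vee\sigma\bigr)=2\,\Tr\,(\sigma\otimes\sigma)S^{(2)}_{\cH_{Y}}=(\Tr\,\sigma)^{2}+\Tr\,\sigma^{2}=1+\Tr\,\sigma^{2},
\]
since $\Tr\,\sigma=1.$ As $\Tr\,\stackrel{\vee}{\sigma}^{(2)}_{Y,n}=1$ as well, the defining property~\eqref{relacja1} of $\approx$ applied to $C_{Y,n}=\Id^{\otimes 2}$ yields $\limt{n}{Y}\Tr\,(\stackrel{\vee}{\sigma}^{(1)}_{Y,n})^{2}=0.$ Since $\stackrel{\vee}{\sigma}^{(1)}_{Y,n}$ is a density operator, its operator norm equals its largest eigenvalue, whence $\norm{\stackrel{\vee}{\sigma}^{(1)}_{Y,n}}^{2}\leq\Tr\,(\stackrel{\vee}{\sigma}^{(1)}_{Y,n})^{2};$ combining this with the previous limit and the nonnegativity of the norm gives~\eqref{normasbos}.

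The one subtle choice is the test operator in the boson case. Testing the difference $\stackrel{\vee}{\sigma}^{(2)}_{Y,n}-2\stackrel{\vee}{\sigma}^{(1)}_{Y,n}\vee\stackrel{\vee}{\sigma}^{(1)}_{Y,n}$ against the rank-one projections $P_{\varphi}\otimes P_{\varphi}$ onto products of a leading eigenvector $\varphi$ of $\stackrel{\vee}{\sigma}^{(1)}_{Y,n}$ only yields $\limsup\norm{\stackrel{\vee}{\sigma}^{(1)}_{Y,n}}\leq 1/\sqrt{2}$ (because $\Tr\,(2\,\sigma\vee\sigma)(P_{\varphi}\otimes P_{\varphi})=2\norm{\sigma}^{2}$ while $\Tr\,\stackrel{\vee}{\sigma}^{(2)}_{Y,n}(P_{\varphi}\otimes P_{\varphi})\leq 1$), which is too weak; it is the test against $\Id^{\otimes 2}$ that converts the trace defect $\Tr\,\sigma^{2}$ between the two sides of the equivalence directly into the desired vanishing. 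The fermion statement, by contrast, is automatic once $K_{Y,n}$ is recognized as a density operator, Theorem~\ref{colemferm} doing the rest.
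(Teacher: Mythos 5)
Your proposal is correct and follows essentially the same route as the paper: the fermion bound comes straight from Theorem~\ref{colemferm} together with $\norm{K_{Y,n}}\leq\Tr\,K_{Y,n}=1$, and the boson bound is obtained by testing the assumed equivalence against $\Id^{\otimes 2}$, using $\Tr\,\bigl(2\,\sigma\vee\sigma\bigr)=(\Tr\,\sigma)^{2}+\Tr\,\sigma^{2}$ (which the paper derives via an orthonormal-basis expansion rather than the transposition operator $U$, an immaterial difference) and then the elementary estimate $\norm{\sigma}^{2}\leq\Tr\,\sigma^{2}$ for a positive trace class operator.
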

\begin{proof}
To prove Eq.~\eqref{normasferm} it suffices to observe that,
according to Theorem~\ref{colemferm},
 \begin{equation*}
 \norm{\stackrel{\wedge}{\sigma}^{(1)}_{Y,n}}
 =\norm{\Lc^{1}_{n} \left(\frac{1}{\xi^{\wedge}_{Y,n}}\rho_{Y}^{\wedge n}\right)}
 \leq\frac{1}{n}\frac{1}{\xi^{\wedge}_{Y,n}}
 \norm{\rho_{Y}^{\wedge n}}
 \leq\frac{1}{n}\frac{1}{\xi^{\wedge}_{Y,n}}\Tr\rho_{Y}^{\wedge n}
 =\frac{1}{n}.
 \end{equation*}

Now Eq.~\eqref{normasbos} will be proved. Let
$\set{\varphi_{i}}_{i\in\N}$ be an orthonormal basis of $\cH_{Y}$
for fixed $Y\in\cM(\Omega).$ Then
 \begin{align}{\label{asnorm6}}
 \nonumber \Tr\,2\stackrel{\vee}{\sigma}^{(1)}_{Y,n}
 \vee\stackrel{\vee}{\sigma}^{(1)}_{Y,n}
 &=2\,\Tr\left(\stackrel{\vee}{\sigma}^{(1)}_{Y,n}
 \otimes\stackrel{\vee}{\sigma}^{(1)}_{Y,n}\right)
 S^{(2)}_{\cH_{Y}} \\
 &\nonumber =\sum_{\pi\in S_{2}}\sum_{i_{1},i_{2}=1}^{\infty}
 \inn{\varphi_{i_{1}}}{\stackrel{\vee}{\sigma}^{(1)}_{Y,n}
 \varphi_{i_{\pi(1)}}}
 \inn{\varphi_{i_{2}}}{\stackrel{\vee}{\sigma}^{(1)}_{Y,n}
 \varphi_{i_{\pi(2)}}} \\
 & =\left(\Tr\,\stackrel{\vee}{\sigma}^{(1)}_{Y,n}\right)^{2}
 +\Tr\left(\stackrel{\vee}{\sigma}^{(1)}_{Y,n}
 \stackrel{\vee}{\sigma}^{(1)}_{Y,n}\right).
 \end{align}
Taking into account Eq.~\eqref{asnorm6}, the relation
 $\stackrel{\vee}{\sigma}^{(2)}_{Y,n}
 \approx 2\stackrel{\vee}{\sigma}^{(1)}_{Y,n}
 \vee\stackrel{\vee}{\sigma}^{(1)}_{Y,n},$
Definition~\ref{relacja} for
 $C_{Y,n} =\Id^{\otimes 2},$
and the equality
 $\Tr\,\stackrel{\vee}{\sigma}^{(2)}_{Y,n}
 =\Tr\,\stackrel{\vee}{\sigma}^{(1)}_{Y,n}=1,$
one obtains
 \begin{equation}{\label{asnorm7}}
 \limt{n}{Y}\Tr\,\left(\stackrel{\vee}{\sigma}^{(1)}_{Y,n}
 \stackrel{\vee}{\sigma}^{(1)}_{Y,n}\right)=0.
 \end{equation}
Furthermore,
 \begin{equation*}
 \norm{\stackrel{\vee}{\sigma}^{(1)}_{Y,n}\varphi}^{2}
 =\inn{\varphi}{\stackrel{\vee}{\sigma}^{(1)}_{Y,n}
 \stackrel{\vee}{\sigma}^{(1)}_{Y,n}\varphi}
 \leq\Tr\,\left(\stackrel{\vee}{\sigma}^{(1)}_{Y,n}
 \stackrel{\vee}{\sigma}^{(1)}_{Y,n}\right)
 \end{equation*}
for every $\varphi\in\cH_{Y}$ such that $\norm{\varphi}=1,$
hence Eq.~\eqref{asnorm7} yields Eq.~\eqref{normasbos}.
\end{proof}

Notice that Eq.~\eqref{normasferm} can be also proved analogously
to Eq.~\eqref{normasbos} under the additional assumption
 $\stackrel{\wedge}{\sigma}^{(2)}_{Y,n}
 \approx 2\stackrel{\wedge}{\sigma}^{(1)}_{Y,n}
 \wedge\stackrel{\wedge}{\sigma}^{(1)}_{Y,n}.$

The proof of the next theorem for $k=2$ was given
in~\cite{KossakowskiRMP86,MackowiakPR99}.

 \begin{theorem}[Asymptotic formulae II]{\label{zmiana}}
 Let $k\in\N,$ $k\geq 2.$ One has
 \begin{equation}{\label{zmiana1}}
 k!\underbrace{\stackrel{\wedge}{\sigma}^{(1)}_{Y,n}\wedge
 \ldots \wedge\stackrel{\wedge}{\sigma}^{(1)}_{Y,n}}_{k}
 \sim \underbrace{\stackrel{\wedge}{\sigma}^{(1)}_{Y,n}\otimes
 \cdots \otimes\stackrel{\wedge}{\sigma}^{(1)}_{Y,n}}_{k},
 \end{equation}
 and if
  $\displaystyle \limt{n}{Y}
  \norm{\stackrel{\vee}{\sigma}^{(1)}_{Y,n}}=0$
 (see Lemma~\ref{asnorm}) then
 \begin{equation}{\label{zmiana2}}
 k!\underbrace{\stackrel{\vee}{\sigma}^{(1)}_{Y,n}\vee
 \cdots \vee\stackrel{\vee}{\sigma}^{(1)}_{Y,n}}_{k}
 \sim \underbrace{\stackrel{\vee}{\sigma}^{(1)}_{Y,n}\otimes
 \cdots \otimes\stackrel{\vee}{\sigma}^{(1)}_{Y,n}}_{k}.
 \end{equation}
 \end{theorem}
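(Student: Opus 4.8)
The plan is to compute the trace $\Tr\,(k!\,\sigma^{\wedge(1)}_{Y,n}\wedge\cdots\wedge\sigma^{\wedge(1)}_{Y,n} - \sigma^{\wedge(1)}_{Y,n}\otimes\cdots\otimes\sigma^{\wedge(1)}_{Y,n})C_{Y,n}$ for a product test operator $C_{Y,n}=\bigotimes_{i=1}^k C^{(i)}_{Y,n}$ with uniformly bounded norms, show it is a sum of terms each carrying at least one factor of the form $\Tr\,(\sigma^{\wedge(1)}_{Y,n}\cdots)$ with at least two $\sigma$'s multiplied together, and then bound those factors by $\norm{\sigma^{\wedge(1)}_{Y,n}}$, which tends to $0$ by Lemma~\ref{asnorm}. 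First I would apply Lemma~\ref{pars} with $B^{(i)}_{Y,n}\eqdf\sigma^{\wedge(1)}_{Y,n}C^{(i)}_{Y,n}$: this expresses $k!\,\Tr\,(\bigotimes_i\sigma^{\wedge(1)}_{Y,n}C^{(i)}_{Y,n})A^{(k)}_{\cH_Y}$ as $\sum_{\pi\in S_k}\sgn\pi\prod_{j=1}^{p(\pi)}\Tr\prod_{s=1}^{q_j}\sigma^{\wedge(1)}_{Y,n}C^{(l_{j,s})}_{Y,n}$. Since $A^{(k)}_{\cH_Y}$ commutes with $(\sigma^{\wedge(1)}_{Y,n})^{\otimes k}$ (by the explicit form from Theorem~\ref{jawny}, $\sigma^{\wedge(1)}_{Y,n}$ is a function of $\rho_Y$, so the tensor power commutes with every $A^{(k)}_{\cH_Y}$, $S^{(k)}_{\cH_Y}$), we have $\Tr\,(k!\,\sigma^{\wedge(1)}_{Y,n}\wedge\cdots\wedge\sigma^{\wedge(1)}_{Y,n})C_{Y,n} = k!\,\Tr\,(\bigotimes_i\sigma^{\wedge(1)}_{Y,n}C^{(i)}_{Y,n})A^{(k)}_{\cH_Y}$, so this is exactly the Lemma~\ref{pars} expansion.

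Next I would isolate the identity permutation: its contribution is $\sgn(\mathrm{id})\prod_{i=1}^k\Tr\,\sigma^{\wedge(1)}_{Y,n}C^{(i)}_{Y,n} = \Tr\,(\sigma^{\wedge(1)}_{Y,n}\otimes\cdots\otimes\sigma^{\wedge(1)}_{Y,n})C_{Y,n}$, which cancels against the tensor-product term in the difference. So $\Tr\,(k!\,\sigma^{\wedge(1)}_{Y,n}\wedge\cdots\wedge\sigma^{\wedge(1)}_{Y,n} - \sigma^{\wedge(1)}_{Y,n}\otimes\cdots\otimes\sigma^{\wedge(1)}_{Y,n})C_{Y,n} = \sum_{\pi\neq\mathrm{id}}\sgn\pi\prod_{j=1}^{p(\pi)}\Tr\prod_{s=1}^{q_j}\sigma^{\wedge(1)}_{Y,n}C^{(l_{j,s})}_{Y,n}$. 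For $\pi\neq\mathrm{id}$ there is at least one cyclic set of size $q_j\geq 2$; the corresponding factor $\Tr\prod_{s=1}^{q_j}\sigma^{\wedge(1)}_{Y,n}C^{(l_{j,s})}_{Y,n}$ has trace norm at most $\norm{\sigma^{\wedge(1)}_{Y,n}}^{q_j-1}\prod_s\norm{C^{(l_{j,s})}_{Y,n}}\,\Tr\,\abs{\sigma^{\wedge(1)}_{Y,n}} \leq \norm{\sigma^{\wedge(1)}_{Y,n}}\cdot(\text{const})$, using $q_j-1\geq 1$, submultiplicativity of the trace norm under left/right multiplication by bounded operators, $\Tr\,\abs{\sigma^{\wedge(1)}_{Y,n}}=1$, and the uniform bound on $\norm{C^{(i)}_{Y,n}}$. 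The remaining factors ($j$ with $q_j=1$, or other factors) are uniformly bounded by the same estimate with exponent $0$. Hence each summand is $O(\norm{\sigma^{\wedge(1)}_{Y,n}})$ uniformly in $Y,n$, and since there are finitely many permutations, the whole difference is $O(\norm{\sigma^{\wedge(1)}_{Y,n}})$. By Eq.~\eqref{normasferm} of Lemma~\ref{asnorm} this tends to $0$ in the thermodynamic limit, establishing~\eqref{zmiana1}.

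For~\eqref{zmiana2} the argument is the same with $A^{(k)}_{\cH_Y}$ replaced by $S^{(k)}_{\cH_Y}$, the determinant-style signs $\sgn\pi$ replaced by $+1$ (the symmetric analogue of Lemma~\ref{pars}, which follows by the same Parseval computation without signs — or one can simply note that the proof of Lemma~\ref{pars} used no antisymmetry beyond carrying the signs along), and $\norm{\sigma^{\wedge(1)}_{Y,n}}\to 0$ replaced by the hypothesis $\limt{n}{Y}\norm{\sigma^{\vee(1)}_{Y,n}}=0$. I do not anticipate a serious obstacle; the only point needing care is the justification that the identity-permutation term from Lemma~\ref{pars} is precisely the tensor-product trace and that $A^{(k)}_{\cH_Y}$ genuinely commutes with $(\sigma^{\wedge(1)}_{Y,n})^{\otimes k}$ — both are immediate from the explicit formula in Theorem~\ref{jawny}, which shows $\sigma^{\wedge(1)}_{Y,n}$ is a polynomial in $\rho_Y$ (indeed $\xi^{\wedge}_{Y,n}{}^{-1}\,\frac{1}{n}\Pi^{\wedge 1}_n(\rho_Y)$). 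The mild bookkeeping concern is merely writing the cyclic-set bound cleanly, which the estimate above already does.
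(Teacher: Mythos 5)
Your proposal is correct and follows essentially the same route as the paper's own proof: setting $B^{(r)}_{Y,n}=\stackrel{\wedge}{\sigma}^{(1)}_{Y,n}C^{(r)}_{Y,n}$, expanding via Lemma~\ref{pars}, cancelling the identity-permutation term against the tensor-product trace, and bounding each remaining summand by a positive power of $\norm{\stackrel{\wedge}{\sigma}^{(1)}_{Y,n}}$ times uniformly bounded factors, then invoking Lemma~\ref{asnorm}. The only difference is that you spell out the commutation of $A^{(k)}_{\cH_Y}$ with $(\stackrel{\wedge}{\sigma}^{(1)}_{Y,n})^{\otimes k}$, which the paper leaves implicit.
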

\begin{proof}
First Eq.~\eqref{zmiana1} will be proved. Fix a~family
 $\rodz{C}{Y}{n}$
of operators such as in Definition~\ref{rel} and set
 \begin{equation*}
 B_{Y,n}^{(r)}
 \eqdf\stackrel{\wedge}{\sigma}^{(1)}_{Y,n}C_{Y,n}^{(r)},
 \quad r=1,\ldots,k.
 \end{equation*}
Then, by Lemma~\ref{pars}, one has
 \begin{align*}
 & \Tr\,k!\left(\stackrel{\wedge}{\sigma}^{(1)}_{Y,n}\wedge
 \ldots \wedge\stackrel{\wedge}{\sigma}^{(1)}_{Y,n}\right)
 \left(C_{Y,n}^{(1)}\otimes\cdots\otimes C_{Y,n}^{(k)}\right) \\
 & =k!\,\Tr\left(B_{Y,n}^{(1)}\otimes
 \cdots\otimes B_{Y,n}^{(k)}\right) A^{(k)}_{\cH_{Y}}
 =\sum_{\pi\in S_{k}}\sgn\,\pi\prod_{j=1}^{p(\pi)}
 \Tr\prod_{s=1}^{q_{j}}B_{Y,n}^{(l_{j,s})}
 \end{align*}
 \begin{align*}
 & =\Tr\left(\stackrel{\wedge}{\sigma}^{(1)}_{Y,n}\otimes
 \cdots \otimes\stackrel{\wedge}{\sigma}^{(1)}_{Y,n}\right)
 \left(C_{Y,n}^{(1)}\otimes\cdots\otimes C_{Y,n}^{(k)}\right) \\
 &\quad +\sum_{\substack{\pi\in S_{k} \\ \pi\not =\mathrm{Id}}}\sgn\,\pi
 \prod_{j=1}^{p(\pi)}
 \Tr\prod_{s=1}^{q_{j}}B_{Y,n}^{(l_{j,s})}.
 \end{align*}
Thus
 \begin{multline}{\label{zmiana3}}
 \Tr\left(k!\stackrel{\wedge}{\sigma}^{(1)}_{Y,n}\wedge
 \ldots \wedge\stackrel{\wedge}{\sigma}^{(1)}_{Y,n}
 -\stackrel{\wedge}{\sigma}^{(1)}_{Y,n}\otimes
 \cdots\otimes\stackrel{\wedge}{\sigma}^{(1)}_{Y,n}\right)
 \left(C_{Y,n}^{(1)}\otimes\cdots\otimes C_{Y,n}^{(k)}\right) \\
 =\sum_{\substack{\pi\in S_{k} \\ \pi\not=\mathrm{Id}}}\sgn\,\pi
 \prod_{j=1}^{p(\pi)}
 \Tr\prod_{s=1}^{q_{j}}B_{Y,n}^{(l_{j,s})}.
 \end{multline}
Now, let $\pi\in S_{k},$ $\pi\not=\mathrm{Id},$ be fixed. If
$q_{j}=1$ for some $j\in\set{1,\ldots,p(\pi)}$ then
 \begin{equation*}
 \abs{\Tr\prod_{s=1}^{q_{j}}B_{Y,n}^{(l_{j,s})}}
 \equiv\abs{\Tr\,B_{Y,n}^{(l_{j,1})}}
 \leq\norm{C_{Y,n}^{(l_{j,1})}}
 \,\Tr\abs{\stackrel{\wedge}{\sigma}^{(1)}_{Y,n}}
 =\norm{C_{Y,n}^{(l_{j,1})}},
 \end{equation*}
whereas if $q_{j}\geq 2$ then
 \begin{align*}
 \abs{\Tr\prod_{s=1}^{q_{j}}B_{Y,n}^{(l_{j,s})}}
 &\leq\norm{\prod_{s=1}^{q_{j}-1}B_{Y,n}^{(l_{j,s})}}
 \,\Tr\abs{B_{Y,n}^{(l_{j,q_{j}})}} \\
 & \leq\norm{\stackrel{\wedge}{\sigma}^{(1)}_{Y,n}}^{q_{j}-1}
 \left(\prod_{s=1}^{q_{j}-1}\norm{C_{Y,n}^{(l_{j,s})}}\right)
 \norm{C_{Y,n}^{(l_{j,q_{j}})}}
 \,\Tr\abs{\stackrel{\wedge}{\sigma}^{(1)}_{Y,n}} \\
 & \leq\norm{\stackrel{\wedge}{\sigma}^{(1)}_{Y,n}}^{q_{j}-1}
 \prod_{s=1}^{q_{j}}\norm{C_{Y,n}^{(l_{j,s})}}.
 \end{align*}
Since $\pi\not=\mathrm{Id},$ there exists at least one
 $j\in\set{1,\ldots,p(\pi)}$
such that $q_{j}\geq 2,$ hence
 \begin{align*}
 \abs{\prod_{j=1}^{p(\pi)}
 \Tr\prod_{s=1}^{q_{j}}B_{Y,n}^{(l_{j,s})}}
 & \leq\left(\prod_{j=1}^{p(\pi)}
 \norm{\stackrel{\wedge}{\sigma}^{(1)}_{Y,n}}^{q_{j}-1}\right)
 \prod_{j=1}^{p(\pi)} \prod_{s=1}^{q_{j}}
 \norm{C_{Y,n}^{(l_{j,s})}} \\
 & =\norm{C_{Y,n}}\prod_{j=1}^{p(\pi)}
 \norm{\stackrel{\wedge}{\sigma}^{(1)}_{Y,n}}^{q_{j}-1}
 \end{align*}
and at least one exponent $q_{j}-1$ is nonzero. Thus, by the
uniform boundedness of the norms $\norm{C_{Y,n}}$ and
Lemma~\ref{asnorm}, the termodynamic limit of the l.h.s
of~\eqref{zmiana3} equals $0,$
which proves the validity of relation~\eqref{zmiana1}.

The proof of relation~\eqref{zmiana2}, after discarding the
permutation signs and replacing $\wedge$ by $\vee,$ proceeds
analogously.
\end{proof}

\appendix

\section{Product integral kernels of trace class
operators}{\label{sectjad}}

In this section theorems concerning product integral kernels,
exploited in Section~\ref{potkontr}, are formulated.

Fix the Hilbert space $\cH_Y \eqdf L^{2}(Y,\mu)$ over the field
$\K=\C$ or $\R,$ where the measure $\mu$ is separable and
\mbox{$\sigma$-fi}nite. For every $n\in\N$ the space
 $\cH_Y^{\otimes n}$ is identified with
 $L^2(Y^n,\mu^{\otimes n}).$
Unless otherwise stated,
elements of $L^2$ spaces are identified with their
representatives and denoted by the same symbols.

Let $\cK\in L^2(Y^2,\mu^{\otimes 2}).$ In the case of the
integral operator
 $K\colon H_Y\to H_Y$
defined for every $\varphi\in H_Y$ and
\mbox{$\mu$-a.a.} $x\in Y$ by
 \begin{equation}{\label{intdef}}
 (K\varphi)(x) =\int_{Y}\cK(x,y)\varphi(y)\,\dx\mu(y)
 \end{equation}
both $\cK$ regarded as an element of $L^2(Y^2,\mu^{\otimes 2})$
as well as its arbitrary representative is called
an \emph{integral kernel of $K$}. The kernel $\cK$ is unique
as an element of $L^2(Y^2,\mu^{\otimes 2})$ but a~representative of
$\cK$ of a~special form,
given in Lemma~\ref{tr} and Definition~\ref{ilkern}, is useful in
computations of the trace of $K.$

Let $\cHS(\cH_Y)$ be the space of Hilbert-Schmidt operators on
$\cH_Y$ with the inner product defined by
 $\inn{A}{B}_{\cHS(\cH_Y)} \eqdf\Tr\,A^{\ast}B$
and the induced norm denoted by $\norm{\cdot}_{\cHS(\cH_Y)}.$
In the sequel use is made of the following theorem, the proof of
which can be found in~\cite{SchattenSV60}.

 \begin{theorem}{\label{hscalk}}
 An operator $K\in\cB(H_Y)$ is
 Hilbert-Schmidt iff it is an integral operator with an
 integral kernel
  $\cK\in L^2(Y^2,\mu^{\otimes 2}).$
 Furthermore,
  $\norm{K}_{\cHS(H_Y)}
  =\norm{\cK}_{L^2(Y^2,\mu^{\otimes 2})}.$
 \end{theorem}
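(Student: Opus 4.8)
The plan is to identify the map that sends a kernel $\cK\in L^2(Y^2,\mu^{\otimes 2})$ to the operator $K$ defined by~\eqref{intdef} with a norm-preserving linear bijection onto the Hilbert space $\cHS(\cH_Y),$ using matrix elements with respect to a fixed orthonormal basis of $\cH_Y$ as a common system of coordinates. Since $\mu$ is separable, $\cH_Y$ is a separable Hilbert space; I would fix an orthonormal basis $\set{\varphi_i}_{i\in\N}$ of $\cH_Y$ and record at the outset that the functions $e_{ij}$ on $Y\times Y$ given by $e_{ij}(x,y)=\varphi_i(x)\overline{\varphi_j(y)}$ (without the conjugation in the real case) form an orthonormal basis of $L^2(Y^2,\mu^{\otimes 2}).$ This is the standard fact that the $L^2$ space of a product of $\sigma$-finite measures is the Hilbert-space tensor product of the factors, together with the fact that a tensor product of orthonormal bases is an orthonormal basis; it is precisely here that the hypothesis on $\mu$ is used.

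Next I would verify that for every $\cK\in L^2(Y^2,\mu^{\otimes 2})$ formula~\eqref{intdef} indeed defines a bounded operator with $\norm{K}\le\norm{\cK}_{L^2(Y^2,\mu^{\otimes 2})}$: for fixed $\varphi\in\cH_Y,$ Fubini applied to $\abs{\cK}^2$ and the Cauchy--Schwarz inequality in the variable $y$ show that $\int_Y\cK(x,y)\varphi(y)\,\dx\mu(y)$ exists for $\mu$-a.a.~$x,$ and a further application of Cauchy--Schwarz and Fubini gives $\norm{K\varphi}^2\le\norm{\cK}_{L^2(Y^2,\mu^{\otimes 2})}^2\,\norm{\varphi}^2.$ Then, again by Fubini, I would compute the matrix elements $\inn{\varphi_i}{K\varphi_j}=\inn{e_{ij}}{\cK}_{L^2(Y^2,\mu^{\otimes 2})};$ summing the squares over $i,j$ and using Parseval's identity for the orthonormal basis $\set{e_{ij}}$ yields $\sum_{i,j}\abs{\inn{\varphi_i}{K\varphi_j}}^2=\norm{\cK}_{L^2(Y^2,\mu^{\otimes 2})}^2,$ so that $K$ is Hilbert--Schmidt with $\norm{K}_{\cHS(\cH_Y)}=\norm{\cK}_{L^2(Y^2,\mu^{\otimes 2})}.$ This proves the ``if'' direction together with the norm identity and exhibits $\cK\mapsto K$ as a linear isometry into $\cHS(\cH_Y).$

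For the converse, given $K\in\cHS(\cH_Y)$ I would set $c_{ij}=\inn{\varphi_i}{K\varphi_j};$ since $\sum_{i,j}\abs{c_{ij}}^2=\norm{K}_{\cHS(\cH_Y)}^2<+\infty,$ the series $\cK\eqdf\sum_{i,j}c_{ij}e_{ij}$ converges in $L^2(Y^2,\mu^{\otimes 2}),$ and by the matrix-element computation of the preceding step the integral operator associated with this $\cK$ has the same matrix elements as $K$ in the basis $\set{\varphi_i}$ and hence coincides with $K;$ thus $K$ is an integral operator with an $L^2$ kernel, which completes the equivalence. The step I expect to need the most care --- and essentially the only nontrivial point --- is the measure-theoretic bookkeeping: checking that $\set{e_{ij}}$ is an orthonormal basis of $L^2(Y^2,\mu^{\otimes 2})$ and that the interchanges of summation with integration used above (Fubini--Tonelli, dominated convergence) are legitimate under the assumption that $\mu$ is separable and $\sigma$-finite. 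Once these are settled the remainder is routine Hilbert-space linear algebra, which is no doubt why the proof is simply attributed to~\cite{SchattenSV60}.
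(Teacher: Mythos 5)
The paper offers no proof of this theorem at all --- it is stated as a known result with the proof delegated to the cited monograph of Schatten --- so there is nothing internal to compare against. Your argument is the standard one (the unitary identification of $L^2(Y^2,\mu^{\otimes 2})$ with $\cHS(\cH_Y)$ via matrix elements in a fixed orthonormal basis, using that $\set{\varphi_i\otimes\overline{\varphi_j}}$ is an orthonormal basis of the product $L^2$ space for a separable $\sigma$-finite measure), and it is correct and complete, including the two points that actually need checking: boundedness of the integral operator via Cauchy--Schwarz and Fubini, and totality of the product basis.
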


 \begin{corollary}{\label{hscalkcor}}
 Let $K,G\in\cHS(H_Y)$ and let
  $\cK,\cG\in L^2(Y^2,\mu^{\otimes 2})$
 be integral kernels of the operators $K,$ $G,$ respectively. Then
  $\inn{K}{G}_{\cHS(H_Y)}
  =\inn{\cK}{\cG}_{L^2(Y^2,\mu^{\otimes 2})}.$
 \end{corollary}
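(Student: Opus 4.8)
The plan is to reduce the stated identity to the norm identity of Theorem~\ref{hscalk} via polarization, the only additional ingredient being the linearity of the assignment $K\mapsto\cK$ from $\cHS(H_Y)$ to $L^2(Y^2,\mu^{\otimes 2})$.

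First I would record that linearity. Given $K,G\in\cHS(H_Y)$ with integral kernels $\cK,\cG$ and a~scalar $\lambda\in\K$, formula~\eqref{intdef} shows that for every $\varphi\in H_Y$ and \mbox{$\mu$-a.a.} $x\in Y$
\begin{equation*}
\bigl((K+\lambda G)\varphi\bigr)(x)=\int_{Y}(\cK+\lambda\cG)(x,y)\varphi(y)\,\dx\mu(y),
\end{equation*}
so $\cK+\lambda\cG$ is an integral kernel of $K+\lambda G$. Since $\cHS(H_Y)$ is a~vector space, $K+\lambda G\in\cHS(H_Y)$, and because the integral kernel is unique as an element of $L^2(Y^2,\mu^{\otimes 2})$ (Theorem~\ref{hscalk}), $\cK+\lambda\cG$ \emph{is} that kernel. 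Hence Theorem~\ref{hscalk} applies to $K+\lambda G$ and yields
\begin{equation*}
\norm{K+\lambda G}_{\cHS(H_Y)}^{2}=\norm{\cK+\lambda\cG}_{L^2(Y^2,\mu^{\otimes 2})}^{2}
\quad\text{for every }\lambda\in\K.
\end{equation*}

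Next I would invoke the polarization identity in both Hilbert spaces. For $\K=\C$, writing $\inn{\cdot}{\cdot}=\tfrac14\sum_{m=0}^{3}i^{-m}\norm{\cdot+i^{m}\cdot}^{2}$ and taking $\lambda=i^{m}$ in the displayed norm equality term by term, the sums over $m$ coincide, so $\inn{K}{G}_{\cHS(H_Y)}=\inn{\cK}{\cG}_{L^2(Y^2,\mu^{\otimes 2})}$; for $\K=\R$ the same argument with $\lambda=\pm1$ and $\inn{\cdot}{\cdot}=\tfrac14(\norm{\cdot+\cdot}^{2}-\norm{\cdot-\cdot}^{2})$ does the job. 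There is no serious obstacle here: the only step demanding a~moment's care is the uniqueness remark, ensuring that the algebraically obvious kernel $\cK+\lambda\cG$ of $K+\lambda G$ is exactly the kernel to which Theorem~\ref{hscalk} refers, so that the norm identity may legitimately be read off at the level of the $L^2$ representatives.
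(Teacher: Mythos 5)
Your proof is correct: the paper states this as an unproved corollary of Theorem~\ref{hscalk}, and deducing the inner-product identity from the norm identity via linearity of the kernel assignment (justified by uniqueness of the $L^2$ kernel) together with polarization is exactly the intended route. The one step worth making explicit, the identification of $\cK+\lambda\cG$ as \emph{the} kernel of $K+\lambda G$, is handled properly.
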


Recall that
 $K\in\cB(H_Y)$
is a~trace class operator iff there exist operators
 $K_{1},K_{2}\in\cHS(H_Y)$
such that $K=K_{1}K_{2}.$ Moreover,
 $\Tr\,K
 =\inn{K_{1}^{\ast}}{K_{2}}_{\cHS(H_Y)}.$
This fact, Theorem~\ref{hscalk}, and Corollary~\ref{hscalkcor}
imply the following lemma, in which elements of the $L^2$ space
are distinguished from their representatives. The element of
the $L^2$ space represented by a~function $f$ is denoted by $[f].$

 \begin{lemma}{\label{tr}}
 Let
  $K\in\cT\left(H_Y\right),$ $K =K_{1}K_{2},$
 where
  $K_{1},K_{2}\in\cHS\left(H_Y\right).$
 Let
  $[\cK_{1}],[\cK_{2}]\in L^2(Y^2,\mu^{\otimes 2})$
 be integral kernels of $K_{1},K_{2}.$ Then for any choice of
 representatives $\cK_1\in [\cK_1],$ $\cK_2\in [\cK_2]$ the function
 $\cK\colon Y\times Y\to\K$
 defined for \mbox{$\mu^{\otimes 2}$-a.a.} $(x,y)\in Y\times Y$ by
  \begin{equation}{\label{tr1}}
  \cK(x,y) =\int_{Y}\cK_{1}(x,z)\cK_{2}(z,y)\,\dx\mu(z)
  \end{equation}
 is \mbox{$\mu^{\otimes 2}$-square} integrable and it is an
 integral kernel of $K.$ The function $\cL\colon Y\to\K$ defined
 for \mbox{$\mu$-a.a.} $x\in Y$ by $\cL(x) =\cK(x,x)$ is
 \mbox{$\mu$-inte}grable. Moreover,
  \begin{equation}{\label{tr2}}
  \Tr\,K =\int_{Y}\cL(x)\dx\mu(x) \equiv\int_{Y}\cK(x,x)\,\dx\mu(x).
  \end{equation}
 \end{lemma}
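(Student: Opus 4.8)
The plan is to establish the four assertions in turn: that the integral \eqref{tr1} converges pointwise for almost every $(x,y)$, that the resulting $\cK$ lies in $L^2(Y^2,\mu^{\otimes 2})$, that $\cK$ is an integral kernel of $K$, and finally the trace formula \eqref{tr2}. Only the last step uses genuinely new input, namely the Hilbert--Schmidt identifications of Theorem~\ref{hscalk} and Corollary~\ref{hscalkcor}; the first three are soft measure-theoretic bookkeeping.

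First I would invoke the slice property of $L^2$ functions: applying Tonelli's theorem to $\abs{\cK_1}^2$ and $\abs{\cK_2}^2$ (on the $\sigma$-finite space $(Y,\mu)$) shows that $\cK_1(x,\cdot)\in L^2(Y,\mu)$ for $\mu$-a.a.\ $x$ and $\cK_2(\cdot,y)\in L^2(Y,\mu)$ for $\mu$-a.a.\ $y$, with $\int_Y\norm{\cK_1(x,\cdot)}^2\dx\mu(x)=\norm{\cK_1}^2_{L^2(Y^2,\mu^{\otimes 2})}$ and similarly for $\cK_2$. Hence for $\mu^{\otimes 2}$-a.a.\ $(x,y)$ the integrand in \eqref{tr1} is a product of two $L^2(Y,\mu)$ functions of $z$, the integral converges absolutely, and Cauchy--Schwarz gives $\abs{\cK(x,y)}\le\norm{\cK_1(x,\cdot)}\,\norm{\cK_2(\cdot,y)}$. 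Squaring and integrating over $Y^2$ (Tonelli again) yields $\norm{\cK}_{L^2(Y^2,\mu^{\otimes 2})}\le\norm{\cK_1}_{L^2}\norm{\cK_2}_{L^2}<+\infty$, so $\cK$ is $\mu^{\otimes 2}$-square integrable. The same estimate on the diagonal, using $\cK_2(\cdot,x)\in L^2(Y,\mu)$ for $\mu$-a.a.\ $x$, gives $\int_Y\abs{\cK(x,x)}\dx\mu(x)\le\norm{\cK_1}_{L^2}\norm{\cK_2}_{L^2}<+\infty$, which is the $\mu$-integrability of $\cL$. The diagonal restriction is legitimate because \eqref{tr1} defines $\cK$ as a genuine function wherever its defining integral converges, not merely as an $L^2$ class; a routine argument (modifying $\cK_1$ or $\cK_2$ on a $\mu^{\otimes 2}$-null set alters the slices only for $x$, resp.\ $y$, in a $\mu$-null set) shows that the resulting $\cK$, its $L^2$-class, and the number $\int_Y\cK(x,x)\dx\mu(x)$ do not depend on the chosen representatives.

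Next I would check that $\cK$ is an integral kernel of $K=K_1K_2$. For $\varphi\in H_Y$ put $\psi\eqdf K_2\varphi\in H_Y$; then for $\mu$-a.a.\ $x$ one has $(K\varphi)(x)=(K_1\psi)(x)=\int_Y\cK_1(x,z)\psi(z)\dx\mu(z)$ by \eqref{intdef}, and $\psi(z)=\int_Y\cK_2(z,y)\varphi(y)\dx\mu(y)$ for $\mu$-a.a.\ $z$, both identities being insensitive to null-set modifications. For every $x$ with $\cK_1(x,\cdot)\in L^2(Y,\mu)$ the iterated integral of $\abs{\cK_1(x,z)}\,\abs{\cK_2(z,y)}\,\abs{\varphi(y)}$ is finite — the $y$-integral is $\le\norm{\cK_2(z,\cdot)}\norm{\varphi}$, and integrating that against $\cK_1(x,\cdot)$ is again controlled by Cauchy--Schwarz using $\int_Y\norm{\cK_2(z,\cdot)}^2\dx\mu(z)<+\infty$ — so Fubini's theorem permits interchanging the $y$- and $z$-integrations, giving $(K\varphi)(x)=\int_Y\cK(x,y)\varphi(y)\dx\mu(y)$ for $\mu$-a.a.\ $x$; that is, \eqref{intdef} holds for $K$.

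Finally, for the trace I would use that $K=K_1K_2$ with $K_1,K_2\in\cHS(H_Y)$ forces $\Tr\,K=\inn{K_1^{\ast}}{K_2}_{\cHS(H_Y)}$. An integral kernel of $K_1^{\ast}$ is the function $(x,y)\mapsto\overline{\cK_1(y,x)}$, so Corollary~\ref{hscalkcor} and the definition of the $L^2$ inner product give $\Tr\,K=\int_{Y^2}\cK_1(y,x)\,\cK_2(x,y)\dx\mu^{\otimes 2}(x,y)$. This integrand is absolutely integrable on $Y^2$ (one more Cauchy--Schwarz, since $(x,y)\mapsto\cK_1(y,x)$ is again in $L^2(Y^2,\mu^{\otimes 2})$ of norm $\norm{\cK_1}_{L^2}$), so by Fubini and a relabelling of the two integration variables it equals $\int_Y\bigl(\int_Y\cK_1(x,z)\cK_2(z,x)\dx\mu(z)\bigr)\dx\mu(x)=\int_Y\cK(x,x)\dx\mu(x)$, which is \eqref{tr2}. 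The step I expect to demand the most care is precisely this one: correctly identifying the kernel of the adjoint (conjugate \emph{and} transpose, under the sesquilinear conventions fixed for $\inn{\cdot}{\cdot}$ and $\inn{\cdot}{\cdot}_{\cHS}$), and matching the two dummy variables of the $L^2(Y^2,\mu^{\otimes 2})$ pairing against those of \eqref{tr1}, so that the Hilbert--Schmidt inner product genuinely reproduces the diagonal integral. In the real scalar case the conjugations are vacuous and only the transposition is at issue.
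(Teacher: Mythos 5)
Your proposal is correct and follows essentially the same route the paper indicates: it sketches Lemma~\ref{tr} as a consequence of the factorization $K=K_1K_2$, the identity $\Tr\,K=\inn{K_1^{\ast}}{K_2}_{\cHS(H_Y)}$, Theorem~\ref{hscalk}, and Corollary~\ref{hscalkcor}, and you have simply supplied the Tonelli/Cauchy--Schwarz/Fubini bookkeeping (including the representative-independence of the diagonal) that the paper leaves implicit.
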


 \begin{definition}{\label{ilkern}}
 Under the assumptions of Lemma~\ref{tr}, the function $\cK$ given
 by formula~\eqref{tr1} (for any choice of representatives $\cK_1,$
 $\cK_2$ of $[\cK_1],[\cK_2]$)
 is called a~\emph{product integral kernel of $K$}.
 \end{definition}

Notice that for $\mu$ being the Lebesgue measure on
$[0,1]\times[0,1]$ formula~\eqref{tr2} is valid, for example, if
$\cK$ is any continuous function.

In the following lemma, which follows from Lemma~\ref{tr},
the function $\cK_0$ need not be a~product integral
kernel of $K_0$ but the integral formula for the trace of $K_0$
still holds for $\cK_0.$

 \begin{lemma}{\label{parttr}}
 Let $k,n\in\N,$ $k<n,$ and let $\cK$ be a~product integral kernel
 of
  $K\in\cT\left(H_Y^{\otimes n}\right)
  \equiv\cT\left(L^2(Y^n,\mu^{\otimes n})\right).$
 Then the function
  $\cK_{0}\colon Y^{k}\times Y^{k}\to\K$
 defined for \mbox{$\mu^{\otimes 2k}$-a.a.}
  $(x^{\prime},y^{\prime})\in Y^{k}\times Y^{k}$
 by
  \begin{equation}{\label{parttr1}}
  \cK_{0}(x^{\prime},y^{\prime})
  =\int_{Y^{n-k}}
  \cK(x^{\prime},x^{\prime\prime},y^{\prime},x^{\prime\prime})\,
  \dx\mu^{\otimes(n-k)}(x^{\prime\prime})
  \end{equation}
 is
 \mbox{$\mu^{\otimes 2k}$-square} integrable and
 the integral operator $K_0$ with the kernel $\cK_0$
 belongs to
 $\cT\left(H_Y^{\otimes k}\right).$
 For every
  $\chi,\varphi\in H_Y^{\otimes k}$
 and every orthonormal basis $\set{\psi_i}_{i\in\N}$ of
  $H_Y^{\otimes (n-k)}$
 one has
 \begin{equation*}
 \inn{\chi}{K_0\varphi}_{H_Y^{\otimes k}}
 =\sum_{i=1}^{\infty} \inn{\chi\otimes\psi_i}{
 K(\varphi\otimes\psi_i)}_{H_Y^{\otimes n}}.
 \end{equation*}
 The function
  $\cL_{0}\colon Y^{k}\to\K$
 defined for \mbox{$\mu^{\otimes k}$-a.a.} $x^{\prime}\in Y^{k}$ by
  $\cL_0(x^{\prime}) =\cK_0(x^{\prime},x^{\prime})$
 is \mbox{$\mu^{\otimes k}$-inte}grable. Moreover,
  \begin{equation*}
  \int_{Y^{k}}\cK_{0}(x^{\prime},x^{\prime})
  \,\dx\mu^{\otimes k}(x^{\prime})
  \equiv\int_{Y^{k}}\cL_{0}(x^{\prime})
  \,\dx\mu^{\otimes k}(x^{\prime}) =\Tr\,K_{0} =\Tr\,K.
  \end{equation*}
 \end{lemma}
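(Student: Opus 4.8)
The plan is to prove the lemma by writing $K$ as a product of two Hilbert--Schmidt operators, decomposing each factor along an orthonormal basis of $H_Y^{\otimes(n-k)}$, and identifying the resulting trace-class operator on $H_Y^{\otimes k}$ with the integral operator $K_0$ of~\eqref{parttr1} through a direct computation of its kernel. Theorem~\ref{hscalk}, Corollary~\ref{hscalkcor}, and Lemma~\ref{tr} will be used with $Y$ replaced by $Y^{k}$ or $Y^{n}$, which is legitimate since each $(Y^{m},\mu^{\otimes m})$ is again separable and $\sigma$-finite and $H_Y^{\otimes m}=L^{2}(Y^{m},\mu^{\otimes m})$. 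As $\cK$ is a product integral kernel of $K$, Definition~\ref{ilkern} provides $K_{1},K_{2}\in\cHS(H_Y^{\otimes n})$ with $K=K_{1}K_{2}$ and representatives $\cK_{1},\cK_{2}\in L^{2}(Y^{n}\times Y^{n},\mu^{\otimes 2n})$ of their integral kernels (Theorem~\ref{hscalk}) such that $\cK(x,y)=\int_{Y^{n}}\cK_{1}(x,z)\cK_{2}(z,y)\,\dx\mu^{\otimes n}(z)$ for $\mu^{\otimes 2n}$-a.a.\ $(x,y)$. Fix an orthonormal basis $\set{\psi_{i}}_{i\in\N}$ of $H_Y^{\otimes(n-k)}$; for $r\in\set{1,2}$ and $i,j\in\N$ let $(K_{r})_{ij}\in\cHS(H_Y^{\otimes k})$ be determined by $\inn{\chi}{(K_{r})_{ij}\varphi}=\inn{\chi\otimes\psi_{i}}{K_{r}(\varphi\otimes\psi_{j})}$. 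Then $K_{r}(\varphi\otimes\psi_{j})=\sum_{i}\bigl((K_{r})_{ij}\varphi\bigr)\otimes\psi_{i}$, one has $\sum_{i,j}\norm{(K_{r})_{ij}}_{\cHS}^{2}=\norm{K_{r}}_{\cHS}^{2}$, and, writing the variables of $Y^{n}$ as $(x',x'')\in Y^{k}\times Y^{n-k}$, the integral kernel of $(K_{1})_{ij}$ equals $(\cK_{1})_{ij}(x',y')=\int_{Y^{n-k}}\int_{Y^{n-k}}\overline{\psi_{i}(u)}\,\cK_{1}\bigl((x',u),(y',v)\bigr)\,\psi_{j}(v)\,\dx\mu^{\otimes(n-k)}(v)\,\dx\mu^{\otimes(n-k)}(u)$, with the analogous formula for $K_{2}$.

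Next I would set $L\eqdf\sum_{i,l}(K_{1})_{il}(K_{2})_{li}$. Each summand, being a product of two Hilbert--Schmidt operators, is trace class with $\Tr\,\abs{(K_{1})_{il}(K_{2})_{li}}\leq\norm{(K_{1})_{il}}_{\cHS}\norm{(K_{2})_{li}}_{\cHS}$, so by Cauchy--Schwarz $\sum_{i,l}\Tr\,\abs{(K_{1})_{il}(K_{2})_{li}}\leq\norm{K_{1}}_{\cHS}\norm{K_{2}}_{\cHS}<\infty$; hence the series converges in trace norm, $L\in\cT(H_Y^{\otimes k})$, and $\Tr\,L=\sum_{i,l}\Tr\,(K_{1})_{il}(K_{2})_{li}$. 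Expanding $K_{1}K_{2}(\varphi\otimes\psi_{i})$ through the relation of the first paragraph gives $\inn{\chi\otimes\psi_{i}}{K(\varphi\otimes\psi_{i})}=\sum_{l}\inn{\chi}{(K_{1})_{il}(K_{2})_{li}\varphi}$, so summing over $i$ yields $\inn{\chi}{L\varphi}=\sum_{i}\inn{\chi\otimes\psi_{i}}{K(\varphi\otimes\psi_{i})}$ for all $\chi,\varphi\in H_Y^{\otimes k}$; computing $\Tr\,K=\Tr\,K_{1}K_{2}$ in an orthonormal basis of $H_Y^{\otimes n}$ of the form $\set{\varphi_{a}\otimes\psi_{i}}$ with $\set{\varphi_{a}}_{a\in\N}$ an orthonormal basis of $H_Y^{\otimes k}$ gives, by the same expansion, $\Tr\,K=\sum_{i,l}\Tr\,(K_{1})_{il}(K_{2})_{li}=\Tr\,L$. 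By Lemma~\ref{tr} over $Y^{k}$, $(K_{1})_{il}(K_{2})_{li}$ has the product integral kernel $\cM_{il}(x',y')=\int_{Y^{k}}(\cK_{1})_{il}(x',z')(\cK_{2})_{li}(z',y')\,\dx\mu^{\otimes k}(z')$, and $\sum_{i,l}\norm{\cM_{il}}_{L^{2}}\leq\norm{K_{1}}_{\cHS}\norm{K_{2}}_{\cHS}<\infty$, so $\sum_{i,l}\cM_{il}$ converges in $L^{2}(Y^{k}\times Y^{k},\mu^{\otimes 2k})$ to an integral kernel of $L$. The crux is to show that this kernel coincides a.e.\ with $\cK_{0}$: substituting the formulas for $(\cK_{1})_{il}$ and $(\cK_{2})_{li}$ into $\cM_{il}$, interchanging the sums over $i$ and $l$ with the finitely many nested integrations (legitimate by the absolute-integrability bound just established together with Fubini--Tonelli applied to $\cK_{1},\cK_{2}\in L^{2}$), and collapsing the two basis sums by Parseval's identity, one obtains $\sum_{i,l}\cM_{il}(x',y')=\int_{Y^{n-k}}\Bigl(\int_{Y^{n}}\cK_{1}\bigl((x',t),w\bigr)\,\cK_{2}\bigl(w,(y',t)\bigr)\,\dx\mu^{\otimes n}(w)\Bigr)\,\dx\mu^{\otimes(n-k)}(t)=\int_{Y^{n-k}}\cK\bigl((x',t),(y',t)\bigr)\,\dx\mu^{\otimes(n-k)}(t)=\cK_{0}(x',y')$ for $\mu^{\otimes 2k}$-a.a.\ $(x',y')$. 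Consequently $\cK_{0}\in L^{2}(Y^{k}\times Y^{k},\mu^{\otimes 2k})$, the integral operator $K_{0}$ with kernel $\cK_{0}$ equals $L$, so $K_{0}\in\cT(H_Y^{\otimes k})$, and — the right-hand side being visibly independent of the basis $\set{\psi_{i}}$ — the stated identity $\inn{\chi}{K_{0}\varphi}=\sum_{i}\inn{\chi\otimes\psi_{i}}{K(\varphi\otimes\psi_{i})}$ holds for every such basis, with $\Tr\,K_{0}=\Tr\,K$.

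It remains to handle $\cL_{0}$ and the trace integral. Since $\cK$ is a product integral kernel of $K$, Eq.~\eqref{tr2} over $Y^{n}$ shows that $x\mapsto\cK(x,x)$ is $\mu^{\otimes n}$-integrable with $\int_{Y^{n}}\cK(x,x)\,\dx\mu^{\otimes n}(x)=\Tr\,K$; splitting $x=(x',x'')$ and using Fubini, for $\mu^{\otimes k}$-a.a.\ $x'$ the function $x''\mapsto\cK\bigl((x',x''),(x',x'')\bigr)$ is $\mu^{\otimes(n-k)}$-integrable, so $\cL_{0}(x')=\cK_{0}(x',x')=\int_{Y^{n-k}}\cK\bigl((x',x''),(x',x'')\bigr)\,\dx\mu^{\otimes(n-k)}(x'')$ is defined for a.a.\ $x'$, is $\mu^{\otimes k}$-integrable, and $\int_{Y^{k}}\cL_{0}(x')\,\dx\mu^{\otimes k}(x')=\int_{Y^{n}}\cK(x,x)\,\dx\mu^{\otimes n}(x)=\Tr\,K=\Tr\,K_{0}$, completing the proof. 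The main obstacle is the kernel identification in the second paragraph: tracking the four groups of integration variables, rigorously justifying the interchange of the double sum with the iterated integrals, and carrying out the Parseval collapse that turns $\sum_{i}\overline{\psi_{i}(u)}\psi_{i}(\cdot)$ and $\sum_{l}\psi_{l}(\cdot)\overline{\psi_{l}(u)}$ into the restriction of $\cK$ to the partial diagonal $y''=x''$; the remaining steps are routine Hilbert--Schmidt algebra and applications of Lemma~\ref{tr}.
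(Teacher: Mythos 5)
The paper states this lemma without proof, remarking only that it ``follows from Lemma~\ref{tr}'', so there is no written argument to compare yours against; your proposal is a correct and complete realization of that one-line indication. The block decomposition of the Hilbert--Schmidt factors, the trace-norm convergence of $L=\sum_{i,l}(K_1)_{il}(K_2)_{li}$, the identity $\inn{\chi}{L\varphi}=\sum_i\inn{\chi\otimes\psi_i}{K(\varphi\otimes\psi_i)}$, and the Parseval identification of the kernel of $L$ with $\cK_0$ are all sound. Two small points deserve to be made explicit if this is written out. First, to equate the $L^2$-limit of $\sum_{i,l}\cM_{il}$ with the pointwise value $\cK_0(x',y')$ you need a.e.\ \emph{pointwise} absolute convergence of the double series, which does not follow from the bound $\sum_{i,l}\norm{\cM_{il}}_{L^2}<\infty$ alone; it follows instead from the a.e.\ finite Cauchy--Schwarz estimate $\sum_{i,l}\abs{\cM_{il}(x',y')}\le\norm{\cK_1((x',\cdot),\cdot)}_{L^2}\,\norm{\cK_2(\cdot,(y',\cdot))}_{L^2}$. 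Second, the evaluations of $\cK$ on the partial diagonal $y''=x''$ and of $\cK_0$ on the full diagonal take place on $\mu^{\otimes 2n}$- resp.\ $\mu^{\otimes 2k}$-null sets; they are legitimate only because Definition~\ref{ilkern} and Eq.~\eqref{parttr1} define these kernels pointwise by integral formulae rather than as $L^2$-classes. Neither point is a gap --- they are precisely where the notion of \emph{product} integral kernel earns its keep --- but both are worth flagging.
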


 \begin{corollary}{\label{red}}
 Under the assumptions of Lemma~\ref{parttr}, if
  $C\in\cB\left(H_Y^{\otimes k}\right)$
 then
  $\Tr\,CK_{0} =\Tr\,(C\otimes\Id^{\otimes(n-k)})K.$
 \end{corollary}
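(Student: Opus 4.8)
The plan is to evaluate both traces as absolutely convergent series over a product orthonormal basis and to match them term by term, the bridge between the two being the identity for $\inn{\chi}{K_{0}\varphi}$ already supplied by Lemma~\ref{parttr}. Concretely, I would fix an orthonormal basis $\set{\chi_{j}}_{j\in\N}$ of $H_{Y}^{\otimes k}$ and an orthonormal basis $\set{\psi_{i}}_{i\in\N}$ of $H_{Y}^{\otimes(n-k)}$, so that $\set{\chi_{j}\otimes\psi_{i}}_{(j,i)\in\N\times\N}$ is an orthonormal basis of $H_{Y}^{\otimes n}$. Since $K\in\cT(H_{Y}^{\otimes n})$ and $C\otimes\Id^{\otimes(n-k)}\in\cB(H_{Y}^{\otimes n})$, the product $(C\otimes\Id^{\otimes(n-k)})K$ is trace class, and $CK_{0}\in\cT(H_{Y}^{\otimes k})$ as well; hence each of the two traces equals the sum of the diagonal entries in the relevant orthonormal basis, and for a trace class operator $T$ the series $\sum_{m}\inn{e_{m}}{Te_{m}}$ is in fact absolutely convergent (writing $T=U\abs{T}$ for the polar decomposition and applying the Cauchy--Schwarz inequality gives $\sum_{m}\abs{\inn{e_{m}}{Te_{m}}}\leq\Tr\abs{T}$).

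Next I would reduce each diagonal entry of $(C\otimes\Id^{\otimes(n-k)})K$. Using $(C\otimes\Id^{\otimes(n-k)})^{\ast}=C^{\ast}\otimes\Id^{\otimes(n-k)}$,
\begin{equation*}
\inn{\chi_{j}\otimes\psi_{i}}{(C\otimes\Id^{\otimes(n-k)})K(\chi_{j}\otimes\psi_{i})}
=\inn{(C^{\ast}\chi_{j})\otimes\psi_{i}}{K(\chi_{j}\otimes\psi_{i})},
\end{equation*}
and summing over $i$ while applying Lemma~\ref{parttr} with $\chi=C^{\ast}\chi_{j}$, $\varphi=\chi_{j}$, and the orthonormal basis $\set{\psi_{i}}$ of $H_{Y}^{\otimes(n-k)}$ yields
\begin{equation*}
\sum_{i=1}^{\infty}\inn{\chi_{j}\otimes\psi_{i}}{(C\otimes\Id^{\otimes(n-k)})K(\chi_{j}\otimes\psi_{i})}
=\inn{C^{\ast}\chi_{j}}{K_{0}\chi_{j}}=\inn{\chi_{j}}{CK_{0}\chi_{j}}.
\end{equation*}

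Finally, because the double series $\sum_{(j,i)}\inn{\chi_{j}\otimes\psi_{i}}{(C\otimes\Id^{\otimes(n-k)})K(\chi_{j}\otimes\psi_{i})}$ is absolutely convergent with sum $\Tr\,(C\otimes\Id^{\otimes(n-k)})K$, it may be evaluated as the iterated sum in which one first sums over $i$ and then over $j$; by the two displays above this iterated sum equals $\sum_{j}\inn{\chi_{j}}{CK_{0}\chi_{j}}=\Tr\,CK_{0}$, which is the assertion. The one genuine point to watch is this Fubini-type interchange of the order of summation --- everything else being a direct substitution --- and it is precisely for that step that the absolute convergence established in the first paragraph is needed; I therefore expect it to be the main, if mild, obstacle.
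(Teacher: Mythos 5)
Your proof is correct and follows exactly the route the paper intends: the corollary is stated without an explicit proof, but the displayed identity $\inn{\chi}{K_0\varphi}=\sum_i\inn{\chi\otimes\psi_i}{K(\varphi\otimes\psi_i)}$ is included in Lemma~\ref{parttr} precisely so that one sums it over a product orthonormal basis with $\chi=C^{\ast}\chi_j$, $\varphi=\chi_j$, as you do. Your justification of the interchange of summation via the absolute convergence bound $\sum_m\abs{\inn{e_m}{Te_m}}\leq\Tr\abs{T}$ for trace class $T$ is sound and fills in the one step the paper leaves tacit.
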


\section*{Acknowledgements}

This article presents the results of a part of the autor's MS
thesis~\cite{RadzkiUMK99} written in Institute of Physics, Nicolaus
Copernicus University, Toru\'{n}, under the supervision of
Professor Jan Ma\'{c}kowiak. The author wishes to express his
gratitude to Professor Ma\'{c}kowiak for helpful suggestions
and remarks. Professor Ma\'{c}kowiak prepared also, on his own
initiative, the English translation of appropriate parts of the
author's thesis, which was useful for the author in editing of the
present paper.


\begin{thebibliography}{99}


\bibitem{ColemanRMP63}
A. J. Coleman, \emph{Structure of fermion density matrices}, Rev.
Mod. Phys. \textbf{35} (3) (1963), 668-687.


\bibitem{GarrodJMP64}
C. Garrod and J. K. Percus, \emph{Reduction of the $N$-particle
variational problem}, J. Math. Phys. \textbf{5} (12) (1964),
1756-1776.


\bibitem{KossakowskiRMP86}
A. Kossakowski, J. Ma{\'{c}}kowiak, \emph{Minimization of the free
energy of large continuous quantum systems over product states},
Rep. Math. Phys. \textbf{24} (3) (1986), 365-376.


\bibitem{KummerJMP67}
H. Kummer, \emph{{$n$}-representability problem for reduced density
matrices}, J. Math. Phys. \textbf{8} (10) (1967), 2063-2081.


\bibitem{KummerJMP70}
H. Kummer, \emph{Mathematical description of a~system consisting of
identical quantum-me\-chan\-ical particles}, J. Math. Phys.
\textbf{11} (2) (1970), 449--474.


\bibitem{MackowiakPR99}
J. Ma{\'{c}}kowiak, \emph{Infinite-volume limit of continuous
$n$-particle quantum systems}, Phys. Rep. \textbf{308} (4) (1999),
235--331.


\bibitem{MackowiakPC00}
J. Ma{\'{c}}kowiak, P. Tarasewicz, \emph{An extension of the
{B}ardeen-{C}ooper-{S}chrieffer model of superconductivity},
Physica C \textbf{331} (1) (2000), 25-37.


\bibitem{MackowiakPA97}
J. Ma{\'{c}}kowiak, M. Wi\'{s}niewski, \emph{A~perturbation
expantion for the free energy of the
\mbox{$\mathrm{s}{-}\mathrm{d}$} exchange {H}amiltonian},
Physica A \textbf{242} (3-4) (1997), 482-500.


\bibitem{vonNeumannS32}
J. von Neumann, \emph{Mathematische Grundlagen der Quantenmechanik},
Springer-Verlag, Berlin, 1932.


\bibitem{RadzkiUMK99}
W. Radzki, \emph{Kummer contractions of product density matrices of
systems of $n$ fermions and $n$ bosons} (Polish), MS thesis,
Institute of Physics, Nicolaus Copernicus University, Toru\'{n},
1999.


\bibitem{RuelleB69}
D. Ruelle, \emph{Statistical Mechanics}, Benjamin, New York, 1969.


\bibitem{SchattenSV60}
R. Schatten, \emph{Norm ideals of completely continuous operators},
Springer-Verlag, Berlin, G{\"{o}}ttin\-gen, Heidelberg, 1960.


\bibitem{SchneiderEL04}
C. W. Schneider, G. Hammerl, G. Logvenov, T. Kopp, J. R. Kirtley,
P. J. Hirschfeld, J. Mannhart, \emph{\mbox{Half{-}$h/2e$} critical
current -- Oscillations of SQUIDs}, Europhys. Lett.
\textbf{68} (1) (2004), 86-92.


\bibitem{TarasewiczPC00}
P. Tarasewicz, J. Ma{\'{c}}kowiak, \emph{Thermodynamic functions of
Fermi gas with quadruple {BCS}-type binding potential}, Physica C
\textbf{329} (2) (2000), 130-148.


\bibitem{VolovikWS92}
G. E. Volovik, \emph{Exotic Properties of Superfluid
$^{3}\mathrm{He}$}, World Scientific, Signapore, 1992.


\end{thebibliography}
\end{document}